\documentclass[12pt,a4paper]{article} 
\usepackage{amsmath,amssymb,amsfonts}
\usepackage[colorlinks]{hyperref}

\usepackage{relsize}

\usepackage{diagbox}

\usepackage{thmtools,thm-restate}

\usepackage{graphics} 

\usepackage{graphicx}

\usepackage{ cmll } 

\usepackage[noxy]{virginialake} 

\usepackage{pifont}%
\newcommand{\cmark}{\ding{51}}%

\usepackage[dvipsnames]{xcolor} 

\usepackage{bussproofs}

\usepackage[bibliography=common]{apxproof}

\usepackage{tikz}
\usepackage{nicematrix}

\theoremstyle{plain} 
\newtheorem{theorem}{Theorem}

\newtheorem{lemma}[theorem]{Lemma}
\newtheorem{corollary}[theorem]{Corollary}

\theoremstyle{definition}
\newtheorem{definition}[theorem]{Definition}
\newtheorem{example}[theorem]{Example}
\newtheorem{remark}[theorem]{Remark}

\newcommand{\st}{*}

\newcommand{\calL}{\mathcal{L}}

\newcommand{\calA}{\mathcal{A}}

\newcommand{\CFL}{\mathsf{CFL_e}}
\newcommand{\CFLw}{\mathsf{CFL_{ew}}}

\newcommand{\BCFLw}{\mathbf{CFL_{ew}}}

\newcommand{\BCFLcw}{\mathbf{CFL_{ecw}}}

\newcommand{\BCFLc}{\mathbf{CFL_{ec}}}
\newcommand{\CLL}{\mathsf{LL}}
\newcommand{\FL}{\mathsf{FL_e}}
\newcommand{\F}{\mathbf{F}}

\newcommand{\rng}{\text{rng}}
\newcommand{\BCFL}{\mathbf{CFL_e}}
\newcommand{\BFL}{\mathbf{FL_e}}
\newcommand{\BCLL}{\mathbf{LL}}
\newcommand{\IPC}{\mathsf{IPC}}
\newcommand{\CPC}{\mathsf{CPC}}

\newcommand{\BIMALL}{\mathbf{IMALL}}
\newcommand{\MALL}{\mathsf{MALL}}
\newcommand{\BMALL}{\mathbf{MALL}}

\newcommand{\BILL}{\mathbf{ILL}}

\newcommand{\BAIMALL}{\mathbf{IMALL_{w}}}

\newcommand{\BELL}{\mathbf{ELL}}
\newcommand{\ALL}{\mathsf{LLW}}
\newcommand{\BALL}{\mathbf{LLW}}

\newcommand{\BRLL}{\mathbf{LLC}}
\newcommand{\AMALL}{\mathsf{MALL_{w}}}
\newcommand{\BAMALL}{\mathbf{MALL_{w}}}

\newcommand{\BRMALL}{\mathbf{MALL_{c}}}

\newcommand{\BRIMALL}{\mathbf{IMALL_c}}

\newcommand{\BAILL}{\mathbf{ILLW}}

\newcommand{\BRILL}{\mathbf{ILLC}}

\newcommand{\BFLw}{\mathbf{FL_{ew}}}

\newcommand{\BFLc}{\mathbf{FL_{ec}}}
\newcommand{\LJ}{\mathbf{LJ}}
\newcommand{\LK}{\mathbf{LK}}
\newcommand{\LKu}{\mathbf{LK_u}}
\newcommand{\LKb}{\mathbf{LK_b}}
\newcommand{\LKajab}{\mathbf{LK_!}}
\newcommand{\LJu}{\mathbf{LJ_u}}
\newcommand{\LJb}{\mathbf{LJ_b}}
\newcommand{\LJajab}{\mathbf{LJ_!}}
\newcommand{\LKminus}{\mathbf{LK^-}}
\newcommand{\LKnn}{\mathbf{LK_{nn}}}
\newcommand{\G}{\mathbf{G}}

\newcommand{\bigast}{\mathop{\scalebox{1.5}{\raisebox{-0.2ex}{$\ast$}}}}%
\newcommand{\bigplus}{\mathop{\scalebox{1.5}{\raisebox{-0.2ex}{+}}}}%

\renewcommand{\phi}{\varphi}

\newcommand{\blue}[1]{\textcolor{blue}{#1}}
\newcommand{\green}[1]{\textcolor{ForestGreen}{#1}}
\newcommand{\rah}[1]{\textcolor{red}{#1}}

\begin{document}

\title{Proof Complexity of Linear Logics}
\author{Amirhossein Akbar Tabatabai\footnote{\texttt{amir.akbar@gmail.com}}, Raheleh Jalali\footnote{\texttt{rjk53@bath.ac.uk}}\\
\small{Bernoulli Institute, University of Groningen$^*$}\\
\small{Department of Computer Science, University of Bath$^\dagger$}
}
\date{}
\maketitle
\begin{abstract}
Proving proof-size lower bounds for $\LK$, the sequent calculus for classical propositional logic, remains one of the major open problems in proof complexity. We shed new light on this challenge by isolating the power of structural rules and showing that their combination is dramatically stronger than any individual structural rule alone, even in the presence of the controlled structural rules provided by linear exponentials.

It is easy to see that $\LK$ without the weakening rule is significantly weaker than $\LK$ with respect to proof complexity. 
It therefore remains to study the impact of eliminating contraction and cut.
Working over the Full Lambek calculus with exchange, $\BFL$, as a base system, 
we begin with the role of contraction.
We construct families of $\BFL$-provable formulas that require exponential-size proofs in affine linear logic $\BALL$, yet admit polynomial-size proofs once contraction is restored. This yields exponential proof-size lower bounds for $\BFL$-provable formulas in $\BALL$, and consequently in $\BMALL$, $\BAMALL$, and full classical linear logic $\BCLL$.
We then investigate the role of cut.
We exhibit sequents with polynomial-size $\BFL$-proofs that nevertheless require exponential-size proofs in cut-free $\LK$. 
This shows that the cut rule alone provides an exponential speed-up over the combination of weakening and contraction.
As a consequence, we obtain exponential separations between several linear calculi and their cut-free counterparts.\\

\noindent \textbf{Keywords.} Proof complexity, linear logics, substructural logics, exponential lower bound, (monotone) feasible interpolation, Chu's construction.
\end{abstract}
\maketitle
\newpage
\tableofcontents
\newpage

\section{Introduction}
Proof complexity is an interdisciplinary field that uses tools from logic, complexity theory, and combinatorics. The main goal of proof complexity is to understand the complexity of theorem-proving procedures. In other words, it aims to determine how efficient proof systems are when they prove their theorems. This naturally leads to the central problem of finding lower bounds on proof size in proof systems. The difficulty of this problem is what makes it interesting: its deep connection with computational complexity.

The field was motivated by the seminal work of Cook and Reckhow \cite{Cook}, who established a fundamental connection between proof size and the separation of complexity classes. They showed that NP $\neq$ coNP if{f} there are superpolynomial lower bounds for \emph{every} proof system for \emph{classical} propositional logic, $\CPC$. This conjectured lower bound for arbitrary proof systems has never been established. In fact, despite decades of research, even the single case of the Frege system (the standard proof system for $\CPC$, polynomially equivalent to sequent calculus $\LK$ and natural deduction) still resists all known lower-bound techniques.

Facing such a strict barrier, progress has therefore focused on weaker systems. The first natural approach was restricting the form of formulas and deduction rules (sometimes in algebraic and semi-algebraic disguise). This led to a plethora of exponential lower bound results for Resolution \cite{Haken}, the Nullstellensatz system \cite{Null}, Cutting Planes \cite{Cutting1,Cutting2}, Polynomial Calculus \cite{PCalculus1,PCalculus2}, and bounded-depth Frege systems \cite{BDepth1,BDepth2,BDepth4,BDepth3}.

Another successful way of avoiding the $\LK$ barrier is to change classical logic itself, either by weakening it (e.g., forbidding full involution $\neg \neg A \to A$) or by extending the language (e.g., with modalities). As these logics are usually $\mathrm{PSPACE}$-hard, their study is closely connected to another notoriously hard problem in complexity theory, namely $\mathrm{NP}$ vs.\ $\mathrm{PSPACE}$. Since this problem is, in a sense, easier than $\mathrm{NP}$ vs.\ $\mathrm{coNP}$, one may expect stronger results on this front. Indeed, there are many known exponential lower bounds for these systems, including Frege systems for intuitionistic and some modal logics \cite{Hrubes1,Hrubes2,Hrubes}, superintuitionistic and modal logics of infinite branching \cite{jerabek,Jerabek2}, intuitionistic substructural logics \cite{Raheleh}, and propositional default logic \cite{Beyersdorff}. Owing to their expressive power and numerous applications in computer science, the proof complexity of these non-classical systems has received extensive attention in recent years.

\subsection{Our Contribution} 

In this work, we follow the same strategy of weakening classical logic; however, our approach differs conceptually. To illustrate the role of full involution, the backbone of \emph{classical reasoning}, we keep the axiom $(\neg \neg A \to A)$ intact and also allow unrestricted formulas in our proofs. Instead, we weaken the logic by pruning its \emph{structural rules}. By removing contraction and the cut rule one at a time\footnote{In Subsection \ref{subsubsection cont}, we will explain that the case where the weakening rule is eliminated is easy to address.}, sometimes even while allowing controlled structural rules via linear exponentials, one can isolate the precise contribution of each rule to the strength of classical proof systems. 

To state our results concisely, we introduce the following notion. Let $G$, $H$, and $I$ be three sequent calculi (possibly for different logics). We say \emph{$H$ simulates $G$ over $I$}, denoted $G \leq_I H$, if for every sequent $S$ provable in $I$,
\[
\text{if} \qquad G \vdash^\pi S \qquad \text{then} \qquad \exists \pi' \;\text{with}\; |\pi'| \leq |\pi|^{O(1)} \;\text{such that}\; H \vdash^{\pi'} S,
\]
where by $G \vdash^\pi S$ we mean that $\pi$ is a proof of $S$ in $G$, and similarly for $H \vdash^{\pi'} S$.

Now, recall that $\BCFL$, $\BCFLw$, and $\BCFLc$ denote the classical full Lambek calculus with exchange, with exchange and weakening, and with exchange and contraction, respectively; $\BMALL$, $\BAMALL$, and $\BRMALL$ denote the multiplicative–additive fragment of linear logic, with weakening, and with contraction, respectively; $\BCLL$, $\BALL$, and $\BRLL$ denote full classical linear logic, with weakening, and with contraction, respectively; and $\BFL$, $\BIMALL$, and $\BILL$ are the intuitionistic counterparts of $\BCFL$, $\BMALL$, and $\BCLL$, respectively. For a calculus $G$ and any $X \subseteq \{c, w\}$, by $GX$ we simply mean taking the system obtained by adding the rules in $X$ to $G$. And by $G^-$ we mean the calculus $G$ without the cut rule.
Our main results are summarized in Table~\ref{table: summary}.

\subsubsection{Contraction-free Calculi}
We prove an \emph{exponential separation} between $\BALL$ and $\LK$ for $\mathbf{FL_e}$-provable formulas. More precisely, we identify a sequence $\{A_n\}_{n \in \mathbb{N}}$ of $\BFL$-provable formulas such that every $\BALL$-proof of $A_n$ has size\footnote{The size of a formula or a proof is the number of symbols it contains.} exponential in $|A_n|$, whereas $A_n$ has polynomial-size proofs in $\LK$. 
This implies an exponential lower bound on the proof size
of $\mathbf{FL_e}$-provable formulas in $\BALL$ and hence for each of the following contraction-free calculi 
\[
G \in \{\BCFL, \BCFLw, \BMALL, \BAMALL, \BCLL\}.
\]
Note that the separating formulas $A_n$ are $!$-free, since they are written in the language of $\mathbf{FL_e}$. Without this $!$-freeness, separation results are trivial. The reason is the high decision complexity of $\BALL$. In fact, $\BALL$ is $\mathrm{TOWER}$-complete, where $\mathrm{TOWER}$ denotes the class of problems decidable in time bounded by a tower of exponentials \cite{lazic2015nonelementary}. By a simple argument, proof size of $\BALL$-provable formulas cannot be bounded by any fixed-height tower: otherwise, one could enumerate all proofs and decide provability within that bound, contradicting $\mathrm{TOWER}$-hardness. In contrast, every formula has an exponential-size proof in $\LK$, establishing the separation. For $\BCLL$, the situation is even more extreme, since the logic is undecidable \cite{Lincoln}.

For $!$-free formulas, however, $\BALL$ is conservative over $\BAMALL$, so its decision problem remains $\mathrm{PSPACE}$-complete \cite{Horcik}, making the exponential separation interesting. Moreover, one might hope that employing exponentials in proofs could provide short proofs for hard $!$-free formulas. However, our results show that even the full machinery of linear exponentials does not reduce the hardness of these formulas.

\subsubsection{Weakening-free Calculi}\label{subsubsection cont}

Similar to the situation for $\BALL$, since provability in $\mathbf{CFL_{ec}}$ is Ackermann-complete~\cite{urquhart}, separating $\mathbf{CFL_{ec}}$ from $\LK$ would be a trivial task. Even better, since provability in $\BRLL$ for $!$-free formulas is also Ackermann-complete \cite{lazic2015nonelementary}, and $\BRLL$ is conservative over $\mathbf{CFL_{ec}}$, one can find $\mathbf{CFL_{ec}}$-provable formulas having relatively shorter $\LK$-proofs than $\BRLL$-proofs.
In this sense, the proof-complexity role of the weakening rule is already clear, and there is little need to pursue the matter further.

\subsubsection{Cut-free Calculi}
We now turn to the role of the cut rule and reveal its remarkable impact.
It is often believed that the cut rule dramatically increases the power of sequent calculi. We confirm this intuition by constructing a sequence $\{S_n\}_{n \in \mathbb{N}}$ of sequents such that $S_n$ has a polynomial-size proof in the weak system $\BFL$, but requiring exponentially long proofs in $\LK^{-}$, i.e., $\LK$ without cut. 
This provides a clear illustration of the power of cut: even when contraction and weakening are fully available, removing cut alone leads to an exponential blow-up. Consequently, we establish an exponential speed-up between any of the following calculi and their cut-free counterparts:
\[
\BFL X, \BCFL X, \BIMALL X, \BMALL X, \BILL X, \BCLL X, \quad X \subseteq \{c, w\}.
\]

\begin{table*}
\begin{tabular}{|c|c|c|c|c||c|c|c|c|}
\hline

& $\emptyset$ & $\{c\}$ & $\{w\}$ & $\{c,w\}$ &$\{cut\}$ & $\{cut, c\}$ & $\{cut, w\}$ & $\{cut, c, w\}$ \\ \hline
$\{cut\}$ &   \green{\cmark}  & \green{\cmark} & \green{\cmark} & \green{\cmark} & --- & --- & --- & --- \\ \hline
$\{cut, c\}$ &   \green{\cmark}  & \green{\cmark} & \green{\cmark} & \green{\cmark} & ? & --- & ? & --- \\ \hline
$\{cut, w\}$ &   \green{\cmark}  & \green{\cmark} & \green{\cmark} & \green{\cmark} & ? & ? & --- & --- \\ \hline
$\{cut, c, w\}$ &   \green{\cmark}  & \green{\cmark} & \green{\cmark} & \green{\cmark} & \rah{\cmark$_{\mathsmaller{1}}$} & \blue{\cmark$_{\mathsmaller{2}}$} & \rah{\cmark$_{\mathsmaller{1}}$} & ---  \\ \hline
\end{tabular}
\vspace{5pt}
\caption{Summary of results. Let $X, Y \subseteq \{c, w, cut\}$. Rows are indexed by $X$ and columns by $Y$. For any $\G \in \{\mathbf{CFL_e^-}, \mathbf{MALL^-}, \mathbf{CLL^-}\}$, the symbols \green{\cmark} and \rah{\cmark$_{\mathsmaller{1}}$} indicate that $\G X \nleq_{\BFL} \G Y$, as proved in Theorem~\ref{MainCutFree} and Theorem~\ref{Thm: main}. The symbol \blue{\cmark$_{\mathsmaller{2}}$} indicates that $\G X \nleq_{\BFLc} \G Y$, which is known due to the high complexity of the contractive systems (see subsection \ref{subsubsection cont}). The symbol --- (resp.\ ?) means that the problem is trivial since $X \subseteq Y$ (resp.\ still open). Comparisons among cut-free systems are omitted, as all such cases are currently open.}  \label{table: summary}
\end{table*}

\subsection{Methods}
To prove our main theorems as outlined above, we use two different interdisciplinary methods.

\subsubsection{$\BALL$ and Chu's Translation}
Chu's construction is a well-known categorical technique for generating self-dual categories and, consequently, for constructing categorical models of classical linear logic from models of intuitionistic linear logic \cite{Barr,Chu}, \cite[Section 3.3]{Hyland}. In special cases, analogous constructions have also been used in algebraic settings to obtain Girard quantales \cite{Quantale}, involutive residuated lattices \cite{tsinakis2006minimal}, and algebraic models of Nelson's constructive negation from Heyting algebras \cite{odintsov2008constructive}.

In this paper, we adapt Chu's construction to define a translation of linear formulas, referred to as \emph{Chu's translation}. Via this translation, proofs in classical linear systems can be feasibly transformed into proofs in their intuitionistic counterparts without distorting the disjunction connectives in the proofs, where the computational content is often concentrated. Consequently, when combined with circuit-extraction techniques that are naturally suited to the intuitionistic setting, Chu's translation enables the extraction of computational content from classical contraction-free proofs and thereby allows circuit lower bounds to be transferred into proof-size lower bounds for classical systems.

This is a crucial new piece of machinery provided by Chu's translation, as it isolates the contraction rule as a potential barrier for lower bounds for $\mathbf{LK}$. Consequently, it offers a new approach to the proof complexity of $\mathbf{LK}$ via its classical yet substructural fragments, with the aim of overcoming existing barriers by allowing more controlled uses of contraction. In this sense, our separation result should be viewed as one instance of a broader strategy that we propose.

It is also worth mentioning that, although Chu's construction is well known in category theory and has been used in algebraic settings under the name of the \emph{twist product}, to the best of our knowledge it has not previously been employed in the proof theory of substructural or linear logics. Hence, we believe that identifying the proof-theoretic power of Chu's construction and importing it into proof theory as a new form of translation with potential for further applications constitutes a conceptual contribution of our paper, which we consider as important as the concrete separation results it establishes.

\subsubsection{Cut-freeness and Feasible Interpolation}

Feasible interpolation, introduced in \cite{krajivcekFeasible,krajivcekfeasible2}, is one of the primary tools for proving exponential lower bounds for classical proof systems. Much of the literature on modal and superintuitionistic proof complexity also revolves around the feasibility of the disjunction property, a variant of feasible interpolation \cite{tabatabai2025,BussMints,PudlakBuss,Ferrari,jevrabek2006,mints2004,Pudlak}. See \cite{AmirProofComp} for a recent survey.

Our reference point is Kraj\'{i}\v{c}ek's exponential separation between $\LK$ and $\LK^-$ \cite{Krajicek}, which motivates our approach to isolating the power of cut. The key step in that separation is showing that $\LK^{-}$ has the \emph{monotone feasible interpolation property}: given an $\LK^{-}$-proof $\pi$ of a sequent $A(\bar p, \bar r) \Rightarrow B(\bar p, \bar s)$,
where $A(\bar p, \bar r)$ is monotone in $\bar p$, there exists a monotone circuit $C(\bar p)$ of size $|C(\bar p)| \leq |\pi|^{O(1)}$ computing a Craig interpolant for $A(\bar p, \bar r) \rightarrow B(\bar p, \bar s)$. The separation then follows by combining this with known exponential lower bounds in monotone circuit complexity to produce an implication that is easily provable in $\LK$, but has exponentially hard interpolating monotone circuits and hence large $\LK^-$-proofs.

The strategy for our third theorem follows the same pattern, but now we need an implication with a short $\BFL$-proof. We achieve this by simulating classical reasoning in $\LK$ within the substructural setting of $\BFL$ using additional initial sequents. The main challenge is to mimic contraction and weakening with sequents whose formulas contain at most one atomic formula, ensuring that the monotone feasible interpolation property of $\LKminus$ remains applicable.

\section{Preliminaries} \label{sec: preliminaries}

Let $\calL_u=\{0,1, \wedge, \vee, \st, \to\}$ be the  language for unbounded substructural logics, define  $\calL_b=\calL_u \cup \{\top, \bot\}$ for its bounded version, and let $\calL_!= \calL_b \cup \{!\}$ be the language for linear logics. To refer to any of these languages, we use the variable $\calL \in \{\calL_u, \calL_b, \calL_!\}$.
We define $\calL_!$-formulas by the following grammar:
\[
F:= p \mid 0 \mid 1 \mid \top \mid \bot \mid !F \mid F_1 \wedge F_2 \mid F_1 \vee F_2 \mid F_1 \to F_2 \mid F_1 \st F_2.
\]
$\calL_u$-formulas and $\calL_b$-formulas are defined similarly. For any language $\calL$, by the abuse of notation, we denote the set of $\calL$-formulas by $\calL$. Define $\neg A:= A \to 0$, $?A:= \neg ! \neg A$, $A+B:= \neg (\neg A \st \neg B)$, 
for any formulas $A$ and $B$. The connectives $!$
and $?$ are known as \emph{exponentials} and $*$ as \emph{multiplication} or \emph{fusion}.
As we consider various forms of linear logic, we recall how our substructural notation translates into Girard’s \cite{Girard} in Table \ref{Fig: notation}. For more on linear logic, see \cite{Avron}, \cite[Subsection 2.3.4.]{Ono}, \cite{Troelstra}. 
\begin{table}[t!]
\begin{center}
\begin{tabular}{ |c| c| }
\hline
linear logic & our notation \\
\hline
$\multimap$ &  $\to$ \\
\hline
$\&$ &  $\wedge$ \\
\hline
$\oplus$ &  $\vee$ \\
\hline
$\otimes$ &  $*$ \\
\hline
$\parr$ &  $+$ \\
\hline
$(-)^\bot$ &  $\neg$ \\
\hline
$\bot$ &  $0$ \\
\hline
$1$ &  $1$ \\
\hline
$\top$ &  $\top$ \\
\hline
$0$ &  $\bot$ \\
\hline
\end{tabular}
\vspace{5pt}
\caption{{\small Translation between linear logic and substructural notation.}}\label{Fig: notation}
\end{center}
\end{table}
If $\Gamma=\{A_1, \dots, A_n\}$, define $\bigast \Gamma: =A_1 * \dots * A_n$, $\bigplus \Gamma: =A_1 + \dots + A_n$ and set $\bigast \emptyset := 1$ and $\bigplus \emptyset := 0$.
The \emph{set of variables} of a formula $A$, denoted $V(A)$ is defined recursively; $V(c)= \emptyset$, where $c$ is a constant, $V(p)=\{p\}$, $V(!A)=V(A)$, $V(A \circ B)= V(A) \cup V(B)$ for $\circ \in \{\wedge, \vee, *, \to\}$. 

By a \emph{logic} $L$ over the language $\calL$, we mean a set of $\calL$-formulas closed under substitution, modus ponens ($A, A \rightarrow B \in L$ implies $B \in L$), adjunction ($A,B \in L$ implies $A \wedge B \in L$), and only in the case $\calL=\calL_!$ also under necessitation ($A \in L$ implies $!A \in L$). 

A \emph{proof system} for a logic $L$ is a polynomial-time function $P$ such that $\rng(P) = L$ (see \cite{Cook}). Examples of proof systems include Hilbert-style proof systems, natural deduction systems and sequent calculi.
We call $\pi$ a \emph{$P$-proof} of a formula $A$ when $P(\pi)=A$, and denote it by $P \vdash^\pi A$. 
If $P$ and $Q$ are two proof systems for the same logic, we say that $Q$ \emph{simulates} $P$ and write $P \leq Q$ if, for every formula $A \in \mathcal{L}$,
\[
\text{if} \qquad P \vdash^\pi A \qquad \text{then} \qquad \exists \pi' \;\text{with}\; |\pi'| \leq |\pi|^{O(1)} \;\text{such that}\; Q \vdash^{\pi'} A.
\]
The proof systems $P$ and $Q$ are \emph{equivalent} when they simulate each other.
If $P$ and $Q$ are sequent calculi, we define $P \leq Q$ similarly, where $A$ ranges over sequents.


\subsection{Sequent Calculi and Frege Systems}
In this subsection, we recall sequent calculi and Frege systems for substructural and linear logics.

\subsubsection{Sequent Calculi}
A \emph{multi-conclusion sequent} over $\mathcal L$ is $\Gamma\Rightarrow\Delta$, where $\Gamma,\Delta$ are finite multisets of $\mathcal L$-formulas; it is \emph{single-conclusion} if $|\Delta|\le1$. For $S=(\Gamma\Rightarrow\Delta)$, its \emph{interpretation} is $I(S)=\bigast\Gamma\to\bigplus\Delta$, its \emph{antecedent} is $\Gamma$, and its \emph{succedent} is $\Delta$. We write $\Gamma\Leftrightarrow\Delta$ for $\Gamma\Rightarrow\Delta$ and $\Delta\Rightarrow\Gamma$. A \emph{multi-conclusion rule} over $\mathcal L$ has the form $S_1,\dots,S_n / S$ with \emph{premises} $S_1,\dots,S_n$ and \emph{conclusion} $S$; an \emph{axiom} has no premises. A rule is \emph{single-conclusion} if all its sequents are.

A \emph{multi-conclusion sequent calculus} $G$ over $\mathcal L$ is a set of such rules, with language $\mathcal L_G$; it is single-conclusion if all its rules are. For a set of sequents $\calA$, by $G+\calA$ we mean the calculus obtained by adding every sequent in $\calA$ as an axiom to $G$. 
By a  \emph{dag-like proof} $\pi$ in $G$, a \emph{$G$-proof} in short, of a (single-conclusion) sequent $S$ from a set $\mathcal{S}$ of (single-conclusion) sequents, we mean a finite sequence $\pi := S_1, \cdots, S_m$ of sequents such that $S_m = S$ and each $S_i$ is obtained from earlier sequents by an instance of a rule of $G$. A $G$-proof is called \emph{tree-like} if each $S_j$ is used at most once in the proof as a hypothesis of a rule. When $\pi$ is a $G$-proof of $S$ from the set of sequents  $\mathcal{S}$, we write $\mathcal{S} \vdash^{\pi}_G S$. By $\mathcal{S} \vdash_G S$, we mean that there exists a $G$-proof $\pi$ such that $\mathcal{S} \vdash^{\pi}_G S$ and we write $G \vdash S$ for $\varnothing \vdash_G S$. Formulas $A$ and $B$ are \emph{provably equivalent} if $A \Leftrightarrow B$ is provable in $G$. The  \emph{size} of a formula $A$ or a proof $\pi$, denoted $|A|$ and $|\pi|$, is the number of symbols it contains. The \emph{number of lines} of a proof $\pi$, denoted $l(\pi)$, is the total number of formulas appearing in $\pi$. 

Let $L$ be a logic and $G$ a sequent calculus over $\mathcal L\in \{\mathcal L_u,\mathcal L_b,\mathcal L_!\}$. Then $G$ is a \emph{sequent calculus for} $L$ (and $L$ the \emph{logic of} $G$) when
$
G \vdash \Gamma \Rightarrow \Delta \ \text{iff} \ (\bigast \Gamma \to  \mathsmaller{\bigplus} \Delta) \in L$, for every sequent $\Gamma\Rightarrow\Delta$.
In particular, $G\vdash\phi\Rightarrow\psi$ iff $(\phi\to\psi)\in L$. 

We recall standard sequent calculi for many substructural and linear logics in Tables \ref{Fig: sequent calculus} and \ref{Fig: linear Seq cal}.
The single-conclusion versions of the calculi in Table \ref{Fig: linear Seq cal}
are $\BFL$, $\BFLw$, $\BFLc$, $\LJu$, $\BIMALL$, $\BAIMALL$, $\BRIMALL$, $\LJb$, $\BILL$, $\BAILL$, $\BRILL$, and $\LJajab$, respectively. Note that, as we work with multisets, the exchange rules are built in. For any sequent calculus $G$ over $\mathcal{L}$ defined in this paragraph, by $G^{-}$ we mean $G$ without the cut rule. We define $L_G$ as the set $\{A \in \mathcal{L} \mid G \vdash \, \Rightarrow A\}$. It is known that $L_G$ is the logic of $G$ in the sense defined above. We use boldface letters for calculi (e.g.\ $\mathbf{CFL_e}$) and sans-serif for their logics (e.g.\ $L_{\mathbf{CFL_e}}=\mathsf{CFL_e}$). We define the consequence relation $\vdash_{L_G}$ for $L_G$ by setting $\Gamma \vdash_{L_G} A$ as $\{\Rightarrow \gamma \mid \gamma \in \Gamma\} \vdash_G \, \Rightarrow A$, for any subset $\Gamma \cup \{A\}$ of $\mathcal{L}$-formulas. Finally, for any $G$ in Table~\ref{Fig: linear Seq cal} with the logic $L=L_G$, by $iL$ we mean the logic of the single-conclusion version of $G$, denoted by $iG$.
\begin{table}[t]
    \centering
    $ A \Rightarrow A \ \ \scriptsize{\text{(id)}}
        \qquad \qquad
        \Rightarrow 1
        \qquad \qquad
        0 \Rightarrow$ 
           \\[.5ex]
$\vlinf{}{{\mathsmaller{(1w)}}}{\Gamma, 1 \Rightarrow \Delta}{\Gamma \Rightarrow \Delta}
        \qquad
\vlinf{}{{\mathsmaller{(0w)}}}{\Gamma \Rightarrow 0, \Delta}{\Gamma \Rightarrow \Delta}
    $
    \\[.5ex]
    $
\vlinf{(i=0,1)}{\mathsmaller{(L\wedge_i)}}{\Gamma, A_1 \wedge A_2 \Rightarrow \Delta}{\Gamma, A_i\Rightarrow\Delta}
\qquad
\vliinf{}{\mathsmaller{(R\wedge)}}{\Gamma\Rightarrow A\wedge B, \Delta}{\Gamma\Rightarrow A, \Delta}{\Gamma\Rightarrow B, \Delta}
    $ 
 \\[.5ex]
    $
        \vliinf{}{\mathsmaller{(L\vee)}}{\Gamma, A \vee B \Rightarrow\Delta}{\Gamma, A\Rightarrow\Delta}{\Gamma, B\Rightarrow\Delta} 
        \qquad
        \vlinf{(i=0,1)}{\mathsmaller{(R\vee_i)}}{\Gamma\Rightarrow A_0 \vee A_1, \Delta}{\Gamma\Rightarrow A_i, \Delta} 
    $
    \\[.5ex]
    $
    \vlinf{}{\mathsmaller{(L *)}}{\Gamma, A * B \Rightarrow  \Delta}{\Gamma, A, B \Rightarrow \Delta}
        \qquad
        \vliinf{}{\mathsmaller{(R*)}}{\Gamma, \Sigma \Rightarrow A * B, \Delta, \Lambda} {\Gamma \Rightarrow A, \Delta}{\Sigma \Rightarrow B, \Lambda}
    $
    \\[.5ex]
     $
    \vliinf{}{\mathsmaller{(L \to)}}{\Gamma, \Sigma, A \to B \Rightarrow \Delta, \Lambda} {\Gamma \Rightarrow A, \Delta}{\Sigma, B \Rightarrow \Lambda}
    \qquad
    \vlinf{}{\mathsmaller{(R \to)}}{\Gamma \Rightarrow A \to B, \Delta}{\Gamma, A \Rightarrow B, \Delta}
    $
    \\[.5ex]
    $
        \vliinf{}{\mathsmaller{\text{(cut)}}}{\Gamma,\Sigma \Rightarrow \Delta,\Lambda}{\Gamma \Rightarrow A, \Delta}{\Sigma, A \Rightarrow\Lambda}
    $
    \vspace{5pt}
\caption{Sequent calculus $\mathbf{CFL_e}$}
\label{Fig: sequent calculus}
\end{table}

\begin{table}[t]
\centering
    \begin{tabular}{c c}
    \vspace{5pt}
$\Gamma \Rightarrow \top, \Delta \ \ \mathsmaller{(\top)}$ & $\Gamma, \bot \Rightarrow \Delta \ \ \mathsmaller{(\bot)}$ \\ \vspace{5pt}
\AxiomC{$\Gamma \Rightarrow \Delta$}
\RightLabel{\scriptsize{$(Lw)$}}
\UnaryInfC{$\Gamma, A \Rightarrow \Delta$}
\DisplayProof &
\AxiomC{$\Gamma \Rightarrow \Delta$}
\RightLabel{\scriptsize{$(Rw)$}}
\UnaryInfC{$\Gamma \Rightarrow A, \Delta$} 
\DisplayProof
\\
\vspace{5pt}
\AxiomC{$\Gamma, A, A \Rightarrow \Delta$}
\RightLabel{\scriptsize{$(Lc)$}}
\UnaryInfC{$\Gamma, A \Rightarrow \Delta$}
\DisplayProof &
\AxiomC{$\Gamma \Rightarrow A, A, \Delta$}
\RightLabel{\scriptsize{$(Rc)$}}
\UnaryInfC{$\Gamma \Rightarrow A, \Delta$} 
\DisplayProof\\
\vspace{5pt}
\AxiomC{$! \Gamma \Rightarrow A$}
\RightLabel{\scriptsize{$(R!)$}}
\UnaryInfC{$! \Gamma \Rightarrow !A$}
\DisplayProof
&
\AxiomC{$\Gamma , A \Rightarrow \Delta$}
\RightLabel{\scriptsize{$(L!)$}}
\UnaryInfC{$\Gamma , !A \Rightarrow \Delta$}
\DisplayProof
\\
\AxiomC{$\Gamma \Rightarrow \Delta$}
\RightLabel{\scriptsize{$(W!)$}}
\UnaryInfC{$\Gamma , !A \Rightarrow \Delta$}
\DisplayProof
&
\AxiomC{$\Gamma, !A, !A \Rightarrow \Delta$}
\RightLabel{\scriptsize{$(C!)$}}
\UnaryInfC{$\Gamma , !A \Rightarrow \Delta$}
\DisplayProof
 \end{tabular}
 \vspace{5pt}
    \caption{Additional axioms and rules.}
    \label{fig: structural rules}
\end{table}

\begin{table}[t]
\centering
\renewcommand{\arraystretch}{1.1}
\begin{tabular}{|l|}
\hline
Calculi over $\calL_u$:  \\
$\BCFL$ := Table \ref{Fig: sequent calculus} \\
$\BCFLw$ := $\BCFL + \{(Lw), (Rw)\}$  \\
$\BCFLc$ := $\BCFL + \{(Lc), (Rc)\}$  \\
$\LKu:= \BCFLcw$ := $\BCFLc + \{(Lw), (Rw)\}$  \\
\hline
Calculi over $\calL_b$:  \\
$\BMALL$ := $\BCFL + \{(\bot), (\top)\}$  \\
$\BAMALL$ := $\BMALL + \{(Lw), (Rw)\}$  \\
$\BRMALL$ := $\BMALL + \{(Lc), (Rc)\}$  \\ 
$\LKb$ := $\BRMALL + \{(Lw), (Rw)\}$  \\
\hline
Calculi over $\calL_!$:  \\
$\BCLL$ := $\BMALL + \{(R!), (L!), (W!), (C!)\}$  \\
$\BALL$ := $\BCLL + \{(Lw), (Rw)\}$  \\
$\BRLL$ := $\BCLL + \{(Lc), (Rc)\}$  \\
$\LKajab$ := $\BRLL + \{(Lw), (Rw)\}$  \\
\hline
\end{tabular}
\vspace{5pt}
\caption{Sequent calculi. The additional rules are in Table \ref{fig: structural rules}.}
\label{Fig: linear Seq cal}
\end{table}

Recall the sequent calculus $\LK$ for classical propositional logic, $\CPC$,  over the propositional language $\mathcal{L}_p=\{\top, \bot, \wedge, \vee, \to\}$:
\[
\LK:= \{(id), (\top), (\bot), (Lw), (Rw), (Lc), (Rc), (L \circ), (R \circ), (cut)\},
\]
for $\circ \in \{\wedge, \vee, \to\}$, using the rules in Tables \ref{Fig: sequent calculus} and \ref{fig: structural rules}. The single conclusion version of $\LK$ is denoted by $\LJ$ and is the standard sequent calculus for intuitionistic propositional logic, $\IPC$. 

It is straightforward, yet important, that for any two calculi among $\LK$, $\LKu$, $\LKb$, $\LKajab$, proofs of sequents in the smaller language can be translated between the calculi with only a polynomial increase in proof size and number of lines; likewise for $\LJ$. This follows from the fact that, in the presence of the weakening and contraction rules, $A * B$ is provably equivalent to $A \wedge B$, 0 and 1 correspond to 
$\bot$ and $\top$, and the modalities $!$ and $?$ can be interpreted as the identity, i.e., $!A:=?A:=A$, thereby validating all rules.

\subsubsection{Frege Systems}

We define Frege systems for substructural and linear logics. An \emph{inference system} $F$ over $\mathcal{L}$ consists of rules of the form $A_1, \ldots, A_n / A$, where the $A_i$'s and $A$ are formulas in $\mathcal{L}$. A \emph{dag-like $F$-proof} ($F$-proof, in short) $\pi$ of an $\mathcal{L}$-formula $A$ from a set $\mathcal{A}$ of $\mathcal{L}$-formulas is a sequence $\pi := A_1, \ldots, A_m$ of $\mathcal{L}$-formulas such that $A_m=A$ and each $A_i$ either belongs to $\mathcal{A}$ or is obtained from earlier formulas by an instance of a rule of $F$. An $F$-proof is called \emph{tree-like} if each $A_i$ is used at most once in the proof as a hypothesis of a rule. Each $A_i$ is called a \emph{line} of $\pi$. The number of lines in an $F$-proof $\pi$, denoted by $l(\pi)$, clearly satisfies $l(\pi) \leq |\pi|$. When $\pi$ is an $F$-proof of $A$ from the set of $\mathcal{L}$-formulas $\mathcal{A}$, we write $\mathcal{A} \vdash^{\pi}_F A$. If there exists an $F$-proof $\pi$ such that $\mathcal{A} \vdash^{\pi}_F A$, we write $\mathcal{A} \vdash_F A$. If $\mathcal{A} = \emptyset$, we say that $A$ is \emph{provable} in $F$ and write $F \vdash A$. The \emph{logic of $F$} is the set of formulas provable in $F$.


Let $L$ be a substructural or linear logic introduced in the previous subsection. A finite inference system $F$ is called a \emph{Frege system} for $L$, briefly an $L$-$\F$ system, if it is \emph{sound} (i.e., $\vdash_{F} A$ implies $A \in L$) and \emph{strongly complete} (i.e., $A_1, \ldots, A_n \vdash_{L} A$ implies $A_1, \ldots, A_n \vdash_{F} A$). A Frege system is \emph{standard} if $A_1, \ldots, A_n \vdash_{F} A$ implies $A_1, \ldots, A_n \vdash_{L} A$. Following \cite{jerabek}, we adopt the convention that all Frege systems are standard. As a consequence, any two Frege systems for the same logic $L$ are equivalent, and we refer simply to \emph{the} Frege system for $L$, denoted $L$-$\F$.

The following lemma shows that the choice of the Frege system or sequent calculus does not affect the proof complexity for substructural and linear logics. The proof follows the classical argument of \cite{Cook} and its sequent-calculus variant is essentially \cite[Theorem 6.3]{Raheleh}.

\begin{lemma} \label{Lem: Equivalence Of Frege}
Let $L$ be a substructural or linear logic introduced in this paper. Then, all Frege systems for $L$, as well as Frege systems and sequent calculi for 
$L$, are equivalent.
\end{lemma}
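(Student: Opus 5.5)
The plan follows Cook and Reckhow \cite{Cook} and proves three simulations: any two Frege systems for $L$ simulate each other, the Frege system simulates the sequent calculus $G$ of $L$, and $G$ simulates the Frege system. For the sequent-to-Frege direction, a sequent is read as the formula $I(S)$, and we must check that $I(S)$ stays polynomial in the size of $S$, which it does.

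\emph{Frege vs.\ Frege.} Let $F_1,F_2$ be Frege systems for $L$. Since $F_1$ is finite and standard, each of its rules $A_1,\dots,A_n/A$ satisfies $A_1,\dots,A_n\vdash_L A$. By strong completeness of $F_2$ we fix, once and for all, an $F_2$-derivation $\delta_r$ of $A$ from $A_1,\dots,A_n$, written with the propositional variables of the rule as schematic letters. Every application of $r$ in an $F_1$-proof is a substitution instance $\sigma$ of $r$, and we replace it by $\sigma(\delta_r)$. Substitution commutes with the rules of $F_2$, so the result is a correct $F_2$-derivation. Its size is $O(|\pi|)$ lines, each of size at most a constant times the largest formula in $\pi$, giving an overall quadratic bound.

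\emph{Sequent calculus into Frege.} Given a $G$-proof, we show that every line $\Gamma\Rightarrow\Delta$ has a short $L$-$\F$ derivation of $I(\Gamma\Rightarrow\Delta)$ from the translations of the premises of the rule producing it. The difficulty is that rules carry unbounded contexts, so $I(S)$ is not a fixed substitution instance of a schematic rule. We handle this in two steps.
\begin{enumerate}
\item We prove auxiliary lemmas giving polynomial-size Frege derivations, in $|\Gamma|+|\Delta|$, of $\bigast(\Gamma,\Sigma)\leftrightarrow \bigast\Gamma*\bigast\Sigma$ and of the dual statement for $+$, including the cases where a multiset is empty. These use associativity and commutativity steps, roughly $O(k^2)$ of them for $k$ formulas.
\item Once contexts are collapsed into single formulas $G=\bigast\Gamma$ and $D=\bigplus\Delta$, each rule becomes an instance of a fixed finite schema, for example $G\to (A+D)$, $S\to(B+L)\vdash (G*S)\to((A*B)+D+L)$. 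Such a schema is valid in $L$ and so has a fixed Frege derivation, which we instantiate as in the first step.
\end{enumerate}
Structural rules are handled the same way, using the axioms $A\to A*A$ or $A*B\to A$ where the logic contains them.

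I expect the main obstacle to be the promotion rule $(R!)$, whose context $!\Gamma$ is not collapsible by a fixed schema. From $\bigast !\Gamma\to A$ I would use necessitation and the $K$-type axiom for $!$ to obtain $!\bigast!\Gamma\to !A$. I would then derive $\bigast !\Gamma\to !\bigast !\Gamma$ by induction on $|\Gamma|$, using $!B\to !!B$ and $!B*!C\to !(!B*!C)$ (the latter valid because $!(X*Y)$ follows from $!X*!Y$). This gives a derivation of size polynomial in $|\Gamma|$.

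\emph{Frege into sequent calculus.} Each Frege rule $A_1,\dots,A_n/A$ is standard, so $A_1,\dots,A_n\vdash_L A$. By the definition of $\vdash_{L_G}$, this means there is a fixed $G$-derivation of $\Rightarrow A$ from $\Rightarrow A_1,\dots,\Rightarrow A_n$. This derivation is closed under substitution, since $(id)$ is available for arbitrary formulas and all rules are schematic. Replacing each line $B$ of a Frege proof by $\Rightarrow B$ and each rule application by the substituted fixed derivation yields a $G$-proof of polynomial size. Composing the three simulations gives the lemma.
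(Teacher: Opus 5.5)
Your proposal is correct and follows the route the paper points to. For comparing Frege systems you use the Cook--Reckhow rule-by-rule substitution argument. For the sequent-calculus variant (as in Jalali's Theorem 6.3) you translate $S \mapsto I(S)$ with polynomial-cost reshuffling of $\bigast$/$\bigplus$ contexts, and you go back by instantiating fixed $G$-derivations of each Frege rule. Your separate treatment of $(R!)$, via necessitation, $(!\text{K})$ and a polynomial derivation of $\bigast !\Gamma \to\, !\bigast !\Gamma$ from $(!4)$ and $!X * !Y \to\, !(X*Y)$, is the extra point the linear setting needs, and it goes through.
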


The Frege system for $\mathsf{FL_e}$, denoted by $\FL$-$\F$, is presented in Table \ref{Fig: FL_e} (see \cite[Figure 2.9]{Ono}). Note that the adjunction rule ($A \ B / A \wedge B$) is provable in $\FL$-$\F$. Adding the additional axioms and rules in Table \ref{Fig: linear}, we obtain Frege systems for several substructural and linear logics. 
Table \ref{Fig: contraction-free logics} presents Frege systems for classical contraction-free logics.
\begin{table}[t!]
\centering
\begin{tabular}{|c|c|}
\hline (id) & $A \rightarrow A$ \\
\hline (pf) & $(A \rightarrow B) \rightarrow \big((C \rightarrow A) \rightarrow(C \rightarrow B) \big)$  \\
\hline (per) & $\big(A \rightarrow(B \rightarrow C)\big) \rightarrow \big(B \rightarrow(A \rightarrow C) \big)$  \\
\hline $(* \wedge)$ & $\big((A \wedge 1) * (B \wedge 1)\big) \rightarrow(A \wedge B)$ \\
\hline $(\wedge \! \to)_1$ & $(A \wedge B) \rightarrow A$ \\
\hline $(\wedge \! \rightarrow)_2$ & $(A \wedge B) \rightarrow B$  \\
\hline $(\rightarrow \! \wedge)$ & $\big((A \rightarrow B) \wedge(A \rightarrow C)\big) \rightarrow \big(A \rightarrow(B \wedge C) \big)$ \\
\hline $(\rightarrow \! \vee)_1$ & $A \rightarrow(A \vee B)$  \\
\hline $(\rightarrow \! \vee)_2$ & $B \rightarrow(A \vee B)$  \\
\hline $(\vee \! \rightarrow)$ & $\big((A \rightarrow C) \wedge(B \rightarrow C)\big) \rightarrow (A \vee B \rightarrow C)$ \\
\hline $(\rightarrow \! *)$ & $B \rightarrow(A \rightarrow (A \st B))$  \\
\hline $(* \! \rightarrow)$ & $\big(B \rightarrow(A \rightarrow C)\big) \rightarrow((A \st B) \rightarrow C)$  \\
\hline (1) & 1  \\
\hline $(1 \! \rightarrow)$ & $1 \rightarrow(A \rightarrow A)$  \\
\hline $(\mathrm{mp})$ & $\vliinf{}{}{B}{A}{A \to B}$
 \\
\hline $(\operatorname{adj}_u)$ &   $\vlinf{}{}{A \wedge 1}{A}$  \\
\hline
\end{tabular}
\vspace{5pt}
\caption{The system $\FL$-$\F$.}\label{Fig: FL_e}
\end{table}
\begin{table}[t!]
\centering
\begin{tabular}{ |c| c| c| c| }
\hline
$(\text{top})$ & $A \to \top$\\
\hline
$(\text{bot})$ & $ \bot \to A$ \\
\hline
$(\text{dn})$ & $\neg \neg A \to A$ \\
\hline
$(w)$ & $A \to (B \to A)$ \\
\hline
$(c)$ & $(A \to (A \to B)) \to (A \to B)$ \\
\hline
$(!w)$ & $A \to (!B \to A)$\\
\hline
$(!c)$ & $(!A \to (!A \to B)) \to (!A \to B)$\\
\hline
$(!\text{K})$ & $!(A \to B) \to (!A \to !B)$\\
\hline
$(!\text{T})$ & $!A \to A$\\
\hline
$(!\text{4})$ & $!A \to !!A$\\
\hline
$(\text{nec})$ & 
$\vlinf{}{}{!A}{A}$
\\
\hline
\end{tabular}
\vspace{5pt}
\caption{{Additional axioms and rules.}}\label{Fig: linear}
\end{table}
\begin{table}[t]
\centering
\renewcommand{\arraystretch}{1.1}
\begin{tabular}{|l|}
\hline
Frege systems over $\calL_u$:  \\
$\CFL$-$\F$ := $\FL$-$\F$ + \{(dn)\} \\
$\CFLw$-$\F$ := $\CFL$-$\F$ + $\{(w)\}$  \\
\hline
Frege systems over  $\calL_b$:  \\
$\MALL$-$\F$ := $\CFL$-$\F$ + \{(top), (bot)\}  \\
$\AMALL$-$\F$ := $\MALL$-$\F$ + $\{(w)\}$  \\ 
\hline
Frege systems over  $\calL_!$:  \\
$\CLL$-$\F$ := $\MALL$-$\F$ + $\{(!w), (!c), (!\text{K}), (!\text{T}), (!4), \text{(nec)}\}$  \\
$\ALL$-$\F$ := $\CLL$-$\F$ + $\{(w)\}$  \\
\hline
\end{tabular}
\vspace{5pt}
\caption{Frege systems for classical contraction-free logics.}
\label{Fig: contraction-free logics}
\end{table}

\section{Clique-Color Formulas}\label{sec: HardTheorems}


In this section, we introduce some known hard theorems for classical and non-classical proof systems that we will use later. 



Let $n, k, m \geq 1$. A simple, undirected graph without loops (simply, a graph) contains a $k$-\emph{clique} if it has a complete subgraph on $k$ vertices, and it is \emph{$m$-colorable} if there exists an assignment of $m$ colors to its vertices such that adjacent vertices receive distinct colors.
These properties of graphs can be represented in propositional language. To illustrate this, first consider the $\binom{n}{2}$ atoms $p_{i,j}$, where $i \neq j \in [n]=\{1, \ldots, n\}$. Each Boolean assignment to these atoms represents a graph on the vertex set $[n]$. The intended meaning is that the atom $p_{i,j}$ has value $1$ if and only if the vertices $i$ and $j$ are connected by an edge in the graph.

Then, to represent a $k$-clique, we use $kn$ additional atoms $r_{u,i}$, where $u \in [k]$ and $i \in [n]$. Any assignment to these atoms represents a function from $[k]$ to $[n]$. More precisely, the atom $r_{u,i}$ has value $1$ if and only if the function maps $u$ to $i$.
Now consider the following clauses:
\begin{itemize}
\item[$\bullet$] 
$\bigvee_{i \in [n]} r_{u,i}$, for all $u \leq k$,
\item[$\bullet$] 
$\neg r_{u,i} \vee \neg r_{u,j}$, for all $u \in [k]$ and any $i \neq j \in [n]$,
\item[$\bullet$] 
$\neg r_{u,i} \vee \neg r_{v,i}$, for all $u \neq v \in [k]$ and any $i \in [n]$,
\item[$\bullet$] 
$\neg r_{u,i} \vee \neg r_{v,j} \vee p_{i,j}$, for all $u \neq v \in [k]$ and $i \neq j \in [n]$.
\end{itemize}
Let $Clique^k_n(\bar{p}, \bar{r})$ denote the conjunction of all these formulas. The intended meaning of $Clique^k_n(\bar{p}, \bar{r})$ is that the atoms $\bar{p}$ represent a graph on $n$ vertices, while $\bar{r}$ encodes an injective function from $[k]$ to $[n]$, selecting the vertices of a $k$-clique in the graph described by $\bar{p}$. Observe that every occurrence of $p_{i,j}$ in $Clique^k_n(\bar{p}, \bar{r})$ is positive; equivalently, the formula is \emph{monotone} in~$\bar{p}$.

For $m$-colorability, we use $nm$ additional atoms $s_{i,a}$ for $i \in [n]$ and $a \in [m]$, indicating that vertex $i$ receives color $a$. Using these atoms and atoms $\bar{p}$, consider the clauses:
\begin{itemize}
\item[$\bullet$] $\bigvee_{a \in [m]} s_{i,a}$, for all $i \in [n]$,
\item[$\bullet$] $\neg s_{i,a} \vee \neg s_{i,b}$, for all $a \neq b \in [m]$ and $i \in [n]$,
\item[$\bullet$] $\neg s_{i,a} \vee \neg s_{j,a} \vee \neg p_{i,j}$, for all $a \in [m]$ and $i \neq j \in [n]$.
\end{itemize}
Let $Color^m_n(\bar{p}, \bar{s})$ denote the conjunction of these formulas. The intended meaning of $Color^m_n(\bar{p}, \bar{s})$ is that the atoms $\bar{s}$ encode a function from $[n]$ to $[m]$, assigning colors to the vertices in such a way that adjacent vertices receive distinct colors. This completes the representability.

Now, as no graph with $k+1$-clique can have a $k$-coloring, it is clear that the formula 
\[
Clique^{k+1}_n(\bar{p}, \bar{r}) \to \neg Color^{k}_n(\bar{p}, \bar{s})  
\]
is a classical tautology. We call this formula the \emph{Clique-Color formula}.
Buss~\cite{BussPoly} provided a polynomial-size tree-like $\LK$-proof of the pigeonhole principle. Since the proof of the Clique--Color formula uses the pigeonhole principle in a simple manner, it is known that the Clique--Color formula has a polynomial-size tree-like $\LK$-proof.

\begin{theorem}\cite{BussPoly,Krajicek}\label{Thm: Buss Clique Color}
There is a tree-like $\LK$-proof $\pi_n$ for
\[
Clique^{k+1}_n(\bar{p}, \bar{r}) \to \neg Color^{k}_n(\bar{p}, \bar{s})  
\]
such that $|\pi_n| \leq n^{O(1)}$. 
\end{theorem}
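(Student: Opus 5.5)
The plan is to reduce the Clique--Color formula to the pigeonhole principle and then invoke Buss's polynomial-size tree-like $\LK$-proofs of $PHP^{k+1}_k$~\cite{BussPoly}. Concretely, we aim to derive the sequent
\[
Clique^{k+1}_n(\bar p,\bar r),\ Color^k_n(\bar p,\bar s) \Rightarrow
\]
and then finish with $(R\to)$ using $\neg A := A\to\bot$. We define the pigeon variables by substitution:
\[
q_{u,a} := \bigvee_{i\in[n]} (r_{u,i}\wedge s_{i,a}), \qquad u\in[k+1],\ a\in[k].
\]
Informally, $q_{u,a}$ says that the $u$-th clique vertex receives colour $a$. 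Each $q_{u,a}$ has size $O(n)$, so substituting them into Buss's proof of $\neg PHP^{k+1}_k(\bar q)$ gives a tree-like proof of size $n^{O(1)}$. Tree-like proofs remain tree-like under substitution, and $k\le n$ can be assumed, since otherwise the clique clauses are trivially contradictory via the injectivity clauses.

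The main work is deriving each pigeonhole axiom from the hypotheses with polynomial-size tree-like proofs.

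\emph{Totality.} We need $\bigvee_a q_{u,a}$ for each $u$. From $\bigvee_i r_{u,i}$ we case-split on $i$. For each $i$ we use $\bigvee_a s_{i,a}$ to get $\bigvee_a (r_{u,i}\wedge s_{i,a})$, and then weaken into $\bigvee_a q_{u,a}$. Each case costs $O(nk)$ and there are $n$ cases, so the total is polynomial.

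\emph{Injectivity.} We need $\neg q_{u,a}\vee\neg q_{v,a}$ for $u\neq v$. Equivalently, we derive the sequent $q_{u,a},q_{v,a}\Rightarrow$ by $(L\vee)$ on both disjunctions, giving $n^2$ cases $r_{u,i}\wedge s_{i,a},\ r_{v,j}\wedge s_{j,a}$.
\begin{itemize}
\item If $i=j$, the clause $\neg r_{u,i}\vee\neg r_{v,i}$ closes the case.
\item If $i\neq j$, the clause $\neg r_{u,i}\vee\neg r_{v,j}\vee p_{i,j}$ yields $p_{i,j}$. Then $\neg s_{i,a}\vee\neg s_{j,a}\vee\neg p_{i,j}$ closes the case.
\end{itemize}
Each case is a constant-size propositional derivation, after extracting the relevant clauses from the big conjunctions by $(L\wedge)$ and contraction. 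Hence each injectivity axiom costs $n^{O(1)}$.

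The Buss proof in effect uses the axioms as hypotheses, i.e.\ it is a proof of $\bigwedge \text{Tot}\wedge\bigwedge\text{Inj}\Rightarrow$. We therefore cut it against the derived conjunctions, each derived separately in tree form. Copies needed in tree-like form are only polynomially many, because the derived conjunction is a single formula cut once. Contraction on the hypotheses $Clique$ and $Color$ then merges the copies. We never need the clauses $\neg r_{u,i}\vee\neg r_{u,j}$ or $\neg s_{i,a}\vee\neg s_{i,b}$, which is harmless.

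The main obstacle is Buss's result itself: a polynomial-size (tree-like) Frege/$\LK$ proof of $PHP^{k+1}_k$, obtained by formalizing counting via polynomial-size formulas for addition. We take this as given from~\cite{BussPoly}. Buss's proofs are dag-like Frege proofs, and Krajíček observes that tree-like $\LK$ polynomially simulates Frege, so tree-likeness costs only a polynomial. With that, the remaining reduction steps above are routine and clearly polynomial in $n$.
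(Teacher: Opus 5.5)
Your proposal is correct and matches the paper's approach: the paper gives no argument beyond citing Buss's tree-like $\LK$-proofs of the pigeonhole principle and noting that the Clique--Color formula follows from it ``in a simple manner.'' Your substitution $q_{u,a} := \bigvee_{i}(r_{u,i}\wedge s_{i,a})$, together with the derivations of totality and injectivity, is precisely that standard reduction (as in Kraj\'{i}\v{c}ek's book).

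One small correction concerns $k \le n$. It is an implicit hypothesis of the statement, needed anyway for an $n^{O(1)}$ bound and satisfied by $k=\lfloor\sqrt{n}\rfloor$. You cannot dispose of it by saying the clique clauses are otherwise ``trivially'' contradictory: for $k+1>n$ they form $\neg PHP^{k+1}_n$, which again requires Buss's counting argument.
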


The significance of the Clique-Color formula stems from the following hardness theorem in monotone circuit complexity:

\begin{theorem}\cite{Alon,Razborov} \label{Thm: Alon}
For $k = \lfloor \sqrt{n} \rfloor$, the Clique--Color formula requires 
monotone interpolating circuits\footnote{As circuits are not directly used in this paper, we do not define them here; see~\cite{Krajicek}.} of size $2^{\Omega(n^{1/4})}$. More precisely, assume that $C$ is a circuit consisting only of $\{\wedge, \vee\}$ gates such that for any $\bar u \in \{0, 1\}^{\binom{n}{2}}$:
\begin{itemize}
    \item if $C(\bar{u}) = 0$, then $Clique^{k+1}_n(\bar{u}, \bar{r})$ is unsatisfiable, and
    \item if $C(\bar{u}) = 1$, then $Color^k_n(\bar{u}, \bar{s})$ is unsatisfiable.
\end{itemize}
Then, $C$ must contain at least $2^{\Omega(n^{1/4})}$ gates.
\end{theorem}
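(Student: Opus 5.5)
The plan is to reduce the statement to the classical monotone circuit lower bound for separating $(k+1)$-cliques from $k$-colorable graphs, and then prove that bound by Razborov's approximation method with the Alon--Boppana sunflower refinement.

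\emph{Reduction to a separation problem.} If a graph $\bar u$ contains a $(k+1)$-clique, then $Clique^{k+1}_n(\bar u,\bar r)$ is satisfiable, so $C(\bar u)\neq 0$, i.e.\ $C(\bar u)=1$. If $\bar u$ is $k$-colorable, then $Color^k_n(\bar u,\bar s)$ is satisfiable, so $C(\bar u)=0$. We restrict attention to two test distributions:
\begin{itemize}
\item \emph{positive tests}: graphs consisting of exactly one $(k+1)$-clique on a uniformly random vertex set, which $C$ must accept;
\item \emph{negative tests}: complete $k$-partite graphs induced by a uniformly random map $f\colon [n]\to[k]$, with $i,j$ adjacent iff $f(i)\neq f(j)$, which $C$ must reject.
\end{itemize}
It then suffices to show that every monotone $\{\wedge,\vee\}$-circuit accepting all positive tests and rejecting all negative tests has $2^{\Omega(n^{1/4})}$ gates.

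\emph{Approximators.} For $X\subseteq[n]$, the clique indicator $\lceil X\rceil$ is the monotone function stating that $X$ is a clique. An \emph{approximator} is a disjunction of at most $L$ clique indicators with $|X|\le \ell$. The parameters are chosen as $\ell\approx \sqrt{k}/2\approx n^{1/4}/2$, $p\approx \sqrt{k}\log n$ (the sunflower petal bound), and $L=(p-1)^\ell\,\ell!$. We replace the circuit gate by gate, bottom-up:
\begin{itemize}
\item an input $p_{i,j}$ becomes $\lceil\{i,j\}\rceil$;
\item an $\vee$-gate takes the union of the two families, then repeatedly \emph{plucks}: while the family has more than $L$ members, use the Erd\H{o}s--Rado sunflower lemma to find $p$ sets forming a sunflower and replace them by their core;
\item an $\wedge$-gate takes all pairwise unions $X\cup Y$, discards those of size greater than $\ell$, and then plucks.
\end{itemize}

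\emph{Error bounds.} The core of the proof is bounding the new errors introduced at each gate.
\begin{itemize}
\item Since plucking replaces sets by their smaller core, it only enlarges the accepted set, and an $\wedge$-approximation accepts a subset of what the exact conjunction accepts. Hence on a positive test a new error arises only when a union of size greater than $\ell$ is discarded. A random $(k+1)$-set contains a fixed $(\ell+1)$-set with probability at most $\binom{n-\ell-1}{k-\ell}/\binom{n}{k+1}$, and a union bound over at most $L^2$ discarded sets gives at most $L^2\binom{n-\ell-1}{k-\ell}$ spoiled positive tests per gate.
\item On negative tests, discarding sets at an $\wedge$-gate causes no new errors. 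The errors come from plucking. If the core $Z$ is a clique under a random coloring $f$ (i.e.\ $f$ is injective on $Z$) while no petal is, then independence of the petals bounds the probability by $(1-\Pr[\text{a petal extends injectively}])^p \le 2^{-p}$ roughly. Each gate performs at most $L$ pluckings, so it spoils at most $L\,2^{-p}k^n$ negative tests.
\end{itemize}

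\emph{Final dichotomy.} The output approximator has one of two forms.
\begin{itemize}
\item It is empty, so it errs on every positive test.
\item It contains some $\lceil X\rceil$ with $|X|\le \ell$. Then it accepts a constant fraction of negative tests, because $f$ is injective on $X$ with probability at least $1-\binom{\ell}{2}/k\ge 1/2$ by the choice $\ell\lesssim\sqrt{k}$.
\end{itemize}
Comparing with the per-gate errors, the number of gates $t$ satisfies either
\[
t\ \ge\ \binom{n}{k+1}\Big/\Big(L^2\tbinom{n-\ell-1}{k-\ell}\Big)
\qquad\text{or}\qquad
t\ \ge\ 2^{p-1}/L .
\]
Substituting $k=\lfloor\sqrt n\rfloor$, $\ell\sim\sqrt k$, and $p\sim \sqrt k\log n$, both quotients are $2^{\Omega(\sqrt k)}=2^{\Omega(n^{1/4})}$.

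The main obstacle is the parameter balancing in this last step together with the sunflower error estimate on negative tests. The first quotient behaves like $(n/k)^{\ell}/L^2$, while $L^2\approx (p\ell)^{2\ell}$, so we need $p\ell\ll \sqrt{n/k}$. The second quotient needs $p\gg \ell\log(p\ell)$. I would first try $\ell=\lfloor\sqrt k/10\rfloor$ and $p=\lceil\sqrt k\log n\rceil$ and check that $(n/k)^{1/2}=n^{1/4}$ dominates $p\ell\approx k\log n/10$. If this fails by logarithmic factors, I would fall back on the Alon--Boppana sharpened sunflower bound $L=(p-1)^\ell\ell!$ and a tighter estimate of the petal-collision probability.
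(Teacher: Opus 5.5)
The paper does not prove this statement; it simply cites Razborov and Alon--Boppana. Judged on its own, your outline is the standard approximation method. However, the step you flag as the main obstacle genuinely fails, and it fails by a polynomial factor, not a logarithmic one.

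With $k=\lfloor\sqrt n\rfloor$, the positive quotient is roughly $(n/k)\bigl((n/k)/(\ell p/e)^2\bigr)^{\ell}$, so you need $\ell p\lesssim\sqrt{n/k}\approx n^{1/4}$. Your choice $\ell\approx\sqrt k/10$, $p\approx\sqrt k\log n$ gives $\ell p\approx k\log n/10\approx n^{1/2}\log n$. That quotient is therefore far below $1$, and the claim that both quotients are $2^{\Omega(\sqrt k)}$ is false.

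No other choice of $\ell,p$ rescues the scheme as you set it up. That scheme uses the Erd\H{o}s--Rado bound $L=\ell!(p-1)^\ell$, a union bound over discarded unions, and error $\varepsilon^{p}$ per plucking with $\varepsilon=\binom{\ell}{2}/k$. The argument runs as follows.
\begin{itemize}
\item Since $\ell\lesssim\sqrt k$ is forced anyway, the positive quotient is at most about $(2\sqrt n)^{\ell+1}$. A bound of $2^{\Omega(n^{1/4})}$ therefore forces $\ell\gtrsim n^{1/4}/\log n$.
\item Then $L^2\lesssim(2\sqrt n)^{\ell+1}$ together with $L\ge(\ell(p-1)/e)^{\ell}$ forces $p=O(\log n)$.
\item With these values $\varepsilon\ge 1/\mathrm{polylog}(n)$, so the negative quotient is at most $\varepsilon^{-p}=2^{O(\log n\log\log n)}$.
\end{itemize}
Balanced correctly (e.g.\ $\ell\approx n^{1/8}/4$, $p\approx n^{1/8}$), your scheme proves only $2^{\Omega(n^{1/8})}$. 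Your proposed fallback adds nothing: $L=(p-1)^{\ell}\ell!$ \emph{is} the Erd\H{o}s--Rado bound you were already using.

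What is missing is the sharper error analysis of Alon and Boppana. With the textbook estimates you use, the natural choice $\ell\approx\sqrt k$, $p\approx\sqrt k\log n$ yields $n^{\Omega(\sqrt k)}$ only when roughly $k^{3}\log^2 n\ll n$. It is the Alon--Boppana refinement that pushes the method to clique sizes near $n^{2/3}$ (giving $2^{\Omega((n/\log n)^{1/3})}$) and hence covers $k=\lfloor\sqrt n\rfloor$ with a $2^{\Omega(n^{1/4})}$ bound.

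There is also a secondary slip in your plucking count. At an $\wedge$-gate you start from up to $L^2$ sets, and each plucking removes only $p-1$ of them. A gate can therefore need about $L^2/(p-1)$ pluckings, not $L$, and the negative quotient is then about $2^{p}p/L^2$.

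Finally, the $2^{\Omega(n^{1/8})}$ that your scheme does give would already suffice for the paper's only application: Theorem~\ref{MainCutFree} needs just $2^{n^{\Omega(1)}}$. It does not, however, establish the statement as written.
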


This hardness theorem has been repeatedly used to establish proof size lower bounds in various proof systems. The only result of this type that is relevant for the present paper is the exponential lower bound for $\LK^-$. To state this result, we first need to introduce (monotone) feasible interpolation.

\begin{definition}[Kraj\'{i}\v{c}ek \cite{krajivcekfeasible2}]
A proof system $P$ for $\mathsf{CPC}$ is said to have \emph{(monotone) feasible interpolation} if for any $P$-proof $\pi$ of an implication $A(\bar{p}, \bar{q}) \to B(\bar{p}, \bar{r})$, (resp. where $A$ or $B$ is monotone in $\bar{p}$), there is a (monotone) circuit $C$ of size $(|\pi|+|A|+|B|)^{O(1)}$ computing a Craig interpolant for $A(\bar{p}, \bar{q}) \to B(\bar{p}, \bar{r})$.
\end{definition}

\begin{theorem}\cite[Theorems 3.3.1, 3.3.2]{Krajicek}\label{thm: FI For LKminus}
$\LKminus$ has the monotone feasible interpolation property.  
\end{theorem}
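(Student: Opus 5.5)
The plan is the standard Kraj\'{i}\v{c}ek–Pudl\'{a}k argument: build the interpolating circuit by induction along the cut-free proof. Take an $\LKminus$-proof $\pi$ of $A(\bar p,\bar q)\Rightarrow B(\bar p,\bar r)$; it suffices to treat sequents, since a proof of the implication yields one of the sequent at polynomial cost by inverting $(R\to)$, which is cut-free admissible. By the subformula property, every formula in $\pi$ is a subformula of $A$ or of $B$. We strengthen the claim to all sequents of $\pi$. For each sequent $\Gamma\Rightarrow\Delta$, split both sides into a part $\Gamma_A,\Delta_A$ whose formulas descend from $A$ and a part $\Gamma_B,\Delta_B$ descending from $B$. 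The split is fixed top-down by tracing ancestry from the end-sequent: $A$ sits on the left, $B$ on the right, and each active formula inherits the label of its principal formula. We then build, by induction on $\pi$, a circuit $C_S(\bar p)$ that is an interpolant for the split. Concretely, $C_S$ is false on every assignment making $\bigwedge\Gamma_A\to\bigvee\Delta_A$ false, and true on every assignment making $\bigwedge\Gamma_B\to\bigvee\Delta_B$ false. Soundness of the sequent guarantees that these two cases never overlap.

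The induction runs rule by rule.
\begin{itemize}
\item Axioms $D\Rightarrow D$: if both occurrences carry the same label, take the constant $0$ or $1$. A mixed axiom can only occur for $D$ built from the common variables $\bar p$. In that case take $D$ itself when the left copy is $A$-labelled (or $\neg D$ in the other orientation). Monotonicity issues are deferred to the last paragraph.
\item Weakening, contraction and one-premise logical rules: keep the premise's circuit unchanged. Contraction of two occurrences with the same label is harmless.
\item Two-premise rules whose principal formula lies on the $A$-side (e.g.\ $R\wedge$ in $\Delta_A$, $L\vee$ in $\Gamma_A$, or $L\to$): combine by $C_1\vee C_2$.
\item Two-premise rules whose principal formula lies on the $B$-side: combine by $C_1\wedge C_2$.
\end{itemize}
The $L\to$ case needs care, since its two premises move side formulas between antecedent and succedent. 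I will check, as in the standard argument, that the labels remain consistent. Each step adds $O(1)$ gates, so the circuit has size $O(l(\pi))\le|\pi|^{O(1)}$, and dag-like sharing is allowed.

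The main obstacle is \emph{monotonicity}. Mixed axioms give negated atoms, and formulas of $A$ containing $p$ might occur negatively. I would first put $A$ in negation normal form for the $\bar p$-variables, or observe that, when $A$ is monotone in $\bar p$, every $\bar p$-literal reaching an axiom with the $A$-copy on the left appears positively. Then every mixed axiom is of the form $p\Rightarrow p$ with the $A$-copy on the left, which needs only the positive literal $p$. For the symmetric orientation (the $B$-copy on the left, the $A$-copy on the right), I would use monotonicity of $A$ to replace the interpolant by the constant $1$. This requires arguing, in the style of Kraj\'{i}\v{c}ek's Theorem 3.3.2, that such axioms contribute only in ways a positive occurrence can absorb. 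If this case analysis becomes delicate, the fallback is to follow \cite{Krajicek}: prove the nonmonotone version first and then show that the gates introducing $\neg p$ can be replaced by constants without affecting correctness, using monotonicity of $A$ in $\bar p$.
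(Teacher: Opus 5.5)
The paper gives no proof of this statement; it is quoted from Kraj\'{i}\v{c}ek, and your plan is the standard Maehara-style construction behind that citation. There is, however, a genuine gap in the dag-like case, and that is the case the paper needs. Its $G$-proofs are dag-like by default, and Theorem~\ref{MainCutFree} quantifies over arbitrary $\mathbf{LK}^-_!$-proofs.

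Your split of each sequent is ``fixed top-down by tracing ancestry from the end-sequent''. That is well defined only when every sequent has a unique path to the root. In a dag-like proof one sequent can be used several times, and a formula occurrence in it can descend into $A$ along one path and into $B$ along another. For instance, let one sequent $p\Rightarrow p$ serve as both premises of an $(L\to)$ step giving $p,\,p\to p\Rightarrow p$. If $p\to p$ is labelled $A$ and the succedent $p$ is labelled $B$ (as in a proof of $p\wedge(p\to p)\Rightarrow p$), the two premise slots require the labellings $(A,A)$ and $(A,B)$ of the same sequent. So a single circuit $C_S$ per sequent need not exist. Nothing in your argument bounds how many labellings a shared subproof must be interpolated for, so ``dag-like sharing is allowed'' does not follow from what you wrote.

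The ambiguity only concerns formulas all of whose variables lie in $\bar p$: a formula containing some $q$ (resp.\ $r$) is a subformula of $A$ only (resp.\ $B$ only), so its label is forced. One way to close the gap:
\begin{enumerate}
\item Give each sequent a canonical labelling that depends only on the sequent itself. Formulas with a $\bar q$-variable go to $A$, those with an $\bar r$-variable go to $B$, and $\bar p$-only formulas go to $B$.
\item Compute $C_S$ for this labelling with the invariant that $\Gamma_A\Rightarrow\Delta_A,C_S$ and $C_S,\Gamma_B\Rightarrow\Delta_B$ are valid. Both copies of $D$ in an axiom now get the same label, so there are no mixed axioms.
\item Sometimes an inference, or the end-sequent, needs a $\bar p$-only formula $F$ of a premise on the $A$-side. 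This happens only for active formulas of a principal formula containing a $\bar q$-variable. Replace the premise's circuit $C$ by $C\wedge F$ if $F$ is in the antecedent, and by $C\wedge\neg F$ if $F$ is in the succedent.
\end{enumerate}
Both replacements preserve the invariant and cost $O(|F|)$ gates. Your polarity observation then gives monotonicity. Such an $F$ is a positive (resp.\ negative) subformula occurrence of $A$, so $F$ (resp.\ $\neg F$) has a monotone circuit of size $O(|F|)$. When it is $B$ that is monotone, do the symmetric thing and put $\bar p$-only formulas on the $A$-side.

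Two smaller corrections. First, your verbal invariant has the wrong sign. As you state it, the mixed axiom would need $\neg D$ and the $\vee/\wedge$ rules would be swapped; what your rules actually compute is the invariant displayed in step~2 above. Second, your first monotonicity observation already settles the tree-like case, and the rest of that paragraph should go. With $A$ monotone in $\bar p$, every mixed axiom $D\Rightarrow D$ (non-atomic in the paper's $\mathbf{LK}$) has its $A$-copy on the left, and its interpolant $D$ is monotone by polarity. The reverse orientation arises only for variable-free $D$, where the interpolant is the constant $\neg D$, not necessarily $1$. So drop the ``replace by $1$'' step and the fallback.
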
 

Combining Theorems~\ref{Thm: Alon} and~\ref{thm: FI For LKminus}, it is straightforward to obtain an exponential lower bound for $\LK^-$-proofs of the Clique--Color formula. We will adapt this strategy in Section~\ref{Sec: cut-free}.

For non-classical logics, by modifying the Clique--Color formulas so that they become intuitionistic tautologies rather than classical tautologies and providing a corresponding notion of feasible (monotone) interpolation, Hrube\v{s} \cite{Hrubes} established an exponential lower bound on the \emph{number of lines} for $\LJ$-proofs and intuitionistic Frege proofs:

\begin{theorem}\label{Hrubes} \cite{Hrubes}
Let $\bar{p}=\{p_{i,j} \mid 1 \leq i \neq j \leq n\}$, $\bar{q}=\{q_{i,j} \mid 1 \leq i \neq j \leq n\}$, $\bar{r}$, and $ \bar{s}$ be pairwise disjoint sets of variables. Let $k = \lfloor \sqrt{n} \rfloor$. Then the formulas
\[
\Theta^{\bot}_n := \bigwedge_{1 \leq i \neq j \leq n} (p_{i,j} \vee q_{i,j}) \to \neg Color^{k}_n(\bar{p}, \bar{s}) \vee \neg Clique^{k+1}_n(\neg \bar{q}, \bar{r})
\]
are intuitionistic tautologies and every $\LJ$-proof of $\Theta^{\bot}_n$ contains at least $2^{\Omega(n^{1/4})}$ lines.
\end{theorem}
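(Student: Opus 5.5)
The plan is to follow Hrubeš's original argument, which has two parts: an intuitionistic tautology check, and a lower bound obtained by feasible interpolation for $\LJ$ combined with Theorem~\ref{Thm: Alon}.

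\textbf{Tautology check.} To see that $\Theta^{\bot}_n$ is intuitionistically valid, I will reduce to a classical statement. Intuitionistically, $\bigwedge(p_{i,j}\vee q_{i,j})$ lets us case-split on each edge. Each branch fixes a choice $E \subseteq$ pairs, where the pairs in $E$ have $p_{i,j}$ and the remaining pairs have $q_{i,j}$.

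Both disjuncts $\neg Color$ and $\neg Clique$ are negations. So in each branch it suffices to derive a contradiction from $Color^k_n(\bar p,\bar s)\wedge Clique^{k+1}_n(\neg\bar q,\bar r)$. By Glivenko's theorem, a negated formula is intuitionistically provable iff it is classically provable. The goal is therefore a classical tautology, and in fact needs no case split.

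The key observation is that $p_{i,j}$ is true or else $q_{i,j}$ is true, so $\neg q_{i,j}$ implies $p_{i,j}$. Hence a clique in $\neg\bar q$ is a clique in $\bar p$, which contradicts the $k$-coloring. The subtle point is that the disjunction must be placed outside, as "$\neg A\vee\neg B$". For this I will case on the finitely many $\bar p,\bar q$ branches: in each branch the graph is fixed, and classical reasoning decides which of the two negations holds. Both are negations of formulas whose atoms are fixed by the branch, so the reasoning goes through.

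\textbf{Lower bound.} The central step is a feasible interpolation theorem for $\LJ$ that counts lines. Given an $\LJ$-proof of $\bigwedge(p\vee q)\to A(\bar p,\bar s)\vee B(\bar q,\bar r)$ with $l$ lines, I will build a monotone circuit of size $l^{O(1)}$. On input $\bar u$, setting $p=u$ and $q=\neg u$, the circuit should decide a disjunct that is true.

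The method is to read the proof via the disjunction property, using an implicit evaluation in a Kripke-style or Boolean semantics. For each sequent, the circuit computes which disjunct of the succedent is witnessed. It does this by induction over the proof, with an $\wedge/\vee$ gate at each $L\vee$ and $R\vee$ rule. Monotonicity follows because the $p$ atoms occur positively.

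\textbf{Main obstacle.} This interpolation step is where I expect the difficulty to lie. It must be done for dag-like proofs with gates per line, not per symbol, and this requires Hrubeš's trick: translating into a modal or positive disjunction-property structure, where each formula contributes one gate regardless of its size.

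\textbf{Conclusion.} Once the circuit is obtained, it interpolates $Color$ vs.\ $Clique$ in the sense of Theorem~\ref{Thm: Alon}. That theorem gives $l^{O(1)}\ge 2^{\Omega(n^{1/4})}$, and therefore $l\ge 2^{\Omega(n^{1/4})}$.
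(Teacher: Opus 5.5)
The paper imports this theorem from Hrube\v{s} without proof, and your two-part outline has the right shape. The lower bound, however, is not established. The line-counting monotone interpolation theorem for $\LJ$ is essentially the whole content of the result. You only name it (``Hrube\v{s}'s trick''), and the mechanism you do describe would fail, for four reasons.

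\emph{Cut.} Recording for each sequent which disjunct of the succedent is witnessed, with a gate at each $(L\vee)$ and $(R\vee)$, is the cut-free disjunction-property argument. An $\LJ$-proof may cut on arbitrary formulas, so intermediate succedents are typically implications or negations. There is no disjunct to record there, and your scheme says nothing about how information passes through a cut. Eliminating cuts is not an option, since it is exponential.

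\emph{Lines versus size.} Even a correct inference-by-inference extraction gives a circuit polynomial in the \emph{size} of the proof, whereas the theorem bounds \emph{lines}. A proof with $l$ lines can contain formulas of size $2^{\Omega(l)}$. For example, take $A_0=p$ and $A_{m+1}=A_m\wedge A_m$; then $p\Rightarrow A_{m+1}$ follows by $(R\wedge)$ from two uses of the single line $p\Rightarrow A_m$. You need an argument under which the circuit depends on only polynomially many formulas of the proof.

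\emph{Monotonicity.} It does not come from the polarity of $\bar p$ in $\Theta^{\bot}_n$. Cut formulas may contain $p_{i,j}$ in either polarity, so nothing in your construction controls how the intermediate lines depend on $\bar p$. Monotonicity has to be extracted from how the hypotheses $p_{i,j}\vee q_{i,j}$ are used.

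\emph{Specification.} The circuit must select a disjunct that is \emph{valid in the private variables}: $Color^k_n(\bar u,\bar s)$, resp.\ $Clique^{k+1}_n(\bar u,\bar r)$, must be unsatisfiable, as Theorem~\ref{Thm: Alon} requires. A disjunct that is merely true under $p=u$, $q=\neg u$ depends on $\bar s,\bar r$, which the circuit does not see.

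The validity argument also needs repair. Refuting $Color^k_n(\bar p,\bar s)\wedge Clique^{k+1}_n(\neg\bar q,\bar r)$ only yields $\neg(Color\wedge Clique)$, and $\neg(X\wedge Y)\to\neg X\vee\neg Y$ is not intuitionistically valid. Glivenko's theorem applies to a negation $\neg C$, not to $\neg X\vee\neg Y$. So the case split is indispensable, contrary to your claim that it ``needs no case split''.

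Keep only the branch argument, stated precisely. A branch fixes only which atoms among $\bar p,\bar q$ are asserted, not $\bar s$ or $\bar r$. In the branch with edge set $E$, there are two cases:
\begin{itemize}
\item If $E$ is not $k$-colorable, then $\bigwedge_{(i,j)\in E}p_{i,j}\wedge Color^k_n(\bar p,\bar s)$ is classically unsatisfiable. Glivenko then gives $\bigwedge_{(i,j)\in E}p_{i,j}\to\neg Color^k_n(\bar p,\bar s)$ intuitionistically.
\item Otherwise $E$ has no $(k+1)$-clique. Then $\bigwedge_{(i,j)\notin E}q_{i,j}\wedge Clique^{k+1}_n(\neg\bar q,\bar r)$ is classically unsatisfiable, and $\neg Clique^{k+1}_n(\neg\bar q,\bar r)$ follows in the same way.
\end{itemize}
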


The superscript $\bot$ in $\Theta^{\bot}_n$ highlights the presence of negations in the formula. To move further into the substructural setting, we require a negation-free variant of $\Theta^{\bot}_n$, introduced by Jeřábek~\cite{jerabek}. The main idea is to introduce fresh variables $s'_{i,l}$ and $r'_{i,l}$ as placeholders for $\neg s_{i,l}$ and $\neg r_{i,l}$, respectively:

\begin{definition} \cite[Definition 6.28]{jerabek} \label{def: alpha beta}
For $k \leq n$, let
\[
\alpha^k_n (\bar{p}, \bar{s}, \bar{s'}) := \bigvee_{i\in [n]} \bigwedge_{l \in [k]} s'_{i,l} \;\;\vee\;\; \bigvee_{i\neq j \in [n]} \bigvee_{l \in [k]} (s_{i,l} \wedge s_{j,l} \wedge p_{i,j}),
\]
\[
\beta^k_n (\bar{q}, \bar{r}, \bar{r'}) := \bigvee_{l \in [k]} \bigwedge_{i \in [n]} r'_{i,l} \;\;\vee\;\; \bigvee_{i \neq j \in [n]} \bigvee_{l \neq m \in [k]} (r_{i,l} \wedge r_{j,m} \wedge q_{i,j}).
\]
Define the negation-free version of Hrube\v{s}'s formulas by
\[
\Theta_{n,k}:=  \big(\bigwedge_{i,j} (p_{i,j} \vee q_{i,j})\big) \; \to 
\]
\[
[
(\bigwedge_{i,l}(s_{i,l} \vee s'_{i,l}) \to \alpha^k_n (\bar{p}, \bar{s}, \bar{s'})) \vee  (\bigwedge_{i,l}(r_{i,l} \vee r'_{i,l}) \to \beta^{k+1}_n (\bar{q}, \bar{r}, \bar{r'}))].
\]
Note that
\[
 Color^k_n(\bar{p}, \bar{s}) = \neg \alpha^k_n(\bar{p}, \bar{s}, \neg \bar{s}) \qquad 
Clique^k_n(\bar{p}, \bar{r}) = \neg \beta^k_n(\neg \bar{p}, \bar{r}, \neg \bar{r}).
\] 
\end{definition}

\begin{theorem}(\cite[Theorem 6.37]{jerabek}) \label{JerabekAsli}
The formulas $\Theta_n := \Theta_{n,\lfloor \sqrt{n} \rfloor}$ are intuitionistic tautologies and require $\LJ$-proofs with at least $2^{n^{\Omega(1)}}$ lines.
\end{theorem}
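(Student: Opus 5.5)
This theorem is Jeřábek's result, cited from \cite{jerabek}, so the plan is to reconstruct his argument. It has two parts: an intuitionistic-validity check, and a lower bound reduced to Hrubeš' Theorem~\ref{Hrubes} (or run in parallel with it) via a monotone feasible interpolation argument for $\LJ$.

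\emph{Validity.} Reason intuitionistically inside $\bigwedge_{i,j}(p_{i,j}\vee q_{i,j})$. The key point is that both inner premises $\bigwedge(s_{i,l}\vee s'_{i,l})$ and $\bigwedge(r_{i,l}\vee r'_{i,l})$ are conjunctions of disjunctions of atoms. So we may case-split on every disjunction $p_{i,j}\vee q_{i,j}$, which fixes a concrete graph $G$ (edge $ij$ present iff $p_{i,j}$ holds). For fixed $G$ the conclusion is a disjunction, and by a classical counting fact one disjunct holds outright.
\begin{itemize}
\item If $G$ has no $k$-coloring, then the first disjunct is provable: given the case-splits on $s_{i,l}\vee s'_{i,l}$, each branch yields a color assignment, and either some vertex gets no color (the $\bigwedge_l s'_{i,l}$ term) or some edge is monochromatic.
\item Otherwise $G$ is $k$-colorable, so it has no $(k+1)$-clique, and its complement, described by the $q$'s, provides what is needed for $\beta^{k+1}_n$. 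A branch choosing the $r$'s either leaves some position $l$ with $\bigwedge_i r'_{i,l}$, or gives two chosen vertices $i,j$ at distinct positions that are non-adjacent, hence $q_{i,j}$.
\end{itemize}
Every case-split is on an explicit disjunction, so the proof is intuitionistic. It is exponential in size, but validity is all we need here.

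\emph{Lower bound.} The plan is Hrubeš' method, in the form of Jeřábek's efficient disjunction-property argument. Suppose $\pi$ is an $\LJ$-proof of $\Theta_n$ with $L$ lines. The steps are:
\begin{enumerate}
\item Substitute constants for the $p$'s and $q$'s to obtain proofs of the disjunction.
\item Show that deciding which disjunct is provable is computable by a monotone circuit in the $p$'s of size polynomial in $L$.
\end{enumerate}
For step 2, use the Kripke-style/implicational-resolution framework. For each line $A\to B$ with $B$ a disjunction (up to subformulas), a monotone circuit decides which disjunct follows under the substituted assignment. Constructing the circuit gate by gate along the lines keeps the size polynomial in $L$ rather than in $|\pi|$.

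The decisive feature is that only positive information about $p$ and $q$ is used. Setting $q_{i,j}:=\neg p_{i,j}$ (classically), the selected disjunct then gives a monotone function $C(\bar p)$ that separates $k$-colorable graphs from graphs containing a $(k+1)$-clique. Theorem~\ref{Thm: Alon} forces $C$ to have size $2^{\Omega(n^{1/4})}$, so $L\ge 2^{n^{\Omega(1)}}$.

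\emph{Main obstacle.} The hardest step is the monotone interpolation for $\LJ$ measured in lines. The difficulty is that implication formulas in the antecedent ($\bigwedge(s\vee s')\to\alpha$) must be handled without uniform substitution blowing up. I would follow Hrubeš' construction and replace negation by the fresh atoms $s',r'$. That replacement is exactly what makes the negation-free $\Theta_n$ amenable, since the substituted instances stay monotone in $\bar p$.
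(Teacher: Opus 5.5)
The paper gives no proof of this theorem; it is imported from Je\v{r}\'abek. So the only question is whether your reconstruction stands on its own. It has two gaps.

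\textbf{Validity.} Your non-colorable case is fine; the problem is the $k$-colorable case. From ``$G$ has no $(k+1)$-clique'' you infer that two chosen vertices at distinct positions are non-adjacent. That only follows if the chosen vertices are pairwise distinct. A branch of the case split on the disjunctions $r_{i,l}\vee r'_{i,l}$ may put the same vertex at several positions, and $\beta^{k+1}_n$ as written in Definition~\ref{def: alpha beta} has no disjunct for this. In fact this branch refutes the formula. Take the following assignment:
\begin{itemize}
\item all $q_{i,j}$ true and all $p_{i,j}$ false;
\item all $s_{i,l}$ true and all $s'_{i,l}$ false;
\item $r_{1,l}$ true for every $l$, $r_{i,l}$ false for $i\neq 1$, and $r'$ the complement of $r$.
\end{itemize}
Every premise of $\Theta_{n,k}$ then holds, while $\alpha^k_n$ and $\beta^{k+1}_n$ are both false. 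So $\Theta_{n,k}$ as transcribed is not even a classical tautology. Your argument needs the injectivity disjunct $\bigvee_{i}\bigvee_{l\neq m}(r_{i,l}\wedge r_{i,m})$ in $\beta$, the counterpart of the clause $\neg r_{u,i}\vee\neg r_{v,i}$ of $Clique^k_n$. With it your case analysis closes. A repeated vertex triggers that disjunct. Otherwise the $k+1$ chosen vertices are distinct, so some pair is a non-edge of $G$, and the branch has $q_{i,j}$ for it.

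\textbf{Lower bound.} Your architecture is the right one:
\begin{itemize}
\item instantiate $\bar p:=\bar a$ and $\bar q:=\neg\bar a$;
\item extract a monotone $C(\bar p)$ such that $C(\bar a)=1$ forces the $\alpha$-disjunct to be valid and $C(\bar a)=0$ forces the $\beta$-disjunct to be valid;
\item put $\bar s':=\neg\bar s$ and $\bar r':=\neg\bar r$, and apply Theorem~\ref{Thm: Alon}.
\end{itemize}
But everything rests on your step~2, which is exactly the content of Hrube\v{s}'s and Je\v{r}\'abek's work that the paper cites as a black box. What you wrote there is a description, not an argument, and it leaves two things open.
\begin{itemize}
\item It does not say how a monotone circuit can select a valid disjunct. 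The circuit cannot just evaluate the disjuncts, since that would decide non-$k$-colorability. It has to be read off from $\pi$ through an efficient, monotone form of the disjunction property.
\item It does not justify the $L^{O(1)}$ size bound for building the circuit ``gate by gate along the lines''. In a dag-like proof, $L$ lines can carry formulas of size $2^{L}$; iterate $(R\wedge)$ on a single premise, going from $\Rightarrow A$ to $\Rightarrow A\wedge A$. You must first pass from line count to a circuit (extended-Frege) representation in which the whole proof has only $L^{O(1)}$ distinct subformulas to attach gates to.
\end{itemize}
Your ``main obstacle'' paragraph does not supply this. The atoms $\bar s',\bar r'$ are already part of $\Theta_n$, so they are a feature of the formula, not a step of the extraction. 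As it stands, your lower bound rests on the same citation as the paper's rather than reconstructing it.
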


Since we work in a substructural/linear setting, we require formulas provable not only in intuitionistic logic but also in weaker logics such as $\FL$. For this purpose, we use the formulas $\Theta^*_{n,k}$ introduced in \cite[Theorem 4.11]{Raheleh}\footnote{In \cite{Raheleh}, the formulas $\Theta^{*}_{n,k}$ are defined using $\setminus$ instead of $\to$, since the logics considered there do not necessarily admit exchange.}:
\[
\Theta^{*}_{n,k}:= [\bigast_{i,j} ((p_{i,j} \wedge 1) \vee (q_{i,j} \wedge 1))] \; \to 
\]
\[
\big([\bigast_{i,l} ((s_{i,l} \wedge 1) \vee (s'_{i,l} \wedge 1)) \to \alpha^k_n (\bar{p}, \bar{s}, \bar{s'})] 
\vee [\bigast_{i,l}((r_{i,l} \wedge 1) \vee  (r'_{i,l} \wedge 1)) \to \beta^{k+1}_n (\bar{q}, \bar{r}, \bar{r'})]\big).
\]

\begin{theorem}\cite[Corollary 5.4.]{Raheleh} \label{Thm: exp lower bound FLe}
The formulas $\Theta^{*}_{n,k}$ are provable in $\BFL$.
\end{theorem}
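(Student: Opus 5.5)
The plan is to give a cut-free $\BFL$-proof of $\Theta^{*}_{n,k}$. Its size may be exponential, since only provability is claimed. The guiding observation is that formulas of the form $x \wedge 1$ with $x$ an atom behave classically enough inside $\BFL$: they can be discarded (via $(L\wedge)$ to $1$ and then removing $1$ with $(1w)$), and the additive rule $(R\wedge)$ shares its context, so no contraction is ever needed.

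\emph{Key lemma.} Let $\Delta$ be a multiset of formulas $x \wedge 1$ with $x$ an atom, and let $\phi$ be a $\{\wedge,\vee\}$-formula over atoms that is true under the assignment making exactly the atoms occurring in $\Delta$ true. Then $\BFL \vdash \Delta \Rightarrow \phi$. The proof is by induction on $\phi$:
\begin{itemize}
\item \emph{atom $x$:} from $x \Rightarrow x$ get $x\wedge 1 \Rightarrow x$ by $(L\wedge)$; for every other member $y \wedge 1$ of $\Delta$ add $1$ by $(1w)$ and turn it into $y\wedge 1$ by $(L\wedge)$;
\item \emph{conjunction:} use $(R\wedge)$, whose two premises share the same context $\Delta$;
\item \emph{disjunction:} use $(R\vee_i)$ on a true disjunct.
\end{itemize}

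\emph{Building the proof.}
\begin{enumerate}
\item Apply $(R\to)$ and then $(L*)$ repeatedly, so that the antecedent becomes the multiset of disjunctions $(p_{i,j}\wedge 1)\vee(q_{i,j}\wedge 1)$.
\item Apply $(L\vee)$ to each of these. Every leaf then corresponds to a choice, for each pair $i\neq j$, of $p_{i,j}\wedge 1$ or $q_{i,j}\wedge 1$. Let $E$ be the set of pairs where $p$ was chosen; on the remaining pairs $q$ was chosen.
\item At each leaf, first decide which disjunct of the succedent to select with $(R\vee_i)$:
\begin{itemize}
\item If the graph $E$ is not $k$-colorable, choose the $\alpha$-disjunct.
\item Otherwise $E$ has no $(k+1)$-clique. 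In this case choose the $\beta$-disjunct; here the relevant graph is the set of pairs where $q$ was \emph{not} chosen, which is exactly $E$.
\end{itemize}
\item Next apply $(R\to)$ and $(L*)$ again, and then $(L\vee)$ to every $(s_{i,l}\wedge 1)\vee(s'_{i,l}\wedge 1)$ (respectively, every $r$-disjunction). Each resulting leaf has an antecedent $\Delta$ of the form required by the key lemma, with a total choice of $s$ or $s'$ (respectively $r$ or $r'$) for each index.
\end{enumerate}

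\emph{Checking the leaves.} Under the assignment making exactly the atoms of $\Delta$ true, the chosen disjunct is classically true:
\begin{itemize}
\item \emph{$\alpha$-branch:} the $s$-choices define a relation from vertices to colors. Either some vertex $i$ has all $s'_{i,l}$ chosen, which makes the first disjunct of $\alpha^k_n$ true, or every vertex has at least one color. In the latter case, picking one color per vertex gives a map $[n]\to[k]$, which is not a proper coloring of $E$. Hence some edge $\{i,j\}\in E$ has $s_{i,l}, s_{j,l}$ true, and $p_{i,j}$ is also true.
\item \emph{$\beta$-branch:} symmetrically, either some clique position $l$ has all $r'_{i,l}$ chosen, or each position $l\in[k+1]$ picks some vertex $i$ with $r_{i,l}$ true. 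This selection cannot be an injective map onto a $(k+1)$-clique in $E$. So some two positions $l \neq m$ pick vertices $i, j$ with $r_{i,l}, r_{j,m}$ true and $q_{i,j}$ true. This needs $i\neq j$; the case where two positions pick the same vertex must be absorbed by the shape of $\beta$.
\end{itemize}
The key lemma then closes every leaf.

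\emph{Expected obstacle.} The main difficulty is the exact combinatorial bookkeeping in the $\beta$-branch. The formula $\beta^{k+1}_n$ has disjuncts only for $i\neq j$, so I would verify carefully that non-injective selections (two positions picking the same vertex) are still covered. If they are not, I would choose the representative vertices differently, or use the freedom among several true $r_{i,l}$ to pick them more cleverly. Similarly, the indexing conventions in $\alpha$ must be checked so that a non-proper coloring yields a true disjunct. The structural part of the argument should be routine given the key lemma.
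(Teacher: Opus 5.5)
\textbf{The gap.} The problem is exactly the one you flag in the $\beta$-branch, and it cannot be closed by choosing representatives more cleverly. Some leaves are classically false when $\beta^{k+1}_n$ is taken literally as displayed, i.e.\ with the second disjunct ranging only over $i\neq j$ and no further disjuncts. Take the leaf where $q_{i,j}\wedge 1$ was chosen for every pair, so $E=\emptyset$ is $k$-colorable and you go to the $\beta$-disjunct. At the $r$-stage, choose $r_{1,l}\wedge 1$ for every $l\in[k+1]$ and $r'_{i,l}\wedge 1$ for every $i\geq 2$. No position has all of its $r'$-atoms, so the first disjunct fails. Every true $r$-atom has vertex index $1$, so no disjunct $r_{i,l}\wedge r_{j,m}\wedge q_{i,j}$ with $i\neq j$ holds. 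There is no freedom in picking representatives here.

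Worse, this yields a countermodel to the target formula itself, for all $n,k\geq 1$. Add $p_{i,j}$ false, $q_{i,j}$ true, $s_{i,1}$ and $s'_{i,l}$ ($l\geq 2$) true, and all other $s,s'$ false. This also falsifies $\bigwedge_{i,l}(s_{i,l}\vee s'_{i,l})\to\alpha^k_n$, so $\Theta_{n,k}$ is false under this assignment. The forgetful image of $\Theta^{*}_{n,k}$ is classically equivalent to $\Theta_{n,k}$, and $\BFL$ is classically sound. Hence $\Theta^{*}_{n,k}$, as literally defined in the paper, has no $\BFL$-proof by any strategy, with or without cut.

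\textbf{What is missing.} The displayed $\beta^k_n$ lacks the injectivity part of $\neg Clique$, namely the extra disjunct $\bigvee_{i\in[n]}\bigvee_{l\neq m\in[k]}(r_{i,l}\wedge r_{i,m})$. The paper's own remark $Clique^k_n=\neg\beta^k_n(\neg\bar p,\bar r,\neg\bar r)$ presupposes it, since $Clique^k_n$ contains the clauses $\neg r_{u,i}\vee\neg r_{v,i}$. With that disjunct your non-injective case is immediate. If your representatives satisfy $f(l)=f(m)=i$ for some $l\neq m$, then $r_{i,l}\wedge r_{i,m}$ is true at the leaf. Otherwise $f$ is injective, and your no-$(k+1)$-clique argument produces $l\neq m$ with $f(l)\neq f(m)$ and $q_{f(l),f(m)}\wedge 1$ in the antecedent.

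\textbf{The rest of your argument is sound.}
\begin{itemize}
\item The key lemma holds in $\BFL$: formulas $y\wedge 1$ are discarded via $(1w)$ and $(L\wedge)$, and $(R\wedge)$ shares its context, so no contraction is used.
\item The order $(R\to)$, $(L*)$, $(L\vee)$, $(R\vee_i)$, $(R\to)$, $(L*)$, $(L\vee)$ is legitimate in the single-conclusion calculus.
\item Your $\alpha$-branch is correct.
\end{itemize}
So for the corrected formulas you obtain a cut-free, exponential-size $\BFL$-proof. That suffices, since only provability is claimed, and the paper itself gives no proof but only cites the result.
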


The following corollary collects all the results presented in this section in a unified substructural setting that we will use later.

\begin{corollary}\label{RemarkLJAjab}
The formula $\Theta^{*}_{n}:=\Theta^{*}_{n,\lfloor \sqrt{n} \rfloor}$ is $\BFL$-provable and admits an $\LKu$-proof of size $n^{O(1)}$, while any $\LJajab$-proof of it has at least $2^{n^{\Omega(1)}}$ proof lines.
\end{corollary}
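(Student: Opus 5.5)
The corollary has three claims, and I would prove each one from the results already collected in this section.

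\emph{$\BFL$-provability.} This is Theorem~\ref{Thm: exp lower bound FLe} specialised to $k=\lfloor\sqrt n\rfloor$.

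\emph{Polynomial-size $\LKu$-proofs.} I would not build these from Jeřábek's $\LJ$-lower-bound result. Instead I would use the observation from the preliminaries that $\LKu$ has weakening and contraction. In $\LKu$, $A*B$ is provably equivalent to $A\wedge B$, and $1$ behaves as $\top$, so $X\wedge 1$ is equivalent to $X$. Hence $\Theta^*_n$ is equivalent, by $\LKu$-proofs of size polynomial in $|\Theta^*_n|$, to the classical formula $\Theta_n$. These equivalences come from short derivations of $A*B\Leftrightarrow A\wedge B$ and $A\wedge1\Leftrightarrow A$, propagated through the formula by the usual congruence argument. Its size is $O(|\Theta^*_n|^2)$, and $|\Theta^*_n|=n^{O(1)}$.

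It then remains to give a polynomial-size classical proof of $\Theta_n$. Classically, the disjunction $\bigvee(\ldots)$ in the consequent holds iff its negation is refutable. I would reduce $\Theta_n$ to the Clique--Color formula as follows. Assume $\bigwedge(p_{ij}\vee q_{ij})$ and suppose both disjuncts fail. Then there are assignments making $\bigwedge(s\vee s')\wedge\neg\alpha$ and $\bigwedge(r\vee r')\wedge\neg\beta$ true. From $\neg\alpha$ and $s\vee s'$ one derives $Color^k_n(\bar p,\bar s)$. From $\neg\beta$ and $r\vee r'$, one derives $Clique^{k+1}_n(\neg\bar q,\bar r)$, using $s'\to\neg s$-style reasoning to replace $\bar{s}'$ by $\neg \bar{s}$ and $\bar{r}'$ by $\neg \bar{r}$ where needed. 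Finally, $p\vee q$ gives $\neg q_{ij}\to p_{ij}$, so the clique is also a clique in $\bar p$.

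This contradicts the substitution instance of Theorem~\ref{Thm: Buss Clique Color}, which has a polynomial-size proof $\pi_n$. Each of the auxiliary derivations has size polynomial in $n$. Composing them with cuts gives an $\LK$-proof of size $n^{O(1)}$. This transfers to $\LKu$ with polynomial overhead by the remark on translating between $\LK$, $\LKu$, $\LKb$, and $\LKajab$.

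\emph{Lower bound for $\LJajab$.} Suppose $\pi$ is an $\LJajab$-proof of $\Theta^*_n$. Interpret $!$ as identity, $*$ as $\wedge$, $1$ as $\top$ and $0$ as $\bot$. By the same remark (now for the single-conclusion calculi), $\pi$ yields an $\LJ$-proof of the translated formula with only a linear increase in the number of lines. The translated formula differs from $\Theta_n$ only by $\wedge\top$ conjuncts, and removing them costs $n^{O(1)}$ extra lines. Theorem~\ref{JerabekAsli} then gives $l(\pi)\cdot O(1)+n^{O(1)}\ge 2^{n^{\Omega(1)}}$, which yields the bound.

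The main obstacle is the upper bound. Specifically, it is checking that the reduction from $\Theta_n$ to the Clique--Color formula (handling $\bar{s}'$, $\bar{r}'$ and the substitution $\bar p\mapsto\neg\bar q$) is polynomial. This is routine but slightly tedious bookkeeping.
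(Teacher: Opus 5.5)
Your proposal is correct and takes essentially the same route as the paper. You get $\BFL$-provability from Theorem~\ref{Thm: exp lower bound FLe}, and the upper bound from Buss's short proof of the Clique--Color formula together with a polynomial-size $\LKu$-equivalence between $\Theta_n$ and $\Theta^*_n$. You get the lower bound by collapsing $\LJajab$ to $\LJ$, removing the $\wedge 1$ (resp.\ $\wedge\top$) conjuncts at polynomial cost, and invoking Theorem~\ref{JerabekAsli}. The only differences are that you apply the forgetful translation before the cut rather than after it, and that you spell out the reduction the paper calls a ``simple variant.''

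One small slip in that reduction concerns the direction of the implication. What $s\vee s'$ gives you is $\neg s\to s'$. Combined with the monotonicity of $\alpha$ (resp.\ $\beta$) in the primed variables, this turns $\neg\alpha(\bar p,\bar s,\bar s')$ into $Color^k_n(\bar p,\bar s)$, and likewise $\neg\beta(\bar q,\bar r,\bar r')$ into $\neg\beta(\bar q,\bar r,\neg\bar r)$. The implication $s'\to\neg s$ that you cite is neither available nor needed.
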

\begin{proof}
The $\BFL$-provability is a direct consequence of Theorem \ref{Thm: exp lower bound FLe}. For the upper bound, since $\Theta_{n, k}$ is a simple variant of the Clique--Color formula and $k \leq n$, the sequent
\[
Clique^{k+1}_n(\bar{p}, \bar{r}) \to \neg Color^{k}_n(\bar{p}, \bar{s}) \Rightarrow \Theta_{n,k}
\]
has an $\LK$-proof of size $n^{O(1)}$.
Therefore, by setting $k=\lfloor \sqrt{n} \rfloor$ and Theorem~\ref{Thm: Buss Clique Color}, the sequent $\Rightarrow \Theta_n$ has an $\LK$-proof and hence an $\LKu$-proof of size $n^{O(1)}$. Finally, using the structural rules in $\LKu$, it is straightforward to obtain an $\LKu$-proof of $\Theta_n \Rightarrow \Theta_n^*$ of size $n^{O(1)}$. Consequently, we obtain an $\LKu$-proof of $\Rightarrow \Theta_n^*$ of size $n^{O(1)}$.

For the lower bound, let $\pi_n$ be an $\LJajab$-proof of $\Theta^*_n$. Using the weakening and contraction rules, one can easily construct an $\LJajab$-proof of $\Theta_n^* \Rightarrow \Theta_n$ with an $n^{O(1)}$ many number of lines. Hence, one obtains an $\LJajab$-proof of $\Rightarrow \Theta_n$ with $l(\pi_n)+n^{O(1)}$ lines. By erasing $!$-occurrences in this proof, one obtains an $\LJ$-proof of $\Rightarrow \Theta_n$ with $l(\pi_n)+n^{O(1)}$  lines. Therefore, by Theorem~\ref{JerabekAsli}, we conclude that $l(\pi_n) \geq 2^{n^{\Omega(1)}}$.
\end{proof}

\section{Affine Linear Logic}\label{Sec: Affine Linear Logic}

In this section, we prove the first of our main theorems. Namely, we identify a sequence of
$\mathbf{FL_e}$-provable formulas admitting polynomial-size $\LKu$-proofs whose shortest $\BALL$-proofs are exponentially long. 
Our strategy is to use a proof transformation that we call \emph{Chu's translation} to establish a form of feasible conservativity of $\BALL$ over $\BAILL$ for a class of formulas that includes $\Theta^*_n$, as introduced in Section~\ref{sec: HardTheorems}. Then, since any $\BAILL$-proof of a $!$-free formula is also an $\LJajab$-proof, we can apply Corollary~\ref{RemarkLJAjab}.

\subsection{Chu's Translation}
Chu's construction is a categorical recipe for building a $*$-autonomous category (i.e., a categorical model for the multiplicative fragment of $\mathbf{MALL}$) 
from a given symmetric monoidal closed category $\mathcal{C}$ (i.e., a categorical model for the multiplicative fragment of $\mathbf{IMALL}$) with a chosen ``dualizing'' object. Intuitively, it enriches $\mathcal{C}$ with a built-in duality, which is exactly the feature needed to model 
involution (see \cite{Barr,Chu}, \cite[Section 3.3]{Hyland}).

The following translation is inspired by Chu's construction, extended to cover full linear logic \cite[Theorem~24]{Hyland}. In fact, it can be seen as the result of applying this construction to the Lindenbaum algebra of an intuitionistic linear logic in order to obtain a model for its classical counterpart. 

\begin{definition}[Chu's translation]\label{Def: translation}
Fix formulas $D,N \in \mathcal{L}_!$ and define the translation functions $(\cdot)^t:\mathcal{L}_! \to \mathcal{L}_!$ and $(\cdot)^s: \mathcal{L}_! \to \mathcal{L}_!$ simultaneously and inductively as follows:\medskip

\noindent $\circ$ Translation $(\cdot)^t$:
\begin{itemize}
\item 
$p^t:=p$ \quad where $p$ is an atom
\item 
$1^t:=1$ \qquad $0^t:=D$ \qquad $\top^t:=\top$ \qquad $\bot^t:=\bot$
\item 
$(A \circ B)^t:=A^t \circ B^t$  for $\circ \in \{\wedge, \vee, *\}$
\item 
$(A \to B)^t:=(A^t \to B^t) \wedge (B^s \to A^s)$
\item 
$(!A)^t:=!A^t$ 
\end{itemize}

\noindent $\circ$ Translation $(\cdot)^s$:
\begin{itemize}
\item 
$p^s:= N$ \quad where $p$ is an atom 
\item 
$1^s:= D$ \qquad $0^s:= 1$ \qquad $\top^s:= \bot$ \qquad $\bot^s:= \top$ 
\item 
$(A \wedge B)^s:= A^s \vee B^s$ \qquad $(A \vee B)^s:= A^s \wedge B^s$
\item
$(A * B)^s:= (A^t \to B^s) \wedge (B^t \to A^s)$ 
\item 
$(A \to B)^s:= A^t \st B^s$ \qquad $(!A)^s:= !A^t \to D$
\end{itemize}
For any $\mathcal{L} \in \{\mathcal{L}_u, \mathcal{L}_b\}$, if $N, D \in \mathcal{L}$, the translations can be defined analogously over $\mathcal{L}$ by replacing $\mathcal{L}_!$ with $\mathcal{L}$ and restricting to formulas in $\mathcal{L}$.
\end{definition}
Note that the translations $(\cdot)^t$ and $(\cdot)^s$ are parameterised by $D$ and $N$, but we do not indicate this explicitly in order to keep the notation light. The intended choices of $D$ and $N$ will always be clear from the context.

\begin{remark}
The translations $(\cdot)^t$ and $(\cdot)^s$ may grow exponentially. Indeed, the clauses for $(A \to B)^t$ (resp.\ $(A * B)^s$) involve $A^t$, $A^s$, $B^t$, and $B^s$, which in effect cause the size to roughly double at each translation of an implication (resp.\ multiplication).   
\end{remark}



Let $\calL \in \{\calL_u, \calL_b, \calL_!\}$ and 
\[
L \in \{\CFL, \MALL, \AMALL, \CLL, \ALL\}
\]
with sequent calculus $G$. We write 
\[
iG_D:=iG+\{p, N \Rightarrow D \mid p \; \text{an atom}\},
\]
where $iG$ is the single-conclusion version of $G$.
The following lemma highlights a key property of Chu’s translation, namely that the translations $(\cdot)^t$ and $(\cdot)^s$ play \emph{dual} roles.

\begin{lemma}\label{Lem: iGD}
Let $\calL \in \{\calL_u, \calL_b, \calL_!\}$, let $A$ be an $\calL$-formula and $L \in \{\CFL, \allowbreak \MALL, \CLL\}$ with the calculus $G$. The following sequents have $iG_D$-proofs with $O(|A|)$ many lines:
\[
A^t, A^s \Rightarrow D \qquad (\neg A)^t \Leftrightarrow A^s \qquad (\neg A)^s \Leftrightarrow A^t.
\]
\end{lemma}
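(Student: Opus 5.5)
The plan is to prove the first sequent $A^t, A^s \Rightarrow D$ by induction on the structure of $A$, and then obtain the two equivalences directly from it using the definitions of $(\cdot)^t$ and $(\cdot)^s$ on $\neg A = A \to 0$. For the line count, the proof is built as a dag. Each subformula $B$ of $A$ contributes one proof of $B^t, B^s \Rightarrow D$, and that proof is reused wherever it is needed. If each inductive step adds only $O(1)$ new sequents, the total number of lines is $O(|A|)$. The formulas $A^t, A^s$ themselves may be exponentially large, but this does not matter, since only lines are counted.

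\emph{Base cases.} For an atom $p$, the sequent $p, N \Rightarrow D$ is an axiom of $iG_D$. For $1$, we need $1, D \Rightarrow D$, which follows from $D \Rightarrow D$ by $(1w)$. For $0$, we need $D, 1 \Rightarrow D$, obtained in the same way. For $\top$ and $\bot$ (only relevant over $\calL_b, \calL_!$, where $iG$ contains the axioms $(\top)$ and $(\bot)$), both $\top, \bot \Rightarrow D$ and $\bot, \top \Rightarrow D$ are instances of $(\bot)$.

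\emph{Inductive steps.} Each case uses only $O(1)$ rules applied to the induction hypotheses together with identity axioms.
\begin{itemize}
\item Case $\wedge$: we need $A^t \wedge B^t, A^s \vee B^s \Rightarrow D$. Apply $(L\vee)$ to the two sequents $A^t\wedge B^t, A^s \Rightarrow D$ and $A^t\wedge B^t, B^s \Rightarrow D$, each obtained by $(L\wedge_i)$ from the induction hypothesis.
\item Case $\vee$: this is dual. Apply $(L\vee)$ on $A^t \vee B^t$, then $(L\wedge_i)$ on $A^s \wedge B^s$.
\item Case $*$: from $A^t \Rightarrow A^t$ and $B^t, B^s \Rightarrow D$, the rule $(L\to)$ gives $A^t, B^t, A^t \to B^s \Rightarrow D$. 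Then $(L\wedge_0)$ and $(L*)$ yield $A^t * B^t, (A * B)^s \Rightarrow D$.
\item Case $\to$: from $A^t \Rightarrow A^t$ and $B^t, B^s \Rightarrow D$, the rule $(L\to)$ gives $A^t, B^s, A^t \to B^t \Rightarrow D$. Then $(L\wedge_0)$ and $(L*)$ give $(A\to B)^t, A^t * B^s \Rightarrow D$.
\item Case $!$: apply $(L\to)$ to $!A^t \Rightarrow !A^t$ and $D \Rightarrow D$, giving $!A^t, !A^t \to D \Rightarrow D$. This case does not even use the induction hypothesis.
\end{itemize}
All rules used are single-conclusion and free of contraction and weakening, so everything stays inside $iG_D$.

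\emph{The equivalences.} Compute $(\neg A)^t = (A^t \to D) \wedge (1 \to A^s)$ and $(\neg A)^s = A^t * 0^s = A^t * 1$.
\begin{itemize}
\item $A^s \Rightarrow (\neg A)^t$: apply $(R\to)$ to the first sequent to get $A^s \Rightarrow A^t \to D$. From $A^s \Rightarrow A^s$, $(1w)$ and $(R\to)$ give $A^s \Rightarrow 1 \to A^s$. Combine the two with $(R\wedge)$.
\item $(\neg A)^t \Rightarrow A^s$: apply $(L\to)$ to $\Rightarrow 1$ and $A^s \Rightarrow A^s$, then $(L\wedge_1)$.
\item $(\neg A)^s \Leftrightarrow A^t$: one direction is $(R*)$ applied to $A^t \Rightarrow A^t$ and $\Rightarrow 1$. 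The other is $(L*)$ after $(1w)$.
\end{itemize}
Each equivalence adds $O(1)$ lines on top of the first sequent.

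I do not expect a real obstacle. The only points needing care are, first, that the correct conjunct is selected in the $*$ and $\to$ cases so that the induction hypothesis for $B$ is what closes the branch, and second, that the dag-like sharing of induction-hypothesis proofs is made explicit so the line count is $O(|A|)$ rather than exponential.
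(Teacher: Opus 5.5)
Your proof is correct and follows essentially the same route as the paper: you prove $A^t, A^s \Rightarrow D$ by induction on $A$, adding $O(1)$ lines per connective, and then read off the two equivalences from $(\neg A)^t = (A^t \to D) \wedge (1 \to A^s)$ and $(\neg A)^s = A^t * 1$. The only cosmetic difference is in the $\to$ case: you close the branch with the first conjunct $A^t \to B^t$ and the hypothesis for the consequent, whereas the paper uses the second conjunct $B^s \to A^s$ and the hypothesis for the antecedent; both work, and since each induction hypothesis is used at most once, no dag-like sharing is actually needed.
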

\begin{proof} 

We treat the case $\mathcal{L}=\mathcal{L}_!$ and $L=\mathsf{LL}$. The remaining cases follow by the same argument, simply by restricting the language. We begin by showing that $iG_D \vdash A^t, A^s \Rightarrow D$. 
The proof proceeds by induction on the structure of the formula $A$. We first construct the proof in each case and then explain the upper bound for the number of lines at the end. If $A$ is an atom or a constant then $A^t, A^s \Rightarrow D$ will be of one of the forms $p, N \Rightarrow D$, or $\top, \bot \Rightarrow D$, or $1, D \Rightarrow D$, which are all provable in $iG_D$. 
For $A=B \circ C$, where $\circ \in \{\wedge, \vee, *, \to\}$, 
by the induction hypothesis for the formulas $B$ and $C$ we have:
\[
iG_D \vdash B^t, B^s \Rightarrow D \qquad iG_D \vdash C^t, C^s \Rightarrow D
\]
We only address the cases $\circ \in \{\wedge, \to\}$; the remaining cases are similar.
Let $A=B \wedge C$. By Definition \ref{Def: translation}, $A^t=B^t \wedge C^t$ and $A^s=B^s \vee C^s$.
Therefore, it is enough to consider
the following proof in $iG_D$ for $A^t, A^s \Rightarrow D$:
\begin{prooftree}
\AxiomC{$B^t, B^s \Rightarrow D$}
\UnaryInfC{$B^t \wedge C^t, B^s \Rightarrow D$}
\AxiomC{$C^t, C^s \Rightarrow D$}
\UnaryInfC{$B^t \wedge C^t, C^s \Rightarrow D$}
\BinaryInfC{$B^t \wedge C^t, B^s \vee C^s \Rightarrow D$}
\end{prooftree}
For $A=B \to C$, as $A^t=(B^t \to C^t) \wedge (C^s \to B^s)$ and $A^s=B^t * C^s$, we have the following proof: 
\begin{prooftree}
\AxiomC{$C^s \Rightarrow C^s$}
\AxiomC{$B^t, B^s \Rightarrow D$}
\BinaryInfC{$C^s \to B^s, B^t, C^s \Rightarrow D$}
\UnaryInfC{$(B^t \to C^t) \wedge (C^s \to B^s), B^t, C^s \Rightarrow D$}
\UnaryInfC{$(B^t \to C^t) \wedge (C^s \to B^s), B^t* C^s \Rightarrow D$}
\end{prooftree}
In the case that $A=!B$, as $(!B)^t=!B^t$, $(!B)^s=!B^t \to D$, and $iG_D \vdash !B^t, !B^t \to D \Rightarrow D$, we clearly have $iG_D \vdash (!B)^t, (!B)^s \Rightarrow D$.

Finally, regarding the number of lines, our construction yields proofs with a constant number of lines in the axiom cases. In each inductive step leading to a formula $A$, in the worst case we extend the proofs obtained from the induction hypothesis by only a constant number of formulas. Since the construction of the proof requires $O(|A|)$ inductive steps, the resulting proof has $O(|A|)$ lines.

To prove the other claims, i.e., $iG_D \vdash (\neg A)^t \Leftrightarrow A^s$ and $iG_D \vdash (\neg A)^s \Leftrightarrow A^t$, as $\neg A= A \to 0$, by Definition \ref{Def: translation} we have
\[
(\neg A)^t= (A^t \to D) \wedge (1 \to A^s) \qquad (\neg A)^s= A^t * 1.
\]
Now, consider the
the following proofs in $iG_D$:
\begin{prooftree}
\AxiomC{$\Rightarrow 1$}
\AxiomC{$A^s \Rightarrow A^s$}
\BinaryInfC{$1 \to A^s \Rightarrow A^s$}
\UnaryInfC{$(A^t \to D) \wedge (1 \to A^s) \Rightarrow A^s$}
\end{prooftree}
and 
\begin{prooftree}
\AxiomC{$A^s, A^t \Rightarrow D$}
\UnaryInfC{$A^s \Rightarrow A^t \to D$}
\AxiomC{$A^s \Rightarrow A^s$}
\UnaryInfC{$A^s, 1 \Rightarrow A^s$}
\UnaryInfC{$A^s \Rightarrow 1 \to A^s$}
\BinaryInfC{$A^s \Rightarrow (A^t \to D) \wedge (1 \to A^s)$}
\end{prooftree}
where the provability of the sequent $A^s, A^t \Rightarrow D$ follows from the previous claim.
Thus, $iG_D \vdash (\neg A)^t \Leftrightarrow A^s$. Also, we have $iG_D \vdash F * 1 \Leftrightarrow F$, for any $\calL$-formula $F$. Hence, $iG_D \vdash (\neg A)^s \Leftrightarrow A^t$. The linear upper bound on the number of lines in each case is straightforward.
\end{proof}

The translation $(\cdot)^t$ yields a method to transfer provable sequents from classical linear and substructural logics to their intuitionistic counterparts. More precisely, for any logic $L \in \{\CFL, \MALL, \CLL\}$ with the sequent calculus $G$, 
if $L \vdash A$ then $iG_D \vdash (\, \Rightarrow A^t)$, for any formula $A$. 
For our proof complexity purposes, we also need to control the complexity of this process. As the translations of formulas may grow exponentially, the natural complexity measure is the \emph{number of lines} in $iG_D$. However, as we will see, this number of lines depends on the \emph{size} of the Frege proof $\pi$ of $A$ instead of the number of lines of $\pi$. It is worth mentioning that a similar argument works if one uses sequent-style $G$-proofs. The reason we chose Frege systems is that their small number of rules makes the translation of proofs easier to control.

\begin{restatable}{theorem}{ThmCFLeToFLe} \label{thm: CFLe to FL_e}
Let $L \in \{\CFL, \MALL, \CLL\}$ with the sequent calculus $G$ and Frege system $L$-$\F$. Then, for any $L$-$\F$ proof $\pi$ of $A$, there exists an $iG_D$-proof $\sigma$ of $(\, \Rightarrow A^t)$ such that $l(\sigma) \leq O(|\pi|)$. In case $D=\bot$, the claim also holds for any $L \in \{\AMALL, \ALL\}$.  
\end{restatable}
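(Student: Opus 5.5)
The plan is to translate an $L$-$\F$ proof line by line. For each line $A_i$ of $\pi$, I will produce an $iG_D$-derivation of $(\Rightarrow A_i^t)$, reusing the derivations of earlier lines, since the target proofs are dag-like. Each axiom instance will be handled by a derivation whose number of lines is $O(|A_i|)$, and each rule application by $O(1)$ extra lines. Summing over the lines of $\pi$ then gives $l(\sigma) \le \sum_i O(|A_i|) + O(l(\pi)) = O(|\pi|)$. This is why the bound is in terms of $|\pi|$ rather than $l(\pi)$.

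\textbf{Axioms.} For each axiom scheme in Tables \ref{Fig: FL_e} and \ref{Fig: linear} belonging to $L$-$\F$, I must show that $(\Rightarrow X^t)$ has an $iG_D$-proof with $O(|X|)$ lines. Write the axiom as $X = B \to C$. Then $X^t = (B^t \to C^t) \wedge (C^s \to B^s)$, so by $(R\wedge)$ and $(R\to)$ it suffices to derive the two sequents $B^t \Rightarrow C^t$ and $C^s \Rightarrow B^s$. Each of these is built from the translations of the immediate subformulas, which are treated as opaque blocks. For example, $(\text{pf})$ and $(\text{per})$ reduce to reshuffling $t$- and $s$-components using only the intuitionistic rules of $iG$.

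The identity-like steps need care. A sequent $E \Rightarrow E$ for a large formula $E$ must not cost $|E|$ lines, so I will use $(id)$ directly on the compound formulas $A^t$ and $A^s$, which is allowed since $(id)$ is an axiom for arbitrary formulas. The unfolding will then touch only boundedly many connectives per schema, giving $O(1)$ lines for most axioms. The genuinely nontrivial axioms are the following.
\begin{itemize}
\item $(\text{dn})$ will be handled with Lemma~\ref{Lem: iGD}: the sequents $(\neg A)^t \Leftrightarrow A^s$ and $(\neg A)^s \Leftrightarrow A^t$ have $O(|A|)$-line proofs. Hence $(\neg\neg A)^t$ is equivalent to $A^t$ and $(\neg\neg A)^s$ to $A^s$, and both directions of the implication follow with $O(|A|)$ lines.
\item The $0$-related axioms use $0^t = D$ and $0^s = 1$.
\item The $!$-axioms $(!\text{K})$, $(!\text{T})$, $(!4)$, $(!w)$, $(!c)$ use the $!$-rules of $iG$ together with $(!A)^s = !A^t \to D$. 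For instance, in $(!\text{T})$ the $s$-part $A^s \Rightarrow (!A^t \to D)$ follows from $A^t, A^s \Rightarrow D$ (Lemma~\ref{Lem: iGD}) by $(L!)$ and $(R\to)$, and this costs $O(|A|)$ lines.
\item For the weakening axiom $(w)$ in $\AMALL$ and $\ALL$, the $s$-direction requires deriving something of the form $A^t * (B^t * A^s) \Rightarrow B^s$. This is obtained from $A^t, A^s \Rightarrow D$ followed by $D = \bot \Rightarrow B^s$, which is exactly why the statement requires $D = \bot$. Weakening in $iG$ then handles the rest.
\item $(\text{top})$ and $(\text{bot})$ use $\top^s = \bot$ and $\bot^s = \top$.
\end{itemize}

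\textbf{Rules.} For $(\mathrm{mp})$: from $(\Rightarrow A^t)$ and $(\Rightarrow (A \to B)^t)$, apply $(L\wedge)$ to the second to extract $A^t \Rightarrow B^t$ by a cut, then cut again to get $(\Rightarrow B^t)$. This takes $O(1)$ lines. For $(\mathrm{adj}_u)$ we have $(A \wedge 1)^t = A^t \wedge 1$, which follows by $(R\wedge)$ from $(\Rightarrow A^t)$ and $(\Rightarrow 1)$. For $(\text{nec})$, apply $(R!)$ with an empty context, using $(!A)^t = !A^t$.

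\textbf{Main obstacle.} I expect the main difficulty to be the systematic verification of the $s$-components of the axioms. The ones involving $*$ are the hardest: $(* \wedge)$, $(\to *)$ and $(* \to)$, because $(A*B)^s$ is itself a conjunction of implications. There I will have to produce the dual derivations, such as $(A \to (A*B))^s$-style sequents, using the adjunction between $*$ and $\to$ inside $iG$. I would tackle these by writing each translated axiom with the subformula translations as atoms, checking that it is an $\mathsf{FL_e}$-valid sequent (plus the $D$-axioms where needed), and noting that the derivation length is then bounded independently of $|A|$, apart from the appeals to Lemma~\ref{Lem: iGD}.
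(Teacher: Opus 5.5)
Your proposal is correct and follows essentially the same route as the paper. Both translate the $L$-$\F$ proof line by line, with each axiom instance $\alpha$ costing $O(|\alpha|)$ lines, Lemma~\ref{Lem: iGD} invoked for (dn), $(*\wedge)$, $(1\to)$, $(!w)$, $(!\text{T})$ and $(w)$, and $O(1)$ lines for each application of (mp), $(\mathrm{adj}_u)$ and (nec).

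One small slip concerns $(w)$. The sequent that needs $D=\bot$ is $A^t, A^s \Rightarrow B^s$, which comes from the conjunct $A^s \to B^s$ of $(B\to A)^t$ on the $t$-side. It is not $A^t * (B^t * A^s) \Rightarrow B^s$, and the $s$-side $B^t * A^s \Rightarrow A^s$ is plain weakening. The technique you actually describe, deriving from $A^t, A^s \Rightarrow \bot$ and then using $\bot \Rightarrow B^s$, is exactly what the paper uses.
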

\begin{proof}
The proof is a straightforward but tedious induction on $\pi$ and is therefore deferred to Section~\ref{Sec: Detailed Proofs}.
\end{proof}

\begin{remark}
Ignoring the upper bounds, Theorem~\ref{thm: CFLe to FL_e} is a consequence of applying Chu's construction to the Lindenbaum algebra of $iG_D$, taking the equivalence class of $D$ as the dualizing object. To keep the paper purely proof-theoretic, and thus more accessible, and to maintain control over the complexity measures, we chose to present the proof-theoretic proof.
\end{remark}

\subsection{Feasible Conservativity and Separation}\label{Sec: Feasible conservativity} 
In this subsection, we establish a form of feasible conservativity for sequent calculi of the logics in
$
\{\MALL, \AMALL, \CLL, \ALL\}
$
over their intuitionistic counterparts, for suitable classes of formulas as described below. We then use this form of conservativity to prove the promised separation result as the main theorem of this section.

\begin{definition} \label{Def: Conservative formulas}
Define the following sets of $\calL_!$-formulas simultaneously:
\begin{itemize}
\item 
(\emph{Fully conservative}) $B:= p \mid 1 \mid \top \mid B \wedge B \mid B \vee B \mid C \to B$, where $p$ is an atom and $C$ is conservative
\item 
(\emph{Conservative}) $C := \bot \mid B \mid C \wedge C \mid C \vee C \mid C * C \mid !C$, where $B$ is fully conservative.
\end{itemize}
A formula in the languages $\calL_b$ is called fully conservative or conservative if it is so in the extended language $\calL_!$. 
\end{definition}

\begin{remark}
Note that as any fully conservative formula is also conservative, the set of fully conservative formulas is closed under implication. \end{remark}

\begin{example}\label{Exam: conservative}
The formulas $p \wedge q$, $p \vee q$, and $p \to q$, where $p$ and $q$ are atoms, are all fully conservative, while $p * q$ and $!p$ are conservative but not fully conservative. Note that $(p * q) \to r$ and $!p \to r$ are fully conservative. Moreover, $0$ is neither conservative nor fully conservative and $\bot$ is conservative but not fully conservative.
\end{example}

The following lemma shows that if $D = N = \bot$, then the translation $t$ does not change a conservative formula, and if the formula is also fully conservative, then $s$ maps it to falsehood, $\bot$. We hope this clarifies our terminology.

\begin{lemma}\label{lem: basic conservative}
Fix $D=N=\bot$ and let $B$ be a fully conservative and $C$ be a conservative $\calL_b$-formula. Then, there are $\pi_1$, $\pi_2$, $\sigma$, $\tau_1$, and $\tau_2$ such that:
\begin{itemize}
    \item 
$\BIMALL \vdash^{\pi_1} B^t \Rightarrow B$ and $\BIMALL \vdash^{\pi_2} B \Rightarrow B^t$,
\item 
$\BIMALL \vdash^{\sigma} B^s \Rightarrow \bot$,
\item 
$\BIMALL \vdash^{\tau_1} C^t \Rightarrow C$ and $\BIMALL \vdash^{\tau_2} C \Rightarrow C^t$, and
\end{itemize}
$l(\pi_i) \leq O(|B|^2)$, $ l(\sigma) \leq O(|B|)$, and $l(\tau_i) \leq O(|C|^2)$, for any $i \in \{1, 2\}$.
If $B$ and $C$ are $\calL_!$-formulas, then the same results hold when replacing $\BIMALL$ by $\BILL$.
\end{lemma}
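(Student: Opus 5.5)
We argue by simultaneous induction on the grammar of Definition~\ref{Def: Conservative formulas}, proving all five claims together. With $D=N=\bot$, the axioms $p, N \Rightarrow D$ become $p,\bot\Rightarrow\bot$, which are instances of $(\bot)$. So all reasoning takes place in $\BIMALL$ (resp.\ $\BILL$) itself. The induction runs over the structure of $B$ and $C$. At each step we cite the hypotheses for immediate subformulas: $\sigma$ for fully conservative ones, and $\pi_i,\tau_i$ for the appropriate class.

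\emph{Base cases.} For $p$ we have $p^t=p$, and $p^s=N=\bot$, so $\bot\Rightarrow\bot$ is an axiom. For $1$ we have $1^t=1$ and $1^s=D=\bot$. For $\top$ we have $\top^t=\top$ and $\top^s=\bot$. For $\bot$ (conservative only) we have $\bot^t=\bot$. Each of these takes constantly many lines.

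\emph{Fully conservative inductive cases.} For $B_1\wedge B_2$ and $B_1\vee B_2$, the $t$-translation commutes with the connective, so $\pi_1,\pi_2$ follow from the hypotheses by $(L\wedge),(R\wedge)$, resp.\ $(L\vee),(R\vee)$. For $\sigma$, the $s$-translations are $B_1^s\vee B_2^s$ and $B_1^s\wedge B_2^s$, and these reduce to $\bot$ by $(L\vee)$, resp.\ $(L\wedge)$. The key case is $C\to B$, where $(C\to B)^t=(C^t\to B^t)\wedge(B^s\to C^s)$ and $(C\to B)^s=C^t* B^s$.
\begin{itemize}
\item[$\cdot$] $\sigma$: from $B^s\Rightarrow\bot$, weaken by $C^t$ on the left with the $(\bot)$ axiom form, i.e.\ prove $C^t,B^s\Rightarrow\bot$ via cut with $\bot\Rightarrow\bot$. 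Since there is no weakening in $\BIMALL$, it is cleaner to use $C^t,\bot\Rightarrow\bot$, an instance of $(\bot)$, and cut it against $B^s\Rightarrow\bot$. Then apply $(L*)$.
\item[$\cdot$] $\pi_1$: project to $C^t\to B^t$, then use $\tau_2$ for $C$ and $\pi_1$ for $B$ via $(L\to)$ and $(R\to)$.
\item[$\cdot$] $\pi_2$: for the first conjunct use $\tau_1$ for $C$ and $\pi_2$ for $B$. For the second conjunct we need $C\to B\Rightarrow B^s\to C^s$. It suffices to derive $C\to B, B^s\Rightarrow C^s$, and this is obtained by cutting $B^s\Rightarrow\bot$ (from $\sigma$) against the axiom $C\to B,\bot\Rightarrow C^s$. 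This is the one place where $\sigma$ is essential, and it explains why the three claims must be proved simultaneously. It is also why the hypothesis of the implication must be fully conservative: we need $B^s$ to collapse to $\bot$.
\end{itemize}

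\emph{Conservative inductive cases.} For $\wedge,\vee,*$ the $t$-translation commutes with the connective, so $\tau_1,\tau_2$ follow componentwise using the rules for that connective. For $!C$ we have $(!C)^t=!C^t$. From $C^t\Rightarrow C$ we get $!C^t\Rightarrow C$ by $(L!)$, and then $!C^t\Rightarrow !C$ by $(R!)$. The converse is symmetric. The base case for $\bot$ was handled above, and the case of a fully conservative $B$ is covered by $\pi_1,\pi_2$. Note that $s$-translations of conservative non-fully-conservative formulas never need to be controlled, since only fully conservative formulas occur as consequents of implications.

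\emph{Line counts.} Each $\sigma$-step adds $O(1)$ lines, which gives $O(|B|)$ in total. For $\pi_i$ and $\tau_i$, the implication case reuses $\sigma$ for $B$, costing $O(|B|)$ extra lines. Summing $O(|F|)$ over the $O(|F|)$ subformula steps gives $O(|F|^2)$. The main obstacle is just this implication case, in particular the second conjunct of $\pi_2$. Everything else is routine.
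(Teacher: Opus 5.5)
Your proposal is correct and follows the paper's simultaneous induction essentially step for step. In particular, you handle the implication case $C\to B$ exactly as the paper does: $\sigma$ comes from cutting $B^s\Rightarrow\bot$ against $C^t,\bot\Rightarrow\bot$ and then applying $(L*)$, and the second conjunct of $\pi_2$ comes from cutting $\sigma$ against $C\to B,\bot\Rightarrow C^s$, with the same $O(|B|)$ and $O(|B|^2)$ line counts. One small wording slip: it is the \emph{consequent} $B$ of the implication, not its hypothesis, that must be fully conservative so that $B^s$ collapses to $\bot$ (you state this correctly later).
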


\begin{proof}
We only present the case of $\BIMALL$; the case of $\BILL$ is similar. For $\BIMALL$, we construct $\pi_i$'s, $\sigma$, and $\tau_i$'s by a simultaneous recursion on the structure of the formulas $B$ and $C$. We then show that the number of lines of the constructed proofs is bounded.

The construction of proofs in the case where $B$ (resp. $C$) is an atom or a constant is trivial, since by Definition~\ref{Def: translation} and Definition~\ref{Def: Conservative formulas} we have $B^t=B$, $B^s=\bot$, and $C^t=C$. The construction of proofs for $B = B_1 \circ B_2$, where $\circ \in \{\wedge, \vee\}$, is straightforward using Definition~\ref{Def: translation} and the proofs for $B_1$ and $B_2$. Similarly, the proofs for $C = C_1 \circ C_2$, where $\circ \in \{\wedge, \vee, *\}$ (and $C =\, !C$ for $\BILL$), are straightforward to construct. The only interesting case is when $B= C \to A$, where $C$ is a conservative formula and $A$ is fully conservative.
By Definition \ref{Def: translation}, $(C \to A)^t= (C^t \to A^t) \wedge (A^s \to C^s)$. Now, use the proofs $\tau_2$ for $C$ and $\pi_1$ for $A$, and set $\tau_1$ for $C \to A$ as the following:
\begin{prooftree}
\AxiomC{$\tau_2$}
\noLine
\UnaryInfC{$C \Rightarrow C^t$}
\AxiomC{$\pi_1$}
\noLine
\UnaryInfC{$A^t \Rightarrow A$}
\BinaryInfC{$C^t \to A^t, C \Rightarrow A$}
\UnaryInfC{$C^t \to A^t \Rightarrow C \to A$}
\UnaryInfC{$(C^t \to A^t) \wedge (A^s \to C^s) \Rightarrow C \to A$}
\end{prooftree}
Similarly, using the proofs $\tau_1$ for $C$, and $\pi_2$ and $\sigma$ for $A$, we can set $\tau_2$ for $C \to A$ as:
\begin{prooftree}

\AxiomC{$\tau_1$}
\noLine
\UnaryInfC{$C^t \Rightarrow C$}

\AxiomC{$\pi_2$}
\noLine
\UnaryInfC{$A \Rightarrow A^t$}
\BinaryInfC{$ C \to A, C^t \Rightarrow A^t$}
\UnaryInfC{$ C \to A \Rightarrow C^t \to A^t$}

\AxiomC{$\sigma$}
\noLine
\UnaryInfC{$A^s \Rightarrow \bot$}
\AxiomC{$ C \to A, \bot \Rightarrow C^s$}
\BinaryInfC{$ C \to A, A^s \Rightarrow C^s$}
\UnaryInfC{$ C \to A \Rightarrow A^s \to C^s$}

\BinaryInfC{$ C \to A \Rightarrow (C^t \to A^t) \wedge (A^s \to C^s)$}
\end{prooftree}
Finally, by Definition \ref{Def: translation}, we have $(C \to A)^s=C^t * A^s$. Using the proof $\sigma$ for $A$, we can set $\sigma$ for $C \to A$ as the following:
\begin{prooftree}
\AxiomC{$\sigma$}
\noLine
\UnaryInfC{$A^s \Rightarrow \bot$}
\AxiomC{$ C^t, \bot \Rightarrow \bot$}
\BinaryInfC{$C^t, A^s \Rightarrow \bot$}
\UnaryInfC{$C^t*A^s \Rightarrow \bot$}
\end{prooftree}
For the upper bound on the number of lines, we first prove that $l(\sigma) \leq O(|B|)$. When $B$ is atomic or constant, the corresponding proof consists of a single axiom and hence has only constantly many lines. At each inductive step, we add only constantly many new formulas. Therefore, the total number of proof lines is $O(|B|)$.

For the bounds $l(\pi_i) \leq O(|B|^2)$ and $l(\tau_i) \leq O(|C|^2)$, where $i \in \{1, 2\}$, again, when $B$ or $C$ is atomic or constant, the corresponding proofs have only constantly many lines. At each inductive step, we add only constantly many new formulas to the proofs for the immediate subformulas, except in the case where $B = C \to A$, where we also require the proof $\sigma$ of $A^s \Rightarrow \bot$. Since the latter has $O(|A|)$ many lines, each such step contributes $O(|B|)$ many lines. Hence, in total, $\pi_i$ has $O(|B|^2)$ many lines.
\end{proof}

The following theorem establishes the promised form of feasible conservativity with respect to conservative formulas.

\begin{theorem}\label{Thm: Conservativity}
Let $G \in \{\BMALL, \BAMALL, \BCLL, \BALL\}$. Then, for any conservative formula $A$ and any $G$-proof $\pi$ of $A$ there is an $iG$-proof $\sigma$ of $A$ such that $l(\sigma) \leq |\pi|^{O(1)}$.
\end{theorem}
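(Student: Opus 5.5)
Fix $D=N=\bot$ throughout and combine Theorem~\ref{thm: CFLe to FL_e} with Lemma~\ref{lem: basic conservative}. With this choice, the extra axioms $p, N \Rightarrow D$ of $iG_D$ become $p,\bot \Rightarrow \bot$. These are instances of the axiom $(\bot)$, which is available in every $iG$ with $G \in \{\BMALL, \BAMALL, \BCLL, \BALL\}$ since all these calculi live over $\calL_b$ or $\calL_!$. So $iG_D$ coincides with $iG$, up to replacing each extra axiom by a single axiom line. The weakening cases $\BAMALL$ and $\BALL$ are exactly those covered by the clause ``in case $D=\bot$'' of Theorem~\ref{thm: CFLe to FL_e}. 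That is why we fix $D=\bot$ uniformly, and $N=\bot$ is what Lemma~\ref{lem: basic conservative} needs.

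\textbf{Step 1 (pass to Frege).} Let $\pi$ be a $G$-proof of $\Rightarrow A$. By Lemma~\ref{Lem: Equivalence Of Frege}, there is an $L$-$\F$ proof $\pi'$ of $A$ with $|\pi'| \le |\pi|^{O(1)}$, where $L$ is the logic of $G$.

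\textbf{Step 2 (translate).} By Theorem~\ref{thm: CFLe to FL_e}, there is an $iG_D$-proof, hence an $iG$-proof, $\sigma_0$ of $\Rightarrow A^t$ with $l(\sigma_0) \le O(|\pi'|) \le |\pi|^{O(1)}$.

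\textbf{Step 3 (undo the translation).} Since $A$ is conservative, Lemma~\ref{lem: basic conservative} gives a proof $\tau_1$ of $A^t \Rightarrow A$ in $\BIMALL$ (or in $\BILL$ when $A \in \calL_!$), with $l(\tau_1) \le O(|A|^2)$. Both of these are subsystems of $iG$. A single cut of $\Rightarrow A^t$ against $A^t \Rightarrow A$ then yields an $iG$-proof $\sigma$ of $\Rightarrow A$ with
\[
l(\sigma) \le l(\sigma_0) + O(|A|^2) + 1 \le |\pi|^{O(1)},
\]
using $|A| \le |\pi|$.

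The main obstacle is bookkeeping rather than any new idea. First, the bound is on the number of lines, not on size, because the translated formulas $A^t$ may be exponentially large; the statement only claims a line bound, so this is fine. Second, we must check that Lemma~\ref{lem: basic conservative} and Theorem~\ref{thm: CFLe to FL_e} use the same parameters $D,N$. Third, the polynomial-size simulation of $G$ by $L$-$\F$ in Lemma~\ref{Lem: Equivalence Of Frege} must be stated in terms of size, since Theorem~\ref{thm: CFLe to FL_e} depends on $|\pi'|$ and not on $l(\pi')$. This is the step I would double-check, along with confirming that the $(\bot)$ axiom indeed absorbs $p,\bot\Rightarrow\bot$ in each of the four single-conclusion calculi.
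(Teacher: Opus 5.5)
Your proposal is correct and follows the paper's proof step for step. You set $D=N=\bot$ and pass to an $L$-$\F$ proof via Lemma~\ref{Lem: Equivalence Of Frege} (a size simulation, as needed). You then apply Theorem~\ref{thm: CFLe to FL_e}, absorb the extra axioms $p,\bot\Rightarrow\bot$ as instances of $(\bot)$, and cut against the $O(|A|^2)$-line proof of $A^t \Rightarrow A$ from Lemma~\ref{lem: basic conservative}. The paper writes out only $G=\BMALL$, so your explicit use of the $D=\bot$ clause for the weakening cases and of $\BILL$ for $\calL_!$-formulas is extra care, not a deviation.
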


\begin{proof}
We prove the case $G=\BMALL$; the remaining cases are similar. Set $D=N=\bot$, and let $\pi$ be a $G$-proof of $(\, \Rightarrow A)$, where $A$ is a conservative formula. Since $\MALL$-$\F$ feasibly simulates $\BMALL$ by Lemma \ref{Lem: Equivalence Of Frege}, there exists a $\MALL$-$\F$-proof $\pi'$ of $A$ with $|\pi'|\leq |\pi|^{O(1)}$. By Theorem~\ref{thm: CFLe to FL_e}, there is an $iG_D$-proof $\sigma$ of $(\, \Rightarrow A^t)$ with $l(\sigma)\leq O(|\pi'|)\leq |\pi|^{O(1)}$. As $D=N=\bot$, the axiom $p,N\Rightarrow D$ is provable in $\BIMALL$ with constantly many lines; hence, without loss of generality, we can assume that $\sigma$ is an $\BIMALL$-proof. By Lemma~\ref{lem: basic conservative}, there is an $\BIMALL$-proof $\rho$ of $A^t \Rightarrow A$ such that $l(\rho) \leq O(|A|^2)$. Since $|A| \leq |\pi|$, we obtain an $\BIMALL$-proof of $(\, \Rightarrow A)$ with $|\pi|^{O(1)}$ many lines.
\end{proof}

\begin{remark}
Here are two remarks. First, note the role of \emph{feasibility} in Theorem~\ref{Thm: Conservativity}. Relaxing the feasibility condition, there are some standard conservativity results of the classical systems over their corresponding intuitionistic version (see \cite{kanovich2019, schellinx}). These are typically proved via cut elimination, which leads to an elementary blow-up in proof size and is therefore not suitable for establishing feasible conservativity. The main advantage of Chu's translation is that it allows us to prove conservativity without relying on cut elimination, since the translation handles the cut rule directly.
Second, in proof complexity, there are usually two essentially equivalent ways to handle large formulas in proofs: either by considering the number of formulas in the proof (i.e., the number of lines) instead of the total size, or by allowing extension variables, i.e., new atomic formulas introduced to represent large formulas. We adopt the former for its relative simplicity, though the latter approach is also possible.
\end{remark}

Theorem~\ref{Thm: Conservativity} allows us to transfer exponential lower bounds on the number of proof lines from the intuitionistic setting to lower bounds on proof size in the classical setting.

\begin{theorem} \label{Thm: main}
The $\BFL$-provable sequents $(\Rightarrow \Theta^{*}_n)$ have $\LKu$-proofs of size $n^{O(1)}$, while any of their $\BALL$-proofs has size at least $2^{n^{\Omega(1)}}$. The same result holds when replacing $\LKu$ and $\BALL$ with the Frege systems for $\CPC$ and $\ALL$, respectively.
\end{theorem}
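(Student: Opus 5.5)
The plan is to combine Corollary~\ref{RemarkLJAjab} with the feasible conservativity of Theorem~\ref{Thm: Conservativity}. The $\BFL$-provability of $\Theta^*_n$ and the upper bound $n^{O(1)}$ for $\LKu$-proofs are already given by Corollary~\ref{RemarkLJAjab}. So the substantive part is the lower bound for $\BALL$.

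The first step is to check that $\Theta^*_n$ is a conservative formula in the sense of Definition~\ref{Def: Conservative formulas}; in fact we want it to be fully conservative. The formulas $\alpha^k_n$ and $\beta^{k+1}_n$ are built from atoms using only $\wedge$ and $\vee$, so they are fully conservative. Each factor $(x\wedge 1)\vee(y\wedge 1)$ is also fully conservative, since atoms and $1$ are. A $*$-product of such factors is therefore conservative, because conservative formulas are closed under $*$. It follows that
\[
\bigast_{i,l}\bigl((s_{i,l}\wedge 1)\vee(s'_{i,l}\wedge 1)\bigr)\to \alpha^k_n
\]
has the form $C\to B$ and is fully conservative, and the same holds for the $\beta$-part. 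Their disjunction is fully conservative, and so is the outer implication from the conservative product $\bigast_{i,j}(\dots)$. Hence $\Theta^*_n$ is fully conservative, and in particular conservative.

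Now let $\pi$ be a $\BALL$-proof of $(\Rightarrow\Theta^*_n)$. Theorem~\ref{Thm: Conservativity} applied to $G=\BALL$ gives a $\BAILL$-proof $\sigma$ of $\Theta^*_n$ with $l(\sigma)\le|\pi|^{O(1)}$. Every rule of $\BAILL$ is a rule of $\LJajab$ (single-conclusion $\LKajab$ with contraction simply not used), so $\sigma$ is also an $\LJajab$-proof. Corollary~\ref{RemarkLJAjab} then gives $l(\sigma)\ge 2^{n^{\Omega(1)}}$, so $|\pi|^{O(1)}\ge 2^{n^{\Omega(1)}}$, i.e.\ $|\pi|\ge 2^{n^{\Omega(1)}}$. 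Since $|\Theta^*_n|=n^{O(1)}$, this bound is also exponential in the size of the formula.

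For the Frege version, Lemma~\ref{Lem: Equivalence Of Frege} says that $\ALL$-$\F$ and $\BALL$ polynomially simulate each other. A short $\ALL$-$\F$-proof would therefore give a short $\BALL$-proof, and the lower bound transfers. Likewise, the $\CPC$ Frege system and $\LK$ are polynomially equivalent, and $\LK$ and $\LKu$ translate into each other polynomially on sequents of the common language. After replacing $*$ by $\wedge$ and $1$ by $\top$, the formula $\Theta^*_n$ becomes equivalent to a $\CPC$ formula, so the upper bound transfers as well.

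I expect the only delicate point to be the bookkeeping. The composition of polynomial bounds is harmless. What needs care is matching the clauses of Definition~\ref{Def: Conservative formulas}, in particular the requirement that the antecedent of every implication be conservative and its consequent fully conservative. One must also confirm that $0$ and negation do not occur in $\Theta^*_n$, which holds by Jeřábek's negation-free encoding.
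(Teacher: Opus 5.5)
Your proposal is correct and follows essentially the same route as the paper. Both check that $\Theta^*_n$ is fully (hence) conservative, apply Theorem~\ref{Thm: Conservativity} to get a $\BAILL$-proof with $|\pi|^{O(1)}$ lines, read it as an $\LJajab$-proof to invoke the line lower bound of Corollary~\ref{RemarkLJAjab}, and obtain the Frege versions from Lemma~\ref{Lem: Equivalence Of Frege}.
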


\begin{proof}
By Corollary \ref{RemarkLJAjab}, we know that $\Theta^*_n$ has an $\LKu$-proof of size $n^{O(1)}$. For the exponential lower bound, first recall that 
\[
\Theta^\st_n=\bigast_{i,j} A_{i,j} \to \big((\bigast_{i,l} M_{i,l} \to \alpha_n) \vee (\bigast_{i,l} N_{i,l} \to \beta_n) \big)
\]
where 
\[
A_{i,j}:= (p_{i,j} \wedge 1) \vee (q_{i,j} \wedge 1) \qquad M_{i,l}:= (s_{i,l} \wedge 1) \vee (s'_{i,l} \wedge 1)
\]
\[
N_{i,l}:= (r_{i,l} \wedge 1) \vee  (r'_{i,l} \wedge 1) 
\]
and
\[
\alpha_n:= \alpha^{\lfloor \sqrt{n} \rfloor}_n (\bar{p}, \bar{s}, \bar{s'}) \qquad \beta_n:= \beta^{\lfloor \sqrt{n} \rfloor+1}_n (\bar{q}, \bar{r}, \bar{r'})
\]
where
\[
\alpha^k_n (\bar{p}, \bar{s}, \bar{s'}) := \bigvee_{i\in [n]} \bigwedge_{l \in [k]} s'_{i,l} \;\;\vee\;\; \bigvee_{i\neq j \in [n]} \bigvee_{l \in [k]} (s_{i,l} \wedge s_{j,l} \wedge p_{i,j}),
\]
\[
\beta^k_n (\bar{q}, \bar{r}, \bar{r'}) := \bigvee_{l \in [k]} \bigwedge_{i \in [n]} r'_{i,l} \;\;\vee\;\; \bigvee_{i \neq j \in [n]} \bigvee_{l \neq m \in [k]} (r_{i,l} \wedge r_{j,m} \wedge q_{i,j}).
\]
By Definition \ref{Def: Conservative formulas}, it is clear that all formulas
$A_{i,j}$, $M_{i,l}$, $ N_{i,l}$, $\alpha_n$, and $\beta_n$ are fully conservative and hence conservative. Therefore, the formulas
$\bigast_{i,j} A_{i,j}$, $\bigast_{i,l} M_{i,l}$, and $ \bigast_{i,l} N_{i,l}$ are all conservative.
Thus,
\[
(\bigast_{i,l} M_{i,l} \to \alpha_n) \vee (\bigast_{i,l} N_{i,l} \to \beta_n)
\]
is fully conservative, which implies that the formula $\Theta^\st_n$ is fully conservative and hence conservative.

Now, assume $\BALL \vdash^{\pi_n}\,  \Rightarrow \Theta^\st_n$. As $\Theta^\st_n$  is conservative, by Theorem \ref{Thm: Conservativity}, $\BAILL \vdash^{\sigma_n} \, \Rightarrow \Theta^\st_n$, where $l(\sigma_n) \leq |\pi_n|^{O(1)}$. As $\LJ_!$ is an extension of $\BAILL$, we can say that $\sigma_n$ is also a proof in $\LJ_!$. By Corollary \ref{RemarkLJAjab}, we obtain $|\pi_n|^{O(1)} \geq l(\sigma_n) \geq 2^{n^{\Omega(1)}}$. 
Thus, $|\pi_n| \geq 2^{n^{\Omega(1)}}$. For the claim concerning Frege systems, note that by Lemma \ref{Lem: Equivalence Of Frege}, the Frege systems and sequent calculi for $\CPC$ (resp. for $\ALL$) are equivalent.
\end{proof}
\begin{remark}
Note that Theorem~\ref{Thm: main} implies that any $G$-proof of $(\, \Rightarrow \Theta^*_n)$ has size at least $2^{n^{\Omega(1)}}$, where $G$ is any of the sequent calculi $\BCFL$, $\BCFLw$, $\BMALL$, $\BAMALL$, or $\BCLL$, since they are all subsystems of $\BALL$. The same lower bound also holds for proofs of the formula $\Theta^*_n$ in the corresponding Frege systems for the same reason.
\end{remark}

\section{Cut-free Calculi} \label{Sec: cut-free}
In this section, we assess the strength of the cut rule. We show that removing cut from the already powerful system $\LKajab$ leads to a substantial loss of power: the cut-free system $\mathbf{LK^-_!}$ is exponentially weaker than $\BFL$ for $\mathbf{FL_e}$-provable sequents. More precisely, we provide a family of $\mathbf{FL_e}$-provable sequents that have polynomial-size proofs in $\BFL$ but require exponential-size proofs in $\mathbf{LK^-_!}$. Thus, even with weakening and contraction, the absence of cut cannot be compensated.  

Our strategy is to build an extension $G$ of $\mathbf{FL_e}$ with simple initial sequents to feasibly mimic classical reasoning in $\LK$. Using this connection, we import the known short $\mathbf{LK}$-proof of the Clique–Color formula into $G$. Then, by transforming the initial sequents of $G$ into formulas in the antecedent in the usual way, we obtain a version of the Clique–Color formula with short proofs in $\mathbf{FL_e}$. Finally, we use monotone feasible interpolation for $\mathbf{LK}^{-}$ to show that any short $\mathbf{LK}_!^{-}$-proof of these sequents would yield a small interpolating monotone circuit for the Clique–Color formula, which is impossible.

We begin by introducing formulas in negation normal form in both classical and substructural settings.

\begin{definition}\label{Def: Lnn}
    An $\mathcal{L}_p$-formula $F$ is in \emph{negation normal form} (\emph{nnf}, in short), if it is defined by the following grammar:
\[
F:= p \mid \neg p \mid F_1 \wedge F_2 \mid F_1 \vee F_2
\]
Similarly, an $\calL_u$-formula $F$ is in \emph{$*$-negation normal form} (\emph{$*$-nnf}, in short), if it is defined by the grammar:
\[
F:= p \mid \neg p \mid F_1 \wedge F_2 \mid F_1 \vee F_2 \mid  F_1 \st F_2
\]
We write $\calL_{nn}$ and $\calL^*_{nn}$ for the sets of nnf- and $*$-nnf-formulas.
\end{definition}

As promised, we introduce an extension of the calculus $\mathbf{FL_e}$ by adding a collection of initial sequents that allow us to mimic the classical reasoning (specifically, the weakening and contraction rules) on nnf-formulas within the substructural setting.

\begin{definition}\label{Def: Calculus G}
Define the single-conclusion sequent calculus $G$ over $\calL_u$ as $\BFL$ plus the following initial sequents:
\begin{center}
$\Rightarrow p \vee \neg p \qquad p \Rightarrow 1 \qquad \neg p \Rightarrow 1 \qquad 0 \Rightarrow p
\qquad
0 \Rightarrow \neg p \qquad  0 \Rightarrow 0*0$,
\end{center}
where $p$ is an atom. Note that these initial sequents are not meant to be used as axioms and hence using their substitutions are \emph{not} allowed in $G$. Moreover, $G$ includes the cut rule.
\end{definition}

We begin with some basic observations about the strength of the calculus $G$. The following lemmas show that, over $*$-nnf formulas, $G$ admits short \emph{tree-like} proofs of axiomatic forms of contraction and weakening, as well as the law of excluded middle. These proofs will be used to simulate classical reasoning within $G$. The emphasis on being tree-like is to facilitate the proof of Lemma~\ref{Thm: G to FLe}, as we will explain in Remark \ref{rem: treelike}.

\begin{lemma}\label{lem: law of excluded middle}
For any $A \in \calL^*_{nn}$, the following sequents have tree-like $G$-proofs of size $|A|^{O(1)}$:
\[
A \Rightarrow 1 \qquad 0 \Rightarrow A \qquad \Rightarrow A \vee \neg A \qquad A \Rightarrow A * A.
\]
\end{lemma}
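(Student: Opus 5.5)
The plan is to treat the four sequents in the order $A \Rightarrow 1$, $0 \Rightarrow A$, $\Rightarrow A \vee \neg A$, $A \Rightarrow A * A$. The first three go by induction on the $*$-nnf structure of $A$, and the fourth is derived uniformly from the other three together with the initial sequent $0 \Rightarrow 0*0$. Throughout, $\neg A = A \to 0$ is used only as an auxiliary formula; it is not required to be in $\calL^*_{nn}$.

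\textbf{Step 1: $A \Rightarrow 1$ and $0 \Rightarrow A$.} For literals $p,\neg p$ these are initial sequents of $G$ (Definition~\ref{Def: Calculus G}).
\begin{itemize}
\item For $A \Rightarrow 1$: the case $B \wedge C$ is by $(L\wedge)$ applied to $B \Rightarrow 1$. The case $B \vee C$ is by $(L\vee)$. The case $B * C$ combines $B \Rightarrow 1$ and $C \Rightarrow 1$ by $(R*)$ into $B, C \Rightarrow 1*1$, cuts with $1*1 \Rightarrow 1$, and finishes with $(L*)$.
\item For $0 \Rightarrow A$: the case $\wedge$ is by $(R\wedge)$ and the case $\vee$ by $(R\vee)$. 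For $B*C$, apply $(R*)$ to $0\Rightarrow B$ and $0 \Rightarrow C$ to get $0,0 \Rightarrow B*C$, then $(L*)$ gives $0*0 \Rightarrow B*C$, and a cut with $0 \Rightarrow 0*0$ finishes.
\end{itemize}
Each step adds a constant number of lines and uses each subproof once, so these proofs are tree-like of size $|A|^{O(1)}$.

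Two derived facts are used repeatedly. First, \emph{weakening by nnf formulas}: from $\Gamma \Rightarrow \Delta$ and $C \Rightarrow 1$ we get $\Gamma, C \Rightarrow \Delta$ by $(1w)$ and cut. Second, \emph{absorbing negations next to $0$}: cutting $0 \Rightarrow C$ with $C, \neg C \Rightarrow 0$ gives $\neg C, 0 \Rightarrow 0$.

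\textbf{Step 2: excluded middle.} For $p$ this is an initial sequent. For $\neg p$, cut $\Rightarrow p \vee \neg p$ against $p \vee \neg p \Rightarrow \neg p \vee \neg\neg p$, which uses $p \Rightarrow \neg\neg p$ in $\BFL$.

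For $A = B \circ C$, cut $\Rightarrow B \vee \neg B$ and $\Rightarrow C \vee \neg C$ against $B \vee \neg B,\, C \vee \neg C \Rightarrow A \vee \neg A$. By $(L\vee)$ twice, this reduces to four cases.
\begin{itemize}
\item $B, C$: derive $B, C \Rightarrow B \wedge C$ (each conjunct via weakening of the other by Step 1), or $B \vee C$ (via weakening), or $B * C$ (directly). Then apply $(R\vee)$.
\item Mixed cases, e.g. $\neg B, C$: we need $\neg B, C, A \Rightarrow 0$ for $\circ \in \{\wedge, *\}$, and we need $C \Rightarrow B \vee C$ for $\circ = \vee$. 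For $\wedge$, use $B \wedge C \Rightarrow B$, the sequent $B, \neg B \Rightarrow 0$, and weakening of $C$. For $*$, use $(L*)$, then $B, \neg B \Rightarrow 0$, and weaken away both copies of $C$ with $C \Rightarrow 1$.
\item $\neg B, \neg C$: for $\wedge$ and $*$, first get $B, \neg B \Rightarrow 0$ and then absorb $\neg C$ next to this $0$ via $\neg C, 0 \Rightarrow 0$. For $*$ the extra copy of $C$ is also weakened away. For $\vee$, apply $(L\vee)$ to $\neg B,\neg C, B \vee C \Rightarrow 0$ and absorb in each branch.
\end{itemize}

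\textbf{Step 3: contraction.} Cut $\Rightarrow A \vee \neg A$ against $A, A \vee \neg A \Rightarrow A*A$ and split by $(L\vee)$. The branch $A, A \Rightarrow A*A$ is $(R*)$. In the branch $A, \neg A$, first get $A, \neg A \Rightarrow 0$, then cut with $0 \Rightarrow 0*0$ and with $0*0 \Rightarrow A*A$; the latter is obtained from $0 \Rightarrow A$ twice plus $(R*)$ and $(L*)$.

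\textbf{Main obstacle.} The delicate point is keeping the proofs tree-like and polynomial. In the inductive step for excluded middle, the proofs of $B \Rightarrow 1$, $0 \Rightarrow C$, etc. are used several times, so in a tree-like proof they must be copied. I would handle this by making the recursion for $\Rightarrow A\vee\neg A$ use each \emph{recursive} excluded-middle subproof exactly once. The Step 1 proofs, which have size $O(|A|)$ and are not recursive in excluded middle, may be duplicated a constant number of times. This yields the recurrence $T(A) \le T(B) + T(C) + O(|A|)$, which gives $T(A) \le O(|A|^2)$ symbols per line and hence size $|A|^{O(1)}$.
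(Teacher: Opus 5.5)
Your Step 1, Step 3, and the $\wedge$- and $*$-cases of Step 2 are correct and follow the paper's route. The gap is the mixed case of Step 2 for $\circ=\vee$.

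\textbf{Why the $\vee$ case fails.} The branch to be proved is $\neg B, C \Rightarrow (B\vee C)\vee\neg(B\vee C)$. Reducing it to $C\Rightarrow B\vee C$ means discarding $\neg B$. Your weakening device needs $\neg B\Rightarrow 1$, but $\neg B=B\to 0$ is not in $\calL^*_{nn}$, and this sequent is not $G$-provable in general.

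To see this, take the commutative residuated chain $\bot<1<\top$ in which $\bot$ is absorbing, $1$ is neutral and $\top\cdot\top=\top$. Interpret $0$ as $\bot$ and every atom as $1$, reading sequents as (product of antecedent) $\le$ (succedent, or $0$ if empty). Then $\neg p\mapsto\bot$ and $\neg\neg p\mapsto\top$. All six initial sequents of $G$ hold, so $G$ is sound for this model.

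Now take $B=\neg q$ and $C=p$. Your branch evaluates to $\top\le 1$, which fails. The sequent $\neg\neg q\Rightarrow 1$ fails as well. Even the cut target $B\vee\neg B, C\vee\neg C\Rightarrow A\vee\neg A$ fails: its antecedent evaluates to $\top\cdot 1=\top$ and its succedent to $1$. So no rearrangement of the four-case split over $\neg B,\neg C$ can handle disjunction.

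The paper writes out only the $\wedge$ case and calls the others similar, so the same problem is latent there.

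\textbf{The fix.} Run the induction with a dual that stays inside $\calL^*_{nn}$:
\[
\bar p:=\neg p,\quad \overline{\neg p}:=p,\quad \overline{B\wedge C}:=\overline{B*C}:=\bar B\vee\bar C,\quad \overline{B\vee C}:=\bar B\wedge\bar C.
\]
Prove $\Rightarrow A\vee\bar A$ and $A,\bar A\Rightarrow 0$ simultaneously, using the same four-case split and cuts as before.

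Now every formula that must be discarded is one of $B,C,\bar B,\bar C$. These are all $*$-nnf, so your Step 1 weakening applies. Each recursive subproof is still used once, so the proofs stay tree-like and of polynomial size.

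Finally, $(R\to)$ applied to $A,\bar A\Rightarrow 0$ gives $\bar A\Rightarrow\neg A$. A cut with $\Rightarrow A\vee\bar A$ then yields $\Rightarrow A\vee\neg A$, and your Step 3 goes through unchanged.
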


\begin{proof}
The first three statements are proved by induction on the structure of $A$. We begin with the first two sequents. We first present the proofs and then discuss the complexity at the end.

If $A$ is $p$ or $\neg p$, both $A \Rightarrow 1$ and $0 \Rightarrow A$ are present in $G$ as the additional initial sequents.
For the inductive step, let $A = B \circ C$ with $\circ \in \{\wedge, \vee, *\}$. We only treat the case $\circ = *$, as the other cases are similar. For $B * C \Rightarrow 1$, consider the following proof-tree in $G$:
\begin{prooftree}
\AxiomC{$\sigma_1$}
\noLine
\UnaryInfC{$B \Rightarrow 1$}
\AxiomC{$\sigma_2$}
\noLine
\UnaryInfC{$C \Rightarrow 1$}
\UnaryInfC{$C,1 \Rightarrow 1$}
\BinaryInfC{$B , C \Rightarrow 1$}
\UnaryInfC{$B * C \Rightarrow 1$}
\end{prooftree}
where $\sigma_1$ and $\sigma_2$ are provided by the induction hypothesis. 
For the sequent $0 \Rightarrow B * C$, consider the following proof-tree in $G$:
\begin{prooftree}
\AxiomC{$0 \Rightarrow 0*0$}
\AxiomC{$\pi_1$}
\noLine
\UnaryInfC{$0 \Rightarrow B$}   
\AxiomC{$\pi_2$}
\noLine
\UnaryInfC{$0 \Rightarrow C$} 
\BinaryInfC{$0,0 \Rightarrow B*C$}
\UnaryInfC{$0*0 \Rightarrow B*C$}
\BinaryInfC{$0 \Rightarrow B * C$}
\end{prooftree}
where $\pi_1$ and $\pi_2$ are provided by the induction hypothesis. For the size upper bounds, observe that when $A=p$ or $A=\neg p$, the proofs have size $O(|A|)$. In each inductive step toward the sequents $(A \Rightarrow 1)$ and $(0 \Rightarrow A)$, the proof size increases by $O(|A|)$. Since there are $O(|A|)$ connectives to unfold, the resulting $G$-proofs have size $O(|A|^2)$.

For the third sequent, namely $(\, \Rightarrow A \vee \neg A)$, we proceed by induction on the structure of $A$, using the $G$-proofs of $B \Rightarrow 1$, for subformulas $B$ of $A$. As before, we first construct the proof and then establish the upper bound. Recall that $\neg A := A \to 0$, and that the implication rules are available in $\BFL$ and hence in $G$. When $A=p$, the initial sequent $(\, \Rightarrow p \vee \neg p)$ is already included in $G$. When $A=\neg p$, consider the following proof-tree in $G$:
\begin{prooftree}
\AxiomC{$\Rightarrow p \vee \neg p$}
\AxiomC{$p \Rightarrow p$}
\AxiomC{$0 \Rightarrow 0$}
\BinaryInfC{$p, \neg p \Rightarrow 0$}
\UnaryInfC{$p \Rightarrow \neg \neg p$}
\UnaryInfC{$p \Rightarrow \neg  p \vee \neg \neg p$}
\AxiomC{$p \Rightarrow p$}
\AxiomC{$0 \Rightarrow 0$}
\BinaryInfC{$p, \neg p \Rightarrow 0$}
\UnaryInfC{$\neg p \Rightarrow \neg  p$}
\UnaryInfC{$\neg p \Rightarrow \neg  p \vee \neg \neg p$}
\BinaryInfC{$p \vee \neg p \Rightarrow \neg p \vee \neg \neg p$}

\BinaryInfC{$\Rightarrow \neg p \vee \neg \neg p$}
\end{prooftree}
For $A = B \circ C$, where $\circ \in \{\wedge, \vee, *\}$, we obtain a $G$-proof of $\, \Rightarrow A \vee \neg A$ using the induction hypothesis for $B$ and $C$. We explain the case $\circ = \wedge$; the others are similar.  
For $A = B \wedge C$, we first provide $G$-proofs of $X, Y \Rightarrow A \vee \neg A$, for any $X \in \{B, \neg B\}$ and $Y \in \{C, \neg C\}$.  
For $B, C \Rightarrow A \vee \neg A$, consider the following proof-tree in $G$:

\begin{prooftree}
\AxiomC{$\pi_1$}
\noLine
\UnaryInfC{$C \Rightarrow 1$}
\AxiomC{$B \Rightarrow B$}
\UnaryInfC{$B , 1 \Rightarrow B$}
\BinaryInfC{$B, C \Rightarrow B$}
\AxiomC{$\pi_2$}
\noLine
\UnaryInfC{$B \Rightarrow 1$}
\AxiomC{$C \Rightarrow C$}
\UnaryInfC{$C , 1 \Rightarrow C$}
\BinaryInfC{$B, C \Rightarrow C$}
\BinaryInfC{$B, C \Rightarrow B \wedge C$}
\UnaryInfC{$B, C \Rightarrow (B \wedge C) \vee \neg (B \vee C)$}
\end{prooftree}
where $\pi_1$ and $\pi_2$ are the proofs derived from the previous claims for $B$ and $C$. 
For $B, \neg C \Rightarrow A \vee \neg A$, consider the following proof-tree in $G$:
\begin{prooftree}
\AxiomC{$\pi_1$}
\noLine
\UnaryInfC{$B \Rightarrow 1$}
\AxiomC{$C \Rightarrow C$}
\AxiomC{$ 0 \Rightarrow $}
\BinaryInfC{$\neg C, C \Rightarrow $}
\UnaryInfC{$\neg C, C \Rightarrow 0$}
\UnaryInfC{$1, \neg C, C \Rightarrow 0$}
\BinaryInfC{$B, \neg C, C \Rightarrow 0$}
\UnaryInfC{$B, \neg C, (B \wedge C) \Rightarrow 0$}
\UnaryInfC{$B, \neg C \Rightarrow \neg (B \wedge C)$}
\UnaryInfC{$B, \neg C \Rightarrow (B \wedge C) \vee \neg (B \wedge C)$}
\end{prooftree}
Similarly, one can easily find analogous proof-trees in $G$ for the sequents:
\[
\neg B, C \Rightarrow A \vee \neg A \qquad \neg B , \neg C \Rightarrow A \vee \neg A. 
\]
Putting all four proof-trees together and using the rule $(L \vee)$, we get a proof-tree in $G$ for the sequent
\[
B \vee \neg B, C \vee \neg C \Rightarrow (B \wedge C) \vee \neg (B \vee C). 
\]
Finally, by the induction hypothesis, there are proof-trees in $G$ for $\Rightarrow B \vee \neg B$ and $\Rightarrow C \vee \neg C$. Therefore, by applying the cut rule, we obtain a proof-tree for $\Rightarrow (B \wedge C) \vee \neg (B \vee C)$ in $G$.

For the size upper bounds, note that in the cases $A = p$ and $A = \neg p$, the proofs have size $O(|A|)$. In each inductive step establishing the sequent $\Rightarrow A \vee \neg A$, we use the proof-trees of $B \Rightarrow 1$ and $C \Rightarrow 1$, combined with proofs of size $O(|A|)$. Hence, each step contributes $O(|A|^2)$. Since there are $O(|A|)$ connectives to unfold, the total size of the resulting proof-tree is $O(|A|^3)$.

For the last claim, namely $A \Rightarrow A * A$, as $A * A$ is also $*$-nnf, we can consider the following proof in $G$:
\begin{prooftree}
\AxiomC{$A \Rightarrow A$}
\AxiomC{$A \Rightarrow A$}
\BinaryInfC{$A,A \Rightarrow A * A$}
\AxiomC{$A \Rightarrow A$}
\AxiomC{$\pi$}
\noLine
\UnaryInfC{$0 \Rightarrow A*A$}
\BinaryInfC{$A , \neg A \Rightarrow A *A$}
\BinaryInfC{$A \vee \neg A, A \Rightarrow A * A$}
\end{prooftree}
where the proof $\pi$ is obtained in the previous parts. We also showed that there is a proof-tree of $\Rightarrow A \vee \neg A$ in $G$. Applying the cut rule to the conclusions of these two proofs, we obtain a proof-tree for $A \Rightarrow A * A$ in $G$. It is also clear that this proof has size $O(|A|^3)$.
\end{proof}

Now, to embed classical reasoning inside $G$, we consider a sequent calculus operating over nnf $\mathcal{L}_p$-formulas.

\begin{definition}\label{Def: Calculus LKnn}
Define $\LKnn$ over $\calL_{nn}$ as the \emph{single-conclusion} sequent calculus consisting of the following axioms:
\begin{center}
$p \Rightarrow p \qquad \neg p \Rightarrow \neg p \qquad p, \neg p \Rightarrow \qquad \Rightarrow p \vee \neg p$
\end{center}
along with the structural rules (including cut), conjunction and disjunction
rules of $\LJ$. 
\end{definition}

\begin{remark}
The usual calculus $\LKnn$ is presented as a multi-conclusion sequent calculus and includes the axiom $\Rightarrow p, \neg p$. Here, we modify the standard presentation to make the embedding of $\LKnn$ into the single-conclusion system $G$ easier to present.
\end{remark}

It is known that (tree-like) $\LKnn$ can polynomially simulate (tree-like) $\LK$ for single-conclusion sequents of nnf-formulas \cite[Section 3.3]{Krajicek}. Therefore, to simulate $\LK$ in $G$, it is sufficient to do so for $\LKnn$, as long as we only consider nnf $\mathcal{L}_p$-formulas.

\begin{lemma}\label{Lem: LKnn to G}
Let $\Gamma$ and $\Delta$ be multisets of nnf-formulas, where $\Delta$ contains at most one formula. For any tree-like $\LKnn$-proof $\pi$ of $\Gamma \Rightarrow \Delta$, there is a tree-like $G$-proof $\sigma_\pi$ of $\Gamma \Rightarrow \Delta$ such that $|\sigma_\pi| \leq |\pi|^{O(1)}$.  
\end{lemma}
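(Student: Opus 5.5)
We argue by induction on the structure of the tree-like $\LKnn$-proof $\pi$, replacing each rule application by a short tree-like $G$-derivation of the same sequent. Each replacement adds at most polynomially many symbols in the size of the sequents involved, which are bounded by $|\pi|$. Since the replacement is done once per node of the tree, the total size is $|\pi|^{O(1)}$. Tree-likeness is preserved because every subproof we insert is a fresh tree-like proof: the proofs from Lemma~\ref{lem: law of excluded middle} are tree-like, and they are re-proved at each use rather than shared.

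For the axioms, $p\Rightarrow p$ and $\neg p\Rightarrow\neg p$ are identity axioms of $\BFL$, and $\Rightarrow p\vee\neg p$ is an initial sequent of $G$. The remaining axiom $p,\neg p\Rightarrow$ is obtained by $(L\to)$ from $p\Rightarrow p$ and $0\Rightarrow$ (with empty succedent).

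The logical rules of $\LJ$ for $\wedge$ and $\vee$ coincide with those of $\BFL$ except for the two-premise right rule $(R\wedge)$, which is context-sharing in both systems. The same holds for $(L\vee)$, so these rules carry over verbatim. Cut is also available in $G$.

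The real work concerns the structural rules.

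\emph{Contraction.} For left contraction $\Gamma,A,A\Rightarrow\Delta / \Gamma,A\Rightarrow\Delta$, we take the $G$-proof of $\Gamma,A,A\Rightarrow\Delta$, apply $(L*)$ to get $\Gamma,A*A\Rightarrow\Delta$, and cut against the tree-like proof of $A\Rightarrow A*A$ from Lemma~\ref{lem: law of excluded middle}. Here $A$ is an nnf-formula, hence $*$-nnf. Right contraction does not occur, since $\Delta$ has at most one formula.

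\emph{Left weakening.} For $\Gamma\Rightarrow\Delta / \Gamma,A\Rightarrow\Delta$, we apply $(1w)$ to get $\Gamma,1\Rightarrow\Delta$ and cut with $A\Rightarrow 1$ from Lemma~\ref{lem: law of excluded middle}.

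\emph{Right weakening.} For $\Gamma\Rightarrow\ / \Gamma\Rightarrow A$, we apply $(0w)$ to get $\Gamma\Rightarrow 0$ and cut with $0\Rightarrow A$ from Lemma~\ref{lem: law of excluded middle}. Since $G$ is single-conclusion, the rule $(0w)$ is used in its single-conclusion form, which is harmless.

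These three steps are where the classical structure is simulated, and the restriction to nnf formulas is exactly what makes Lemma~\ref{lem: law of excluded middle} applicable. I expect the main, mildly delicate, obstacle to be the size bookkeeping. Each structural step inserts a proof of size $|A|^{O(1)}\le|\pi|^{O(1)}$, and there are at most $|\pi|$ steps, so $|\sigma_\pi|\le|\pi|\cdot|\pi|^{O(1)}=|\pi|^{O(1)}$. Every formula occurring in $\sigma_\pi$ is built from formulas of $\pi$ using at most one extra $*$, $1$, or $0$, so sequent sizes stay polynomial.

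A second point to check is that the inserted cut formulas $A*A$, $1$, and $0$ do not break the single-conclusion form. They do not: each cut has the shape $\Sigma\Rightarrow B$ and $\Gamma,B\Rightarrow\Delta$ with $|\Delta|\le1$, which is a legitimate single-conclusion cut.
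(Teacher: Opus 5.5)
Your proposal is correct and follows the paper's proof essentially step for step. You treat the axioms directly, carry over the $\wedge$/$\vee$ rules and cut unchanged, and simulate $(Lw)$, $(Rw)$, $(Lc)$ by cutting against the short tree-like $G$-proofs of $A \Rightarrow 1$, $0 \Rightarrow A$ and $A \Rightarrow A*A$ from Lemma~\ref{lem: law of excluded middle}, with polynomial overhead per step. Your remark that every formula in $\sigma_\pi$ arises from one in $\pi$ by at most one extra $*$, $1$ or $0$ is not literally true for the inserted subproofs (they contain, e.g., $A \vee \neg A$); this is harmless, since their size is already bounded by $|A|^{O(1)}$.
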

\begin{proof}
The proof proceeds by induction on the structure of the proof $\pi$. We will explain the upper bound at the end. If $\Gamma \Rightarrow \Delta$ is an axiom in $\LKnn$, then clearly $\Gamma \Rightarrow \Delta$ has a proof of linear size in $G$. For the inductive step, since all the rules of $\LKnn$ except weakening and contraction are present in $G$, the claim is a clear consequence of the induction hypothesis. For the structural rules, first note that any nnf-formula in $\mathcal{L}_p$ is also a $*$-nnf. Therefore, by Lemma \ref{lem: law of excluded middle}, for any nnf-formula $A$, we have proof-trees of $A \Rightarrow 1$, $0 \Rightarrow A$ and $A \Rightarrow A*A$ in $G$ of size $|A|^{O(1)} \leq |\pi|^{O(1)}$. Using cut with these sequents, it is easy to simulate rules $(Lw)$, $(Rw)$, and $(Lc)$ of $\LJ$ inside $G$.

Finally, for the upper bound, note that for the axioms, the proof is of size $O(|\pi|)$. In the inductive step, in each step, we add at most $|\pi|^{O(1)}$ many new symbols. Therefore, the size of the constructed proof tree in $G$ is $|\pi|^{O(1)}$.
\end{proof}

\begin{corollary} \label{Cor: LK to G}
Let $\Gamma$ and $\Delta$ be multisets of nnf-formulas, where $\Delta$ contains at most one formula. For any tree-like $\LK$-proof $\pi$ of $\Gamma \Rightarrow \Delta$, there exists a tree-like $G$-proof $\sigma_\pi$ of $\Gamma \Rightarrow \Delta$ such that $|\sigma_\pi| \leq |\pi|^{O(1)}$.
\end{corollary}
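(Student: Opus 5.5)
The plan is to compose the earlier simulation results. Let $\pi$ be a tree-like $\LK$-proof of $\Gamma \Rightarrow \Delta$, where $\Gamma$ and $\Delta$ consist of nnf-formulas and $|\Delta| \leq 1$. First, I convert $\pi$ into a tree-like $\LKnn$-proof $\pi'$ of the same sequent with $|\pi'| \leq |\pi|^{O(1)}$. This uses the cited fact from \cite[Section 3.3]{Krajicek} that tree-like $\LKnn$ polynomially simulates tree-like $\LK$ on single-conclusion sequents of nnf-formulas. Then I apply Lemma~\ref{Lem: LKnn to G} to $\pi'$, obtaining a tree-like $G$-proof $\sigma_\pi$ of $\Gamma \Rightarrow \Delta$ with $|\sigma_\pi| \leq |\pi'|^{O(1)} \leq |\pi|^{O(1)}$.

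The only step needing care is the first one, because of a mismatch of conventions. The simulation in \cite{Krajicek} is usually stated for the multi-conclusion version of $\LKnn$ with axiom $\Rightarrow p, \neg p$, whereas our $\LKnn$ (Definition~\ref{Def: Calculus LKnn}) is single-conclusion with axiom $\Rightarrow p \vee \neg p$. I would bridge this in the standard way.

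First, push negations inward: every formula of $\pi$ is replaced by its nnf, so $\neg A$ becomes the nnf-dual $\overline{A}$. This is harmless on the end-sequent, since its formulas are already in nnf.

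Second, flip each multi-conclusion sequent $\Sigma \Rightarrow \Lambda$ to the single-conclusion form $\Sigma, \overline{\Lambda'} \Rightarrow \lambda$, where $\lambda \in \Lambda$ is chosen and $\Lambda'$ is the rest (or to $\Sigma, \overline{\Lambda} \Rightarrow$ if the succedent is empty). The rules are then checked one at a time. Right rules become left rules on dual formulas. Moving a formula across the arrow is done by cutting against $\Rightarrow B \vee \overline{B}$ together with $B, \overline{B} \Rightarrow$. These two sequents have tree-like $\LKnn$-proofs of size $O(|B|^2)$, built by induction on $B$ from the atomic axioms.

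Each rule of $\pi$ costs only polynomially many extra symbols, and tree-likeness is preserved because the auxiliary proofs are duplicated as subtrees rather than shared. Finally, at the end-sequent I move the dual formulas back to the original $\Delta$, using the same excluded-middle cuts.

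If \cite[Section 3.3]{Krajicek} is taken to cover this single-conclusion form directly, the corollary is an immediate two-line composition. Otherwise, the bookkeeping above is the main obstacle, though it is routine.
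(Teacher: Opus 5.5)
Your proposal is correct and is essentially the paper's proof: the paper cites \cite[Section 3.3]{Krajicek} for the polynomial equivalence of tree-like $\LK$ and tree-like $\LKnn$ on single-conclusion nnf sequents, then applies Lemma~\ref{Lem: LKnn to G}. Your extra bookkeeping for the multi- versus single-conclusion mismatch (dualizing side formulas and moving them across the arrow with cuts against $\Rightarrow B \vee \overline{B}$ and $B, \overline{B} \Rightarrow$) is a sound justification of the cited step, which the paper leaves implicit.
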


\begin{proof}
The tree-like version of the calculi $\LK$ and $\LKnn$ are equivalent over single-conclusion sequents consisting of negation normal form formulas \cite[Section 3.3]{Krajicek}. Therefore, it is enough to apply Lemma \ref{Lem: LKnn to G}.
\end{proof}

An $\calL_u$-formula is called \emph{single-variable} if it contains at most one propositional variable. Define the forgetful translation $f: \mathcal{L}_! \to \mathcal{L}_p$
that maps $0$, $1$, and $*$ to $\bot$, $\top$, and $\wedge$, respectively, and ignores the modality $!$. More precisely:
\begin{definition}\label{Def: forgetful}
Define the \emph{forgetful} function $(\cdot)^f:\calL_! \to \calL_p$ as:
\begin{center}
$p^f:=p \qquad 0^f:= \bot \qquad 1^f:= \top \qquad (A * B)^f:= A^f \wedge B^f$    
\end{center}
\begin{center}
$(A \circ B)^f:= A^f \circ B^f$ for $\circ \in \{\wedge, \vee, \to\}$ \qquad $(!A)^f:=A^f$
\end{center}
Define $\Gamma^f:= \{A^f \mid A \in \Gamma\}$ and $S^f:= \Gamma^f \Rightarrow \Delta^f$, for any multisets $\Gamma$ and $\Delta$ and any sequent $S$ of the form $\Gamma \Rightarrow \Delta$.
\end{definition}

The following lemma is simply a feasible version of the deduction theorem: it feasibly transforms any \emph{tree-like} $G$-proof of a sequent $S$ into an $\BFL$-proof of $S$, at the cost of adding some single-variable classically valid formulas to the antecedent. 

\begin{lemma}[Feasible Deduction Theorem]\label{Thm: G to FLe}
Let $\Gamma$ and $\Delta$ be multisets of $\calL_u$-formulas, where $\Delta$ contains at most one formula. For any tree-like $G$-proof $\pi$ of $\Gamma \Rightarrow \Delta$, there exists a multiset $\Sigma_{\pi}$ of single-variable $\calL_u$-formulas and an $\BFL$-proof $\sigma_\pi$ of $\Sigma_\pi, \Gamma \Rightarrow \Delta$, such that $|\Sigma_\pi| \leq O(|\pi|)$, $|\sigma_\pi| \leq O(|\pi|^3)$, and $\LK \vdash \, \Rightarrow \bigwedge \Sigma_\pi^f$.
\end{lemma}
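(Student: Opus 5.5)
We argue by induction on the tree-like $G$-proof $\pi$, building $\Sigma_\pi$ and $\sigma_\pi$ together. The guiding idea is the usual deduction-theorem trick for substructural logics. Each use of an extra initial sequent $S$ of $G$ is replaced by putting a suitable single-variable formula into the antecedent, and that formula derives $S$ in $\BFL$. Because $\pi$ is tree-like, every leaf is used exactly once. So the multisets collected at different leaves can simply be added together through the context-splitting rules, with no contraction needed. This is exactly why tree-likeness is required.

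\textbf{Base case.} If the leaf is an axiom of $\BFL$, take $\Sigma=\emptyset$. For the extra initial sequents of $G$, we use:
\begin{itemize}
\item for $\Rightarrow p\vee\neg p$, take $\Sigma=\{p\vee\neg p\}$; the identity axiom then proves $p\vee \neg p\Rightarrow p\vee\neg p$;
\item for $p\Rightarrow 1$, $\neg p\Rightarrow 1$, $0\Rightarrow p$ and $0\Rightarrow \neg p$, take $\Sigma=\{p\to 1\}$, $\{\neg p\to 1\}$, $\{0\to p\}$ and $\{0\to\neg p\}$ respectively; one application of $(L\to)$ to identities yields e.g.\ $p\to 1, p\Rightarrow 1$;
\item for $0\Rightarrow 0*0$, take $\Sigma=\{0\to 0*0\}$.
\end{itemize}
Each of these formulas contains at most one variable. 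Under $(\cdot)^f$ they become classical tautologies: $p\vee\neg p$, $p\to\top$, $\neg p\to \top$, $\bot\to p$, $\bot\to\neg p$ and $\bot\to\bot\wedge\bot$. Each such $\BFL$-proof has constant size beyond the size of the sequent.

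\textbf{Inductive step.} Take a rule with premises proved by subproofs $\pi_1,\dots,\pi_k$, where $k\le 2$. The induction hypothesis gives $\Sigma_{\pi_i},\Gamma_i\Rightarrow\Delta_i$ in $\BFL$.
\begin{itemize}
\item For a one-premise rule, apply the same rule with $\Sigma_{\pi_1}$ carried along as extra context. This works because every rule of $\BFL$ allows an arbitrary side context $\Gamma$.
\item For $(R\wedge)$ and $(L\vee)$, which share their context, the two premises have different $\Sigma$'s. We first weaken each premise by the other premise's $\Sigma$, setting $\Sigma_\pi:=\Sigma_{\pi_1}\cup\Sigma_{\pi_2}$. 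Weakening is not available in $\BFL$, so this is the key difficulty. We handle it by modifying the added formulas so that they can be discarded. Concretely, we replace each leaf formula $\phi$ by $\phi\wedge 1$. One then checks that $\phi\wedge 1$ still yields the original initial sequent, by $(L\wedge)$ followed by the argument above. It can also be discarded: $\Sigma, X\Rightarrow Y$ follows from $X\Rightarrow Y$ by $(L\wedge)$ to the $1$ conjunct and $(1w)$. The conjunct $1$ adds no variables, and $(\phi\wedge 1)^f=\phi^f\wedge\top$ is still a tautology. Discarding $\Sigma_{\pi_2}$ in this way costs $O(|\Sigma_{\pi_2}|)$ lines per formula, and a more careful count is given in the size estimate below.
\item For the multiplicative rules $(R*)$, $(L\to)$ and (cut), the contexts split, so we just take $\Sigma_\pi=\Sigma_{\pi_1}\uplus\Sigma_{\pi_2}$ and apply the rule.
\end{itemize}
In every case, $\LK\vdash\,\Rightarrow\bigwedge\Sigma^f_\pi$, since each member of $\Sigma_\pi$ translates to a tautology.

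\textbf{Size bounds.} We add $O(1)$ formulas of constant size per leaf of $\pi$, so $|\Sigma_\pi|\le O(|\pi|)$. For $\sigma_\pi$, each sequent of $\pi$ is replaced by the same sequent plus a $\Sigma$ of size $O(|\pi|)$. Each additive step adds $O(|\pi|)$ weakening-simulation lines, each of size $O(|\pi|)$. Summed over the $O(|\pi|)$ steps, this gives $O(|\pi|^3)$. The main obstacle is the additive rules: we must ensure that discarding the $\Sigma$ belonging to the other branch stays within the cubic bound and does not disturb the single-variable property. The $\wedge 1$ trick is the first thing I would try there.
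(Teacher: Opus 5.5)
Your proposal is correct and matches the paper's proof. Both argue by induction on the tree-like proof, replace each extra initial sequent $A\Rightarrow B$ (resp.\ $\Rightarrow B$) by the antecedent formula $(A\to B)\wedge 1$ (resp.\ $B\wedge 1$), take multiset unions at binary rules, and at the additive rules $(R\wedge)$, $(L\vee)$ weaken in the other branch's $\Sigma$ via $(1w)$ followed by $(L\wedge)$. Your size count is global, giving $O(|\pi|^2)$ lines each of size $O(|\pi|)$, whereas the paper estimates recursively; both give the same cubic bound.
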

\begin{proof}
The proof is straightforward and hence deferred to Section~\ref{Sec: Detailed Proofs}.
\end{proof}

\begin{remark}\label{rem: treelike}
In the proof of Lemma \ref{Thm: G to FLe}, the multiset $\Sigma_{\pi}$ is constructed by recursion on the structure of $\pi$. When the last rule in $\pi$ is a binary rule with immediate subtrees $\pi'$ and $\pi''$, we define $\Sigma_{\pi} = \Sigma_{\pi'} \cup \Sigma_{\pi''}$ (multiset union). To ensure that the size of $\Sigma_{\pi}$ remains polynomial in the size of $\pi$, it is important that $\pi$ is tree-like. Otherwise, it is possible that $\pi'=\pi''$ and $\pi$ considers them once. Hence, passing from $\pi'$ to $\pi$ takes one step while it makes the size of $\Sigma_{\pi}$ twice the size of $\Sigma_{\pi'}$. This may make the size of $\Sigma_{\pi}$ exponential in the size of $\pi$.  
\end{remark}

From Section \ref{sec: HardTheorems}, recall the formulas 
\[
Color_n := Color^{\lfloor\sqrt{n}\rfloor}_n(\bar{p}, \bar{s}) \qquad Clique_n := Clique^{\lfloor\sqrt{n}\rfloor+1}_n(\bar{p}, \bar{r}),
\]
and note that they are in negation normal form. Let $\overline{Color}_n(\bar{p}, \bar{s})$ denote the formula $\neg Color_n(\bar{p}, \bar{s})$ in negation normal form.
From Section \ref{sec: HardTheorems}, there exists a sequence of tree-like $\LK$-proofs $\pi_n$ for each sequent
\[
Clique_n(\bar{p}, \bar{r}) \Rightarrow \overline{Color}_n(\bar{p}, \bar{s})
\]
such that $|\pi_n|$ is polynomial in $n$. We can now transform these sequents into $\BFL$-provable sequents with short proofs.
\begin{corollary} \label{Cor: LK to FLe}
There exist a sequence $\{\Sigma_n\}_{n=1}^{\infty}$ of multisets of single-variable $\calL_u$-formulas and a sequence $\{\sigma_n\}_{n=1}^{\infty}$ such that each $\sigma_n$ is an $\BFL$-proof for
\[
\Sigma_{n},\; Clique_n(\bar{p}, \bar{r}) \Rightarrow \overline{Color}_n(\bar{p}, \bar{s}),
\]
$|\Sigma_{n}|,|\sigma_{n}| \leq n^{O(1)}$ and $\LK \vdash \, \Rightarrow \bigwedge \Sigma_{n}^f$.
\end{corollary}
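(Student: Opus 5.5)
The plan is to compose three earlier results: Theorem~\ref{Thm: Buss Clique Color}, Corollary~\ref{Cor: LK to G}, and Lemma~\ref{Thm: G to FLe}.

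First, I take the tree-like $\LK$-proofs $\pi_n$ of
\[
Clique_n(\bar{p}, \bar{r}) \Rightarrow \overline{Color}_n(\bar{p}, \bar{s})
\]
recalled just before the statement. They have size $n^{O(1)}$ and come from Theorem~\ref{Thm: Buss Clique Color}. That theorem gives a proof of the implication $Clique^{k+1}_n \to \neg Color^k_n$, so I first convert it into a proof of the sequent with the negation pushed into normal form. This costs only a polynomial number of tree-like steps. One cuts with $\neg Color_n \Rightarrow \overline{Color}_n$, whose tree-like proof of polynomial size follows by induction on the De Morgan unfolding. One also inverts $(R\to)$ by a cut against $A\to B, A\Rightarrow B$. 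Both sides of the resulting sequent are nnf $\calL_p$-formulas, and the succedent is a single formula.

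Second, I apply Corollary~\ref{Cor: LK to G}. This yields a tree-like $G$-proof $\sigma'_n$ of the same sequent with $|\sigma'_n| \le |\pi_n|^{O(1)} \le n^{O(1)}$. One point needs care here. The nnf formulas are now read in $\calL_u$, with $\neg p := p\to 0$ and with classical $\wedge,\vee$ identified with the substructural additive connectives. This identification is the one used in Lemma~\ref{Lem: LKnn to G}, so the sequent is literally the same string.

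Third, I apply the Feasible Deduction Theorem (Lemma~\ref{Thm: G to FLe}) to $\sigma'_n$. This produces a multiset $\Sigma_n := \Sigma_{\sigma'_n}$ of single-variable $\calL_u$-formulas, together with an $\BFL$-proof $\sigma_n$ of $\Sigma_n, Clique_n \Rightarrow \overline{Color}_n$. The lemma gives $|\Sigma_n| \le O(|\sigma'_n|)$ and $|\sigma_n| \le O(|\sigma'_n|^3)$, both $n^{O(1)}$. It also gives $\LK \vdash\, \Rightarrow \bigwedge\Sigma_n^f$, which is exactly the required statement.

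The only step with real content is the first one: keeping everything tree-like and polynomial while passing from Buss's implication to the nnf sequent form. Tree-likeness matters because Lemma~\ref{Thm: G to FLe} requires tree-like input (Remark~\ref{rem: treelike}). This is a standard adjustment, and the paper already asserts that such tree-like $\pi_n$ exist, so I would simply invoke that assertion and the simulation from \cite[Section 3.3]{Krajicek}.
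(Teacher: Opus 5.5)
Your proposal is correct and matches the paper's proof, which likewise composes Theorem~\ref{Thm: Buss Clique Color}, Corollary~\ref{Cor: LK to G} and Lemma~\ref{Thm: G to FLe}, with the polynomial bounds carrying through. Your extra step, turning Buss's implication into a tree-like proof of the nnf sequent via cuts, is a harmless elaboration of a fact the paper asserts without proof just before the statement.
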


\begin{proof} 
Use Theorem \ref{Thm: Buss Clique Color}, Corollary \ref{Cor: LK to G} and Lemma \ref{Thm: G to FLe}.
\end{proof}

Consider the atoms appearing in the Clique–Color formulas, namely $\bar{p}$, $\bar{r}$, and $\bar{s}$. Take the multiset $\Sigma_{n}$ derived in Corollary \ref{Cor: LK to FLe} and define the multiset $\Sigma_{n}^{\bar{p}, \bar{s}}$ as
\[
\Sigma_{n}^{\bar{p}, \bar{s}} := \{A \in \Sigma_{n} \mid V(A) \subseteq \bar{p} \cup \bar{s}\}.
\]
Note that the multiplicity of any $A \in \Sigma^{\bar{p}, \bar{s}}_{n}$ is the same as its multiplicity in $\Sigma_{n}$.
Define $\Pi_n$ as the remainder of $\Sigma_n$, namely $\Pi_n:= \Sigma_n \setminus \Sigma_{n}^{\bar{p}, \bar{s}}$. Since the formulas in $\Sigma_n$ are single-variable, the formulas in $\Pi_n$ contain no occurrences of the variables in $\bar{p} \cup \bar{s}$.

The following theorem is the main theorem of this section. We provide a sequence of sequents that have short proofs even without the contraction or weakening rules in the intuitionistic setting, namely in $\BFL$. However, interestingly, if we remove the cut rule from the sequent calculus and even add both contraction and weakening rules, namely in $\mathbf{LK^-_!}$, any proof of this sequence of sequents blows up exponentially.

\begin{theorem}\label{MainCutFree}
The sequents
\[
S_n := Clique_n(\bar{p}, \bar{r}),\; \Pi_n \;\Rightarrow\; \bigast \Sigma_n^{\bar{p}, \bar{s}} \to \overline{Color}_n(\bar{p}, \bar{s})
\]
have $\BFL$-proofs of size $n^{O(1)}$, but any $\mathbf{LK^-_!}$-proof of $S_n$ has size $2^{n^{\Omega(1)}}$.
\end{theorem}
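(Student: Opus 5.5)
The proof has two halves: an upper bound in $\BFL$ and a lower bound for $\mathbf{LK^-_!}$.

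\emph{Upper bound.} Start from Corollary~\ref{Cor: LK to FLe}, which gives an $\BFL$-proof $\sigma_n$ of $\Sigma_n, Clique_n \Rightarrow \overline{Color}_n$ of size $n^{O(1)}$. Since $\Sigma_n = \Sigma_n^{\bar p,\bar s} \cup \Pi_n$ as multisets, apply $(L*)$ repeatedly to fuse the formulas of $\Sigma_n^{\bar p,\bar s}$ into $\bigast \Sigma_n^{\bar p,\bar s}$. Then one application of $(R\to)$ yields $S_n$. This adds at most $|\Sigma_n|$ steps, each of size $n^{O(1)}$, so the total size stays polynomial. If $\Sigma_n^{\bar p,\bar s}$ is empty, use $(1w)$ instead.

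\emph{Lower bound.} Let $\pi$ be an $\mathbf{LK^-_!}$-proof of $S_n$.

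First, convert $\pi$ into a cut-free $\LK$-proof $\pi^f$ of $S_n^f$ of size $O(|\pi|)$, with the same shape. Apply the forgetful translation $(\cdot)^f$ of Definition~\ref{Def: forgetful} to every sequent, and check that each rule of $\mathbf{LK^-_!}$ becomes a cut-free $\LK$ inference or a short cut-free derivation:
\begin{itemize}
\item the $*$-rules become $\wedge$-rules, using contraction and weakening;
\item the $!$-rules become trivial or weakening/contraction steps;
\item $0,1$ become $\bot,\top$;
\item the multi-conclusion structure is already native to $\LK$.
\end{itemize}
The constants $\top,\bot$ of $\LK$ handle $(1w)$, $(0w)$, $\Rightarrow 1$ and $0\Rightarrow$. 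The resulting sequent is
\[
Clique_n^f, \Pi_n^f \Rightarrow \textstyle\bigwedge (\Sigma_n^{\bar p,\bar s})^f \to \overline{Color}_n.
\]
Both $Clique_n$ and $\overline{Color}_n$ are unaffected by $(\cdot)^f$, since $\neg$ becomes $\to\bot$, which is harmless.

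Next, eliminate $\Pi_n$ and $\Sigma_n^{\bar p,\bar s}$. Every formula in $\Pi_n$ and $\Sigma_n^{\bar p,\bar s}$ is single-variable and $\LK$-valid, because $\LK \vdash \bigwedge\Sigma_n^f$. I do not want to add cuts here. Instead, I use that the monotone feasible interpolation of Theorem~\ref{thm: FI For LKminus} applies to sequents $A(\bar p,\bar r)\Rightarrow B(\bar p,\bar s)$ with $A$ monotone in $\bar p$. Group the antecedent side as $A := Clique_n \wedge \bigwedge \Pi_n^f$. It uses only $\bar p,\bar r$ and atoms outside $\bar p\cup\bar s$, and $p$ occurs only in $Clique_n$, monotonically, because $\Pi_n$ contains no $\bar p$. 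Group the succedent as $B := \bigwedge(\Sigma_n^{\bar p,\bar s})^f \to \overline{Color}_n$, which uses only $\bar p,\bar s$. The conversion to $A\Rightarrow B$ is a linear number of cut-free $(L\wedge)$ and contraction steps. The extra atoms of $\Pi_n$ are private to the $A$ side, and any atoms of $\Sigma_n^{\bar p,\bar s}$ outside $\bar p$ lie in $\bar s$, so the variable split is legitimate. Theorem~\ref{thm: FI For LKminus} then yields a monotone circuit $C(\bar p)$ of size $|\pi|^{O(1)}$ interpolating $A \to B$.

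Finally, check that $C$ satisfies the hypotheses of Theorem~\ref{Thm: Alon}. The added formulas are all tautologies, each on a single variable. Fix $\bar u$.
\begin{itemize}
\item If $C(\bar u)=0$, then $A(\bar u,\bar r)$ is unsatisfiable. Since every formula of $\Pi_n^f$ is a tautology, $Clique_n(\bar u,\bar r)$ is unsatisfiable.
\item If $C(\bar u)=1$, then $B(\bar u,\bar s)$ is valid. Since $\bigwedge(\Sigma_n^{\bar p,\bar s})^f$ is valid, $\overline{Color}_n(\bar u,\bar s)$ is valid, so $Color_n(\bar u,\bar s)$ is unsatisfiable.
\end{itemize}
Theorem~\ref{Thm: Alon} then gives $|\pi|^{O(1)} \ge 2^{\Omega(n^{1/4})}$, hence $|\pi| \ge 2^{n^{\Omega(1)}}$.

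The main obstacle is the first step of the lower bound: faithfully simulating the $\mathbf{LK^-_!}$ rules in cut-free $\LK$ under $(\cdot)^f$ with only linear overhead. In particular, I need to confirm that $(R!)$ and the multiset/exchange conventions introduce no cut.
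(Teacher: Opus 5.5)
Your proposal is correct and follows the paper's proof essentially step for step. The upper bound is Corollary~\ref{Cor: LK to FLe} plus $(L*)$ and $(R\to)$. The lower bound combines four ingredients:
\begin{itemize}
\item the forgetful translation into $\LK^-$;
\item monotone feasible interpolation (Theorem~\ref{thm: FI For LKminus});
\item the observation that the single-variable formulas of $\Sigma_n^f$ are tautologies, so any interpolant of $S_n^f$ also interpolates the Clique--Color sequent;
\item Theorem~\ref{Thm: Alon}.
\end{itemize}
Your remaining worry is easily discharged. Under $(\cdot)^f$ the rules $(R!)$ and $(L!)$ become trivial, since premise and conclusion coincide. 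The rules $(W!)$ and $(C!)$ become ordinary weakening and contraction, and $(L*)$, $(R*)$ become $\wedge$-rules together with contraction or weakening, so no cut is ever introduced.
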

\begin{proof}
By Corollary \ref{Cor: LK to FLe}, we obtain an $\mathbf{FL_e}$-proof $\sigma_n$ of
\[
\Sigma_{n},\; Clique_n(\bar{p}, \bar{r}) \Rightarrow \overline{Color}_n(\bar{p}, \bar{s}),
\]
where $|\Sigma_{n}|$ and $|\sigma_{n}|$ are both polynomial in $n$.  
Since $\Sigma_n = \Pi_n \cup \Sigma_n^{\bar{p}, \bar{s}}$, using the rules $(L*)$ and $(R\to)$, we can construct an $\BFL$-proof of $S_n$ whose size is polynomial in $n$.

Now, we show that any $\mathbf{LK^-_!}$-proof of $S_n$ has size $2^{n^{\Omega(1)}}$. Recall that $Clique_n(\bar{p}, \bar{r})$ is monotone in $\bar{p}$. As there is no occurrence of $\bar{p}$ in $\Pi_n$, the antecedent of $S_n$ is also monotone in $\bar{p}$. Moreover, since the only variables in the succedent of $S_n$ are $\bar{p}$ and $\bar{s}$, and none of these variables occur in $\Pi_n$, the only atoms shared by the antecedent and the succedent of $S_n$ are those in $\bar{p}$.
Let $\{\rho_{n}\}_n$ be a sequence of $\mathbf{LK^-_!}$-proofs of $S_n$. As $\LK^-$ polynomially simulates $\LK^-_!$ up to the translation $f$, we obtain $\LK^-$-proofs $\tau_n$ for $S_n^f$ such that $|\tau_n| \leq |\rho_n|^{O(1)}$.
Using the monotone feasible interpolation property of $\LKminus$ on $\tau_n$'s, we get a monotone circuit $C_n$ in the inputs in $\bar{p}$ with size bounded by $|\tau_n|^{O(1)} \leq |\rho_n|^{O(1)}$ to compute a Craig interpolant $I_n$ for $S_n^f$. Any interpolant $I_n$ of $S_n^f$ is clearly an interpolant for
\[
Clique_n(\bar{p}, \bar{r}) \Rightarrow  \neg Color_n(\bar{p}, \bar{s}),
\]
as any formula in $\Sigma_n^f$ is classically valid.
By Theorem \ref{Thm: Alon}, the above sequent requires monotone interpolating circuit of size $2^{\Omega(n^{1/4})}$. Hence, $|\rho_n|^{O(1)} \geq |C_n| \geq 2^{n^{\Omega(1)}}$. Therefore, $|\rho_n| \geq 2^{n^{\Omega(1)}}$.
\end{proof}

Consequently, we get the following exponential speed-up between calculi for several linear logics and their cut-free versions.

\begin{corollary}\label{Cor: cut free}
There is an exponential speed-up between any of the sequent calculi $\BFL$, $\BFLc$, $\BFLw$, $\LJ$, $\BCFL$, $\BCFLc$, $\BCFLw$, $\LK$, $\BIMALL$, $\BAIMALL$, $\BRIMALL$, $\BILL$, $\BCLL$, $\BELL$, $\BALL$, $\BRLL$   and their cut-free versions.
\end{corollary}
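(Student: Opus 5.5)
The plan is to derive Corollary~\ref{Cor: cut free} directly from Theorem~\ref{MainCutFree}, using the sequents $S_n$ as a single separating family for every listed calculus $H$. The argument is sandwiched between a weakest system with cut and a strongest system without it. For the upper bound, $\BFL$ is contained in every calculus of the list, up to the harmless language adjustments below. For the lower bound, every cut-free calculus of the list feasibly embeds into $\mathbf{LK^-_!}$.

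For the upper bound, I would fix $H$ in the list and observe that $\BFL$-proofs are $H$-proofs. This holds because each $H$ contains all rules of $\BFL$ (identity, logical rules, $0,1$ rules and cut), possibly together with extra rules, exponentials, or multiple conclusions. For $\LJ$ and $\LK$ the language lacks $*$, $0$ and $1$. Here I would either use the variants $\LJu$, $\LKu$, which are polynomially equivalent as noted in the preliminaries, or apply the forgetful translation $(\cdot)^f$ to the $\BFL$-proof. That translation maps each $\BFL$ rule instance to a short $\LJ$ derivation, since $*\mapsto\wedge$, $0\mapsto\bot$ and $1\mapsto\top$, and the rules $(L*)$, $(R*)$, $(1w)$, $(0w)$ become derivable using weakening and contraction. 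In either case, $S_n$ (resp.\ $S_n^f$) has $H$-proofs of size $n^{O(1)}$.

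For the lower bound, let $H^-$ be the cut-free version and $\rho$ an $H^-$-proof of $S_n$ (resp.\ $S_n^f$). Every rule of $H^-$ is a rule of $\mathbf{LK^-_!}$ or is derivable in it without cut in constantly many steps. The single-conclusion restriction only narrows the proofs, and $!$-rules are present in $\mathbf{LK^-_!}$. The one exception is $\BELL$, whose $!$-rules need checking. I would handle it by erasing $!$ via $(\cdot)^f$: each $\BELL$ rule becomes cut-free derivable in $\LK^-$ after forgetting $!$, since functorial promotion $\Gamma\Rightarrow A / !\Gamma\Rightarrow !A$ becomes trivial. The $\LK^-$ lower bound from the proof of Theorem~\ref{MainCutFree} applies directly to $S_n^f$, because that argument already passes through $\LK^-$-proofs of $S_n^f$. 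Hence $|\rho|\geq 2^{n^{\Omega(1)}}$, so the polynomial versus exponential gap gives the speed-up.

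The main obstacle is the uniform cut-free simulation step. I must check that translating an $H^-$ rule into $\LK^-$ via $(\cdot)^f$ never requires a cut, especially for $!$-rules of the various linear systems and for the $(1w)$ and $(0w)$ rules. I would settle this by a rule-by-rule verification, which yields a size bound $|\tau|\le|\rho|^{O(1)}$.
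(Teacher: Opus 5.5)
Your proposal is correct and is the paper's argument: the sequents $S_n$ of Theorem~\ref{MainCutFree} have short proofs in every listed calculus because each contains $\BFL$, and they are exponentially hard for each cut-free version because that version is feasibly simulated by $\mathbf{LK^-_!}$ (or, after $(\cdot)^f$, by $\LK^-$). Your extra care with $\LJ$, $\LK$ (whose language lacks $*,0,1$) and with $\BELL$ fills in what the paper's one-line proof leaves implicit, but of your two options for $\LJ$, $\LK$ only the forgetful-translation route with the family $S_n^f$ proves the claim for $\LJ$, $\LK$ themselves: $S_n$ is not an $\mathcal{L}_p$-sequent, so passing to $\LJu$, $\LKu$ would establish it only for those variants.
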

\begin{proof}
The reason is simply that all these systems extend $\mathbf{FL}_e$ on the one hand, and their cut-free versions can be polynomially simulated by $\LK^{-}_{!}$.
\end{proof}

\section{Conclusion and Future Work}
We have isolated the power of structural rules in producing efficient proofs in the classical sequent calculus $\LK$ by systematically removing contraction and cut. Eliminating any single rule produces exponential proof-size blow-ups, yielding new separation results and lower bounds for substructural and linear proof systems. In particular, this answers positively the question posed in \cite{Raheleh} regarding extending the exponential lower bound for $\mathbf{CFL_{ew}^-}$ to sequent calculi with cut. A key observation is that these separations are not sensitive to the addition of the controlled structural reasoning supported by the linear exponential ``$!$''. This shows that the obtained hardness is intrinsic to the absence of structural rules and cannot be compensated by controlled versions of these rules.

For future work, Table \ref{table: summary} highlights remaining challenges, such as establishing exponential separations between logics without weakening and contraction versus those with only weakening or contraction, as well as comparisons among cut-free calculi.
Another promising direction is asking whether exponential lower bounds can be proved for the number of proof lines rather than overall proof size for affine linear logic. This is closely related to extending our results from Frege systems to extended Frege (EF) systems, which formalize the practice of naming large subformulas via extension axioms to obtain shorter proofs. EF systems are more robust: bounds on EF proof size essentially correspond to bounds on the number of lines in a Frege proof, and they can be viewed as Frege systems operating with circuits, allowing many arguments to carry over more smoothly. Inspired by Question 9.2 in \cite{Jerabek2}, it is also natural to ask whether Frege and extended Frege systems can be separated in the substructural and linear setting.\\

\noindent \textbf{Acknowledgment.} We would like to thank Nick Galatos, Erfan Khaniki, and Vitor Greati for fruitful discussions. A. Akbar Tabatabai acknowledges the financial support of the Dutch Research Council (NWO) through project OCENW.M.22.258.

\section{Detailed Proofs} \label{Sec: Detailed Proofs}

This section contains some detailed proofs deferred from previous sections to improve the readability of the main arguments. For the reader's convenience, we restate the corresponding lemmas and theorems.



\ThmCFLeToFLe*
\begin{proof}
We begin by proving the first part, where $D$ is arbitrary, and restrict attention to the case $L=\mathsf{LL}$. The remaining cases follow by the same argument, simply by restricting the language. For the second part, it suffices to address the remaining axiom, i.e., weakening, which we treat at the end.

Let $\pi= A_1, \ldots, A_n$ be an $L$-$\F$ proof of $A$ using the axioms and rules in Tables~\ref{Fig: FL_e} and~\ref{Fig: linear}. 
To construct an $iG_D$-proof of $(\, \Rightarrow A^t)$ with $O(|\pi|)$ many lines, it suffices to establish the following two points. First, for any instance $\alpha$ of an axiom of $L$-$\F$, the sequent $(\, \Rightarrow \alpha^t)$ has an $iG_D$-proof with $O(|\alpha|)$ many lines. Second, each rule of $L$-$\F$ can be simulated in $iG_D$ by adding only constantly many lines. We begin by analyzing the axiom case.

$\bullet$ Axiom $(\text{id})$: Let $\alpha$ be an instance of the axiom $(\text{id})$. Then:
\[
\alpha^t = (A \to A)^t= (A^t \to A^t) \wedge (A^s \to A^s).
\]
Clearly, $(\, \Rightarrow \alpha^t)$ has an $iG_D$-proof with $O(1)$ many lines.

$\bullet$ Axiom $(\text{pf})$:
Let $\alpha$ be an instance of the axiom $(\text{pf})$:
\[
(A \rightarrow B) \rightarrow \big((C \rightarrow A) \rightarrow(C \rightarrow B) \big).
\]
Then, by Definition \ref{Def: translation}, $\alpha^t= \beta \wedge \gamma$, where
\[
\beta = (A \to B)^t \to \big((C \rightarrow A) \rightarrow(C \rightarrow B) \big)^t,
\]
\[
\gamma= \big((C \rightarrow A) \rightarrow(C \rightarrow B) \big)^s \to (A \to B)^s.
\]
First, let us construct an $iG_D$-proof for $\beta$. By Definition \ref{Def: translation}, $\beta= \phi \to (\psi \wedge \theta)$, where
\[
\phi = (A^t \to B^t) \wedge (B^s \to A^s),
\]
\[
\psi= [(C^t \to A^t) \wedge (A^s \to C^s)] \to [(C^t \to B^t)
\wedge (B^s \to C^s)], 
\]
\[
\theta= (C^t * B^s \to C^t *A^s).  
\]
Now, since:
\[
iG_D \vdash B^s \to A^s \Rightarrow C^t * B^s \to C^t *A^s,
\]
we get $iG_D \vdash \phi \Rightarrow \theta$. Moreover, as the following sequents are provable in $iG_D$:
\[
C^t \to A^t , A^t \to B^t \Rightarrow C^t \to B^t,
\]
\[
B^s \to A^s, A^s \to C^s \Rightarrow B^s \to C^s,
\]
we get $iG_D \vdash \phi \Rightarrow \psi$. Thus $iG_D \vdash \; \Rightarrow \beta$. Inspecting the construction of the $iG_D$-proof, it is clear that it contains only a constant number of formulas.

For the second conjunct of $\alpha^t$, namely $\gamma$, we have
\[
\gamma= [((C^t \to A^t) \wedge (A^s \to C^s)) * (C^t * B^s)] \rightarrow A^t * B^s. 
\]
Using 
\[
iG_D \vdash C^t \to A^t, C^t, B^s \Rightarrow A^t * B^s,
\]
we clearly get an $iG_D$-proof for $(\, \Rightarrow \gamma)$ with constantly many lines. Thus, the sequent $(\, \Rightarrow \alpha^t)$ has an $iG_D$-proof with $O(1)$ many lines.

$\bullet$ Axiom (per): Let $\alpha$ be an instance of the axiom (per):
\[
\big(A \rightarrow(B \rightarrow C)\big) \rightarrow \big(B \rightarrow(A \rightarrow C) \big).
\]
Then, by Definition \ref{Def: translation}, $\alpha^t= \beta \wedge \gamma$, where
\[
\beta= \big(A \rightarrow(B \rightarrow C)\big)^t \rightarrow \big(B \rightarrow(A \rightarrow C) \big)^t,
\]
\[
\gamma= \big(B \rightarrow(A \rightarrow C) \big)^s \to \big(A \rightarrow(B \rightarrow C)\big)^s.
\]
First, we construct an $iG_D$-proof for $\beta$. By Definition \ref{Def: translation}, $\beta= \phi \to (\psi_1 \wedge \psi_2)$, where
\[
\phi= [A^t \to ((B^t \to C^t) \wedge (C^s \to B^s))] \wedge (B^t * C^s \to A^s),
\]
\[
\psi_1= B^t \to ((A^t \to C^t) \wedge (C^s \to A^s)),
\]
\[
\psi_2= A^t * C^s \to B^s.
\]
Now, to get $iG_D \vdash \phi \Rightarrow \psi_1$,
it is enough to prove $iG_D \vdash \phi, B^t \Rightarrow A^t \to C^t$ and $iG_D \vdash \phi, B^t \Rightarrow C^s \to A^s$. For the first, consider 
\[
iG_D \vdash A^t \to (B^t \to C^t), B^t  \Rightarrow A^t \to C^t
\]
and
\[
iG_D \vdash \phi \Rightarrow  A^t \to (B^t \to C^t)
\]
and apply the cut rule.
For the second, it is enough to use the following:
\[
iG_D \vdash (B^t *C^s) \to A^s, B^t \Rightarrow C^s \to A^s.
\]
Therefore, we have shown $iG_D \vdash \phi \Rightarrow \psi_1$. In each case, it is clear that the constructed $iG_D$-proofs have $O(1)$ many lines.

For $iG_D \vdash \phi \Rightarrow \psi_2$, note that
\[
iG_D \vdash \phi \Rightarrow A^t \to (C^s \to B^s)
\]
and
\[
iG_D \vdash A^t \to (C^s \to B^s) \Rightarrow A^t * C^s \to B^s.
\]
Then, we apply cut. Again, it is clear that the constructed $iG_D$-proof has $O(1)$ many lines.

For the second conjunct of $\alpha^t$, namely $\gamma$, we have
\[
\gamma = B^t *(A^t * C^s) \to A^t * (B^t * C^s),
\]
which clearly has an $iG_D$-proof with constant number of lines 
by the associativity and commutativity of multiplication.

$\bullet$ Axiom $(* \wedge)$: Let $\alpha$ be an instance of the axiom $(* \wedge)$:
\[
\big((A \wedge 1) * (B \wedge 1)\big) \rightarrow (A \wedge B).
\]
Then, by Definition \ref{Def: translation}, $\alpha^t= \beta \wedge \gamma$, where
\[
\beta= \big((A \wedge 1) * (B \wedge 1)\big)^t \to (A \wedge B)^t,
\]
\[
\gamma= (A \wedge B)^s \to \big((A \wedge 1) * (B \wedge 1)\big)^s.
\]
To prove $\beta$, by Definition \ref{Def: translation} we get
\[
\beta= (A^t \wedge 1) * (B^t \wedge 1) \to A^t \wedge B^t. 
\]
Clearly $iG_D \vdash \; \Rightarrow \beta$ with a proof with constant many lines. For $\gamma$, take
\[
\phi= (A^t \wedge 1) \to (B^s \vee D) \qquad
\psi= (B^t \wedge 1) \to (A^s \vee D).
\]
By Definition \ref{Def: translation} we get
\[
\gamma= (A^s \vee B^s) \to (\phi \wedge \psi).
\]
Let us prove $iG_D \vdash A^s \Rightarrow (\phi \wedge \psi)$. 
The case of $iG_D \vdash B^s \Rightarrow (\phi \wedge \psi)$ is similar. By Lemma \ref{Lem: iGD}, the sequent $A^t, A^s \Rightarrow D$ and hence $A^s, A^t \wedge 1 \Rightarrow B^s \vee D$ has an $iG_D$-proof with $O(|A|) \leq O(|\alpha|)$ many lines. Moreover, $A^s, B^t \wedge 1 \Rightarrow A^s \vee D$ clearly has an $iG_D$-proof with $O(1)$ many lines. 
Therefore, using the $iG_D$-proofs of
\[
A^s, A^t \wedge 1 \Rightarrow B^s \vee D \qquad
A^s, B^t \wedge 1 \Rightarrow A^s \vee D,
\]
we get $iG_D \vdash A^s \Rightarrow (\phi \wedge \psi)$ with a proof with $O(|\alpha|)$ many lines.

$\bullet$ Axiom $(\wedge \! \to)_1$: Let $\alpha$ be an instance of the axiom:
\[
(A \wedge B) \rightarrow A.
\]
Then, by Definition \ref{Def: translation}:
\[
\alpha^t= (A^t \wedge B^t \to A^t) \wedge (A^s \to A^s \vee B^s).
\]
Clearly $(\, \Rightarrow \alpha^t)$ has an $iG_D$-proof with $O(1)$ many lines. Similarly, we can prove the case of the axiom $(\wedge \! \to)_2$.

$\bullet$ Axiom $(\to \!\wedge)$: Let $\alpha$ be an instance of the axiom $(\rightarrow\! \wedge)$:
\[
\big((A \rightarrow B) \wedge(A \rightarrow C)\big) \rightarrow \big(A \rightarrow(B \wedge C) \big).
\]
Then, by Definition \ref{Def: translation}, $\alpha^t= \beta \wedge \gamma$, where
\[
\beta= (A \to B)^t \wedge (A \to C)^t \to (A \to B \wedge C)^t,
\]
\[
\gamma= (A \to B \wedge C)^s \to (A \to B)^s \vee (A \to C)^s.
\]
Hence, 
\[
\beta= [((A^t \to B^t) \wedge (B^s \to A^s)) \wedge ((A^t \to C^t) \wedge (C^s \to A^s))] \to 
\]
\[
[(A^t \to B^t \wedge C^t) \wedge (B^s \vee C^s \to A^s)],
\]
\[
\gamma= A^t * (B^s \vee C^s) \to (A^t * B^s) \vee (A^t * C^s).
\]
The sequent $(\, \Rightarrow \beta)$ is provable in $iG_D$ because 
\[
iG_D \vdash (A^t \to B^t) \wedge (A^t \to C^t) \Rightarrow A^t \to (B^t \wedge C^t),
\]
\[
iG_D \vdash (B^s \to A^s) \wedge (C^s \to A^s) \Rightarrow (B^s \vee C^s) \to A^s.
\]
And $(\, \Rightarrow \gamma)$ is provable in $iG_D$ because of the distributivity of $*$ over $\vee$. The $iG_D$-proofs for $(\, \Rightarrow \beta)$ and $(\, \Rightarrow \gamma)$ have $O(1)$ many lines. Thus, $(\, \Rightarrow \alpha^t)$ has an $iG_D$-proof with $O(1)$ many lines.

$\bullet$ Axiom $(\rightarrow \! \vee)_1$: Let $\alpha$ be an instance of the axiom:
\[
A \rightarrow(A \vee B).
\]
Then, by Definition \ref{Def: translation},
\[
\alpha^t = (A^t \to A^t \vee B^t) \wedge (A^s \wedge B^s \to A^s)
\]
Clearly $(\, \Rightarrow \alpha^t)$ has an $iG_D$-proof with $O(1)$ many lines. Similarly, we can prove the case of the axiom $(\to \! \vee)_2$.

$\bullet$ Axiom $(\vee \! \rightarrow)$: Let $\alpha$ be an instance of the axiom $(\vee \! \rightarrow)$:
\[
\big((A \rightarrow C) \wedge(B \rightarrow C)\big) \rightarrow (A \vee B \rightarrow C).
\]
Then, by Definition \ref{Def: translation}, $\alpha^t= \beta \wedge \gamma$, where
\[
\beta= \big((A \rightarrow C) \wedge(B \rightarrow C)\big)^t \rightarrow (A \vee B \rightarrow C)^t,
\]
\[
\gamma=(A \vee B \rightarrow C)^s \to \big((A \rightarrow C) \wedge(B \rightarrow C)\big)^s.
\]
To prove $\beta$, take 
\[
\phi=((A^t \to C^t) \wedge (C^s \to A^s)) \wedge ((B^t \to C^t) \wedge (C^s \to B^s)),
\]
\[
\psi_1= (A^t \vee B^t) \to C^t \qquad \psi_2= C^s \to (A^s \wedge B^s).
\]
Then $\beta= \phi \to (\psi_1 \wedge \psi_2)$. Since the following sequents have $iG_D$-proofs with $O(1)$ many lines:
\[
(A^t \to C^t) \wedge (B^t \to C^t) \Rightarrow (A^t \vee B^t) \to C^t, 
\]
\[
(C^s \to A^s) \wedge (C^s \to B^s) \Rightarrow C^s \to (A^s \wedge B^s),
\]
we get an $iG_D$-proof for $(\, \Rightarrow \beta)$ with $O(1)$ many lines. For $\gamma$, by Definition \ref{Def: translation}:
\[
\gamma= (A^t \vee B^t) * C^s \to (A^t * C^s) \vee (B^t * C^s).
\]
Clearly, $(\; \Rightarrow \gamma)$ has an $iG_D$-proof with $O(1)$ many lines. 

$\bullet$ Axiom $(\rightarrow \! *)$: Let $\alpha$ be an instance of the axiom $(\rightarrow \! *)$:
\[
B \rightarrow(A \rightarrow (A \st B)).
\]
Then, by Definition \ref{Def: translation}, $\alpha^t= \beta \wedge \gamma$, where
\[
\beta= B^t \to (A \to (A *B))^t
\qquad
\gamma= (A \to (A *B))^s \to B^s.
\]
For $\beta$, take 
\[
\phi_1= A^t \to A^t * B^t
\quad
\phi_2= [(A^t \to B^s) \wedge (B^t \to A^s)] \to A^s.
\]
Thus, $\beta= B^t \to \phi_1 \wedge \phi_2$. Clearly the sequents
\[
B^t \Rightarrow (A^t \to A^t * B^t) \qquad B^t, B^t \to A^s \Rightarrow A^s
\]
have $iG_D$-proofs with $O(1)$ many lines and so does $(\, \Rightarrow \beta)$. For $\gamma$ we have
\[
\gamma = \big(A^t * ((A^t \to B^s) \wedge (B^t \to A^s))\big) \to B^s.
\]
As $A^t * (A^t \to B^s) \Rightarrow B^s$ has an $iG_D$-proof with $O(1)$ many lines, so does $(\, \Rightarrow \gamma)$.

$\bullet$ Axiom $(* \! \rightarrow)$: Let $\alpha$ be an instance of the axiom $(* \! \rightarrow)$:
\[
\big(B \rightarrow(A \rightarrow C)\big) \rightarrow((A \st B) \rightarrow C).
\]
Then, by Definition \ref{Def: translation}, $\alpha^t= \beta \wedge \gamma$, where
\[
\beta = \big(B \rightarrow(A \rightarrow C)\big)^t  \rightarrow((A \st B) \rightarrow C)^t,
\]
\[
\gamma = ((A \st B) \rightarrow C)^s \to \big(B \rightarrow(A \rightarrow C)\big)^s.
\]
For $\beta$, take
\[
\phi_1= B^t \to \big((A^t \to C^t) \wedge (C^s \to A^s) \big) \quad \phi_2= A^t * C^s \to B^s
\]
\[
\psi_1= (A^t * B^t) \to C^t \quad \psi_2 = C^s \to \big((A^t \to B^s) \wedge (B^t \to A^s)  \big)
\]
Thus $\beta = (\phi_1 \wedge \phi_2) \to (\psi_1 \wedge \psi_2)$. Now, notice
\begin{itemize}
\item 
$iG_D \vdash B^t \to (A^t \to C^t) \Rightarrow A^t * B^t \to C^t$,
\item 
$iG_D \vdash C^s, A^t, A^t * C^s \to B^s \Rightarrow B^s$,
\item 
$iG_D \vdash C^s, B^t, B^t \to (C^s \to A^s) \Rightarrow A^s$,
\end{itemize}
all with proofs with $O(1)$ many lines. Hence, 
\begin{itemize}
\item 
$iG_D \vdash \phi_1 \Rightarrow \psi_1$,
\item 
$iG_D \vdash \phi_2, C^s \Rightarrow (A^t \to B^s)$,
\item 
$iG_D \vdash \phi_1, C^s \Rightarrow (B^t \to A^s)$,
\end{itemize}
all with proofs with $O(1)$ many lines. Thus, $iG_D \vdash \; \Rightarrow \beta$ with a proof with $O(1)$ many lines. For $\gamma$ we have
\[
\gamma= (A^t * B^t) * C^s \to B^t * (A^t * C^s).
\]
By the commutativity and associativity of multiplication, $(\, \Rightarrow \gamma)$ clearly has an $iG_D$-proof with $O(1)$ many lines. 

$\bullet$ Axiom (1): Clear by Definition \ref{Def: translation}.

$\bullet$ Axiom $(1 \! \rightarrow)$: Let $\alpha$ be an instance of the axiom $(1 \! \rightarrow)$:
\[
1 \rightarrow(A \rightarrow A).
\]
Then, by Definition \ref{Def: translation}, $\alpha^t= \beta \wedge \gamma$, where
\[
\beta= 1 \to (A \to A)^t \qquad
\gamma= (A \to A)^s \to D.
\]
Thus
\[
\beta= 1 \to ((A^t \to A^t) \wedge (A^s \to A^s))
\qquad
\gamma= A^t * A^s \to D.
\]
Therefore, the sequents $(\, \Rightarrow \beta)$ and $(\, \Rightarrow \gamma)$ have $iG_D$-proofs with $O(1)$ and $O(|A|) \leq O(|\alpha|)$ many lines, respectively. In the case of $\gamma$, we use Lemma~\ref{Lem: iGD}.

$\bullet$ Axiom (top): Take an instance $\alpha$ of the axiom $A \to \top$. By Definition:
\[
\alpha^t = (A^t \to \top) \wedge (\bot \to A^s).
\]
Clearly $(\, \Rightarrow \alpha^t)$ has an $iG_D$-proof with $O(1)$ many lines.

$\bullet$ Axiom (bot): Take an instance $\alpha$ of the axiom $\bot \to A$. By Definition:
\[
\alpha^t = (\bot \to A^t) \wedge (A^s \to \top).
\]
Clearly $(\, \Rightarrow \alpha^t)$ has an $iG_D$-proof with $O(1)$ many lines.

$\bullet$ Axiom $(\text{dn})$:
Take an instance $\alpha$ of the axiom $\neg \neg A \to A$. By Definition:
\[
\alpha^t= (\neg \neg A \to A)^t= ((\neg \neg A)^t \to A^t) \wedge (A^s \to (\neg \neg A)^s).
\]
By Lemma \ref{Lem: iGD}, the following sequents have $iG_D$-proofs with $O(|A|) \leq O(|\alpha|)$ many lines:
\[
(\neg \neg A)^t \Leftrightarrow (\neg A)^s \Leftrightarrow A^t \qquad (\neg \neg A)^s \Leftrightarrow (\neg A)^t \Leftrightarrow A^s.
\]
Thus, $(\, \Rightarrow \alpha^t)$ has an $iG_D$-proof with $O(|\alpha|)$ many lines.

$\bullet$ Axiom $(!w)$: Let $\alpha$ be an instance of the axiom $(!w)$:
\[
A \to (!B \to A).
\]
Then, by Definition \ref{Def: translation}, $\alpha^t= \beta \wedge \gamma$, where
\[
\beta= A^t \to (!B \to A)^t
\qquad
\gamma= (!B \to A)^s \to A^s.
\]
For $\beta$, note that
\[
\beta= A^t \to \big((!B^t \to A^t) \wedge (A^s \to (!B^t \to D))  \big).
\]
Using the rule $(W!)$, we clearly have an $iG_D$-proof of $A^t, !B^t \Rightarrow A^t$ with $O(1)$ many lines. Moreover, by Lemma \ref{Lem: iGD} and the rule $(W!)$, the sequent
$A^t \Rightarrow (A^s \to (!B^t \to D))$ has an $iG_D$-proof with $O(|A|) \leq O(|\alpha|)$ many lines. Therefore, $(\, \Rightarrow \beta)$ has an $iG_D$-proof with $O(|\alpha|)$ many lines. For $\gamma$ we have 
\[
\gamma = !B^t * A^s \to A^s.
\]
Clearly $(\, \Rightarrow \gamma)$ has an $iG_D$-proof with $O(1)$ many lines.

$\bullet$ Axiom $(!c)$: Let $\alpha$ be an instance of the axiom $(!c)$:
\[
(!A \to (!A \to B)) \to (!A \to B).
\]
Then, by Definition \ref{Def: translation}, $\alpha^t= \beta \wedge \gamma$, where
\[
\beta= (!A \to (!A \to B))^t \to (!A \to B)^t,
\]
\[
\gamma= (!A \to B)^s \to (!A \to (!A \to B))^s.
\]
For $\beta$, take
\[
\phi_1= !A^t \to [(!A^t \to B^t) \wedge (B^s \to (!A^t \to D))]
\]
\[
\phi_2= (!A^t * B^s )\to (!A^t \to D)
\]
\[
\psi_1= !A^t \to B^t \qquad \psi_2= B^s \to (!A^t \to D)
\]
Thus, $\beta= (\phi_1 \wedge \phi_2) \to (\psi_1 \wedge \psi_2)$. Using the rule $(C!)$, the sequent
\[
!A^t \to (!A^t \to B^t) \Rightarrow (!A^t \to B^t) 
\]
has an $iG_D$-proof with $O(1)$ many lines and so does $\phi_1 \Rightarrow \psi_1$. Moreover, 
\[
!A^t, !A^t, B^s, (!A^t * B^s )\to (!A^t \to D) \Rightarrow D
\]
has an $iG_D$-proof with $O(1)$ many lines.
Thus, using the rule $(C!)$ in $iG_D$, we get an $iG_D$-proof of 
\[
!A^t, B^s, (!A^t * B^s )\to (!A^t \to D) \Rightarrow D
\]
with $O(1)$ many lines.
Thus, $iG_D \vdash \phi_2 \Rightarrow \psi_2$ and hence $iG_D \vdash (\, \Rightarrow \beta)$ has an $iG_D$-proof with $O(1)$ many lines. For $\gamma$ we have 
\[
\gamma= (!A^t * B^s) \to !A^t * (!A^t * B^s).
\]
Using the rule $(C!)$, the sequent $(\, \Rightarrow \gamma)$ clearly has an $iG_D$-proof with $O(1)$ many lines.

$\bullet$ Axiom $(!\text{K})$: Let $\alpha$ be an instance of the axiom $(!\text{K})$:
\[
!(A \to B) \to (!A \to !B).
\]
Then, by Definition \ref{Def: translation}, $\alpha^t= \beta \wedge \gamma$, where
\[
\beta= (!(A \to B))^t \to (!A \to !B)^t,
\]
\[
\gamma= (!A \to !B)^s \to (!(A \to B))^s.
\]
For $\beta$, take
\[
\phi= !((A^t \to B^t) \wedge (B^s \to A^s))
\]
\[
\psi_1=!A^t \to !B^t \qquad \psi_2= (!B^t \to D) \to (!A^t \to D).
\]
Thus, $\beta= \phi \to (\psi_1 \wedge \psi_2)$. To construct an $iG_D$-proof of $\phi \Rightarrow \psi_1$, consider the sequent
\[
A^t \to B^t, A^t \Rightarrow B^t,
\]
that has an $iG_D$-proof with $O(1)$ many lines.
Then, by applying the rules $(L \wedge)$, $(L!)$, and $(R!)$, we obtain an $iG_D$-proof of
\[
S:=!\big((A^t \to B^t) \wedge (B^s \to A^s)\big), !A^t \Rightarrow !B^t,
\]
and hence of $\phi \Rightarrow \psi_1$ with $O(1)$ many lines. For $\phi \Rightarrow \psi_2$, construct an $iG_D$-proof of
\[
!B^t, !B^t \to D \Rightarrow D.
\]
Then, using the constructed $iG_D$-proof of $S$ and applying the cut rule, we get an $iG_D$-proof of
\[
!\big((A^t \to B^t) \wedge (B^s \to A^s)\big), !A^t, !B^t \to D \Rightarrow D,
\]
which easily provides an $iG_D$-proof of $\phi \Rightarrow \psi_2$. It is easy to see that this proof has $O(1)$ many lines.

For $\gamma$ we have 
\[
\gamma = (!A^t * (!B^t \to D)) \to (\phi \to D).
\]
Note that $(\, \Rightarrow \gamma)$ is already proved in the discussion for $\beta$.

$\bullet$ Axiom $(!\text{T})$: Let $\alpha$ be an instance of the axiom $!A \to A$. Then, by Definition \ref{Def: translation},
\[
\alpha^t= (!A^t \to A^t) \wedge (A^s \to (!A^t \to D)).
\]
The left conjunct is clearly provable by the rule $(L!)$ and the right conjunct can be derived using Lemma \ref{Lem: iGD} and the rule $(L!)$. It is clear that the first proof has $O(1)$ many lines, while the second has $O(|A|) \leq O(|\alpha|)$ many lines.

$\bullet$ Axiom $(!\text{4})$: Let $\alpha$ be an instance of the axiom $!A \to !!A$. Then, by Definition \ref{Def: translation},
\[
\alpha^t = (!A^t \to !!A^t) \wedge ((!!A^t \to D) \to (!A^t \to D)).
\]
For the left conjunct, it is enough to prove $!A \Rightarrow !!A$ by the rule $(R!)$, and the right conjunct has an easy proof by $!A \Rightarrow !!A$. It is clear that the proof has $O(1)$ many lines.

$\bullet$ Rule (mp): Take the following instance of the rule: 
\begin{prooftree}
    \AxiomC{$A$}
    \AxiomC{$A \to B$}
    \BinaryInfC{$B$}
\end{prooftree}
After applying the translation, we should derive $(\, \Rightarrow B^t)$ from $(\, \Rightarrow A^t)$ and $(\, \Rightarrow (A \to B)^t)$ in $iG_D$.
As
\[
(A \to B)^t \Rightarrow A^t \to B^t \qquad A^t, A^t \to B^t \Rightarrow B^t
\]
have $iG_D$-proofs with $O(1)$ many lines, one can use the cut rule to get an $iG_D$-proof of $(\, \Rightarrow B^t)$. 
Note that addressing each occurrence of the modus ponens rule costs $O(1)$ many more lines. 

$\bullet$ Rule $(\operatorname{adj}_u)$: Take the following instance of the rule:
\begin{prooftree}
    \AxiomC{$A$}
    \UnaryInfC{$A \wedge 1$}
\end{prooftree}
By Definition \ref{Def: translation}, $(A \wedge 1)^t= A^t \wedge 1$. As $iG_D \vdash A^t \Rightarrow A^t \wedge 1$, we can prove $(\, \Rightarrow (A \wedge 1)^t)$ from $(\, \Rightarrow A^t)$ in $iG_D$ by adding $O(1)$ many more lines.

$\bullet$ Rule (nec): Take the following instance of the rule:
\begin{prooftree}
    \AxiomC{$A$}
    \UnaryInfC{$!A$}
\end{prooftree}
By Definition \ref{Def: translation}, $(!A)^t=!A^t$. Applying $(R!)$ on $(\, \Rightarrow A)$ proves $\, \Rightarrow (!A)^t$ with adding constant many lines.

This completes the proof of the first part of the theorem. For the second part, it is left to address the weakening axiom, assuming $D=\bot$.

$\bullet$ Axiom $(w)$: Let $\alpha$ be an instance of the axiom $(w)$:
\[
A \to (B \to A).
\]
Then, by Definition \ref{Def: translation}, $\alpha^t= \beta \wedge \gamma$, where
\[
\beta= A^t \to (B \to A)^t \qquad \gamma= (B \to A)^s \to A^s.
\]
We have
\[
\beta= A^t \to \big((B^t \to A^t) \wedge (A^s \to B^s)  \big) \quad \gamma= B^t * A^s \to A^s.
\]
For $(\, \Rightarrow \beta)$, on the one hand, by $(Lw)$, we easily obtain an $iG_D$-proof of $A^t \Rightarrow B^t \to A^t$ with $O(1)$ many lines. On the other hand, since $D=\bot$, by Lemma~\ref{Lem: iGD} the sequent $A^t, A^s \Rightarrow \bot$ has an $iG_D$-proof with $O(|A|) \leq O(|\alpha|)$ many lines. Then, using the axiom $(\bot)$ and the cut rule, we derive an $iG_D$-proof of $A^t \Rightarrow A^s \to B^s$ with $O(|\alpha|)$ many lines. This yields the desired proof for $(\, \Rightarrow \beta)$. 
For $(\, \Rightarrow \gamma)$, using the rules $(Lw)$, $(L*)$, and $(R \to)$, one can easily obtain an $iG_D$-proof with $O(|\alpha|)$ many lines.
\end{proof}



\noindent \textbf{Lemma 32.} (Feasible Deduction Theorem) \emph{Let $\Gamma$ and $\Delta$ be multisets of $\calL_u$-formulas, where $\Delta$ contains at most one formula. For any tree-like $G$-proof $\pi$ of $\Gamma \Rightarrow \Delta$, there exists a multiset $\Sigma_{\pi}$ of single-variable $\calL_u$-formulas and an $\BFL$-proof $\sigma_\pi$ of $\Sigma_\pi, \Gamma \Rightarrow \Delta$, such that $|\Sigma_\pi| \leq O(|\pi|)$, $|\sigma_\pi| \leq O(|\pi|^3)$, and $\LK \vdash \, \Rightarrow \bigwedge \Sigma_\pi^f$.}
\begin{proof}
We construct the multiset $\Sigma_\pi$ and the $\mathbf{FL_e}$-proof $\sigma_\pi$ by induction on the structure of the proof $\pi$ in such a way that every formula in $\Sigma_{\pi}$ has the form $C \wedge 1$ for some formula $C$.

If $\pi$ is an axiom of $\mathbf{FL_e}$, set $\Sigma_{\pi}$ to be the empty set and $\sigma_{\pi} = \pi$. It is clear that this $\Sigma_{\pi}$ and $\sigma_{\pi}$ work.
If $\pi$ is an added initial sequent of the form $A \Rightarrow B$ (resp. $\, \Rightarrow B$) in $G$, set $\Sigma_{\pi}$ to be $\{(A \to B) \wedge 1\}$ (resp. $\{B \wedge 1\}$). Moreover, for $\sigma_{\pi}$ consider the following $\mathbf{FL_e}$-proofs
\begin{center}
\begin{tabular}{c c}
\AxiomC{$A \Rightarrow A$}
\AxiomC{$ \Rightarrow B$}
\BinaryInfC{$A , A \to B \Rightarrow B$}
\UnaryInfC{$A , (A \to B) \wedge 1 \Rightarrow B$}
\DisplayProof \qquad
&
\AxiomC{$B \Rightarrow B$}
\UnaryInfC{$B \wedge 1 \Rightarrow B$}
\DisplayProof
\end{tabular}
\end{center}
respectively. Note that the formulas in $\Sigma_{\pi}$ are of the form:
\begin{center}
$(p \vee \neg p) \wedge 1 \quad (p \to 1)\wedge 1 \quad (\neg p \to 1)\wedge 1 \quad (0 \to p) \wedge 1$
\end{center}
\begin{center}
$(0 \to \neg p) \wedge 1 \quad (0 \to 0 *0) \wedge 1$,
\end{center}
where $p$ is an atomic formula. Therefore, all formulas in $\Sigma_{\pi}$ have the form $C \wedge 1$, and moreover $\LK \vdash \, \Rightarrow \bigwedge \Sigma_{\pi}^f$. In these cases, note that $|\Sigma_{\pi}| \leq O(|\pi|)$ and $|\sigma_{\pi}| \leq O(|\pi|)$.

For the induction step, if the last rule applied in $\pi$ is a unary rule applied to $\pi'$, then set $\Sigma_\pi = \Sigma_{\pi'}$. The proof $\sigma_{\pi}$ is also obtained by applying the same rule to $\sigma_{\pi'}$. For instance, if the last rule in $\pi$ is $(L\wedge)$, then we have:
\begin{center}
\begin{tabular}{l l l}
\AxiomC{$\pi'$}
\noLine
\UnaryInfC{$\Gamma, A \Rightarrow \Delta$}
\UnaryInfC{$\Gamma, A \wedge B \Rightarrow \Delta$}
\DisplayProof
& 
$\rightsquigarrow$
&
\AxiomC{$\sigma_{\pi'}$}
\noLine
\UnaryInfC{$\Sigma_{\pi'}, \Gamma, A \Rightarrow \Delta$}
\UnaryInfC{$\Sigma_{\pi'},\Gamma, A \wedge B \Rightarrow \Delta$}
\DisplayProof
\end{tabular}
\end{center}
Note that 
\[
|\sigma_{\pi}| \leq |\sigma_{\pi'}|+|\Sigma_{\pi'}|+|\pi'| \leq O(|\pi'|^3)+O(|\pi'|) \leq O(|\pi|^3).
\]
If the last rule in $\pi$ is a binary rule applied to $\pi'$ and $\pi''$, then it is either an additive rule (i.e., $(L \vee)$ or $(R \wedge)$) or a multiplicative rule (i.e., $(L \to)$ or $(R *)$). In both cases, we set $\Sigma_\pi = \Sigma_{\pi'} \cup \Sigma_{\pi''}$, and $\sigma_{\pi}$ can be constructed easily from $\sigma_{\pi'}$ and $\sigma_{\pi''}$. For instance, suppose the last rule applied in $\pi$ is the multiplicative rule $(L \to)$:
\begin{center}
\AxiomC{$\pi'$}
\noLine
\UnaryInfC{$\Gamma_1 \Rightarrow A $}
\AxiomC{$\pi''$}
\noLine
\UnaryInfC{$ \Gamma_2, B \Rightarrow \Delta$}
\BinaryInfC{$\Gamma_1, \Gamma_2, A \to B \Rightarrow \Delta$}
\DisplayProof
\end{center}
Then, using the induction hypothesis, we obtain $\sigma_\pi$ as:
\begin{center}
\AxiomC{$\sigma_{\pi'}$}
\noLine
\UnaryInfC{$\Sigma_{\pi'},\Gamma_1 \Rightarrow A $}
\AxiomC{$\sigma_{\pi''}$}
\noLine
\UnaryInfC{$\Sigma_{\pi''}, \Gamma_2, B \Rightarrow \Delta$}
\BinaryInfC{$\Sigma_{\pi'},\Sigma_{\pi''},\Gamma_1, \Gamma_2, A \to B \Rightarrow \Delta$} 
\DisplayProof 
\end{center} 
If the last rule in $\pi$ is the additive rule $(R \wedge)$:
\begin{prooftree}
\AxiomC{$\pi'$}
\noLine
\UnaryInfC{$\Gamma \Rightarrow A $}
\AxiomC{$\pi''$}
\noLine
\UnaryInfC{$ \Gamma \Rightarrow B$}
\BinaryInfC{$\Gamma \Rightarrow A \wedge B$}
\end{prooftree}
Then, by the induction hypothesis we get $\sigma_\pi$ as:
\begin{prooftree}
\AxiomC{$\sigma_{\pi'}$}
\noLine
\UnaryInfC{$\Sigma_{\pi'},\Gamma \Rightarrow A $}
\doubleLine
\UnaryInfC{$\Sigma_{\pi'},\Sigma_{\pi''},\Gamma \Rightarrow A $}
\AxiomC{$\sigma_{\pi''}$}
\noLine
\UnaryInfC{$\Sigma_{\pi''}, \Gamma \Rightarrow B$}
\doubleLine
\UnaryInfC{$\Sigma_{\pi'},\Sigma_{\pi''}, \Gamma \Rightarrow B$}
\BinaryInfC{$\Sigma_{\pi'},\Sigma_{\pi''},\Gamma \Rightarrow A \wedge B$}
\end{prooftree}
where the double lines in the proof mean using the rules $(1w)$ and $(L\wedge)$ to produce $\Sigma_{\pi'},\Sigma_{\pi''}$ from either $\Sigma_{\pi'}$ or $\Sigma_{\pi''}$. Note that every formula in either $\Sigma_{\pi'}$ or $\Sigma_{\pi''}$ is of the form $C \wedge 1$, for a formula $C$. Hence, this form of weakening is possible. Note that 
\[
|\Sigma_{\pi}|= |\Sigma_{\pi'}| + |\Sigma_{\pi''}| \leq O(|\pi'|) + O(|\pi''|) \leq O(|\pi|).
\]
The rightmost inequality follows from the fact that $|\pi'|+|\pi''| \leq |\pi|$, which is a consequence of $\pi$ being tree-like.
For $|\sigma_\pi|$, we have:
\[
|\sigma_\pi| \leq |\sigma_{\pi'}| + |\sigma_{\pi''}| + 2(|\Sigma_{\pi'}| + |\Sigma_{\pi''}|) \cdot (|\Sigma_{\pi'}|+ |\Sigma_{\pi''}|+|\pi'|+|\pi''|)
\]
Since $|\Sigma_{\pi'}| \leq O(|\pi'|)$ and $|\Sigma_{\pi''}| \leq O(|\pi''|)$, we have
\[
2(|\Sigma_{\pi'}| + |\Sigma_{\pi''}|) \cdot (|\Sigma_{\pi'}|+ |\Sigma_{\pi''}|+|\pi'|+|\pi''|) \leq O((|\pi'|+|\pi''|)^2),
\]
which implies
\[
|\sigma_\pi| \leq O(|\pi'|^3) + O(|\pi''|^3) + O((|\pi'|+|\pi''|)^2) \leq O(|\pi|^3).
\] 
The rightmost inequality follows from the fact that $|\pi'|+|\pi''|+1 \leq |\pi|$, which is a consequence of $\pi$ being tree-like.
\end{proof} 

\bibliographystyle{plain}
\bibliography{Ref}

@article{Hyland,
title = {Glueing and orthogonality for models of linear logic},
journal = {Theoretical Computer Science},
volume = {294},
number = {1},
pages = {183-231},
year = {2003},
note = {},
issn = {0304-3975},
doi = {https://doi.org/10.1016/S0304-3975(01)00241-9},
url = {https://www.sciencedirect.com/science/article/pii/S0304397501002419},
author = {Martin Hyland and Andrea Schalk}
}

@article{Raheleh,
title = {Proof complexity of substructural logics},
journal = {Annals of Pure and Applied Logic},
volume = {172},
number = {7},
pages = {102972},
year = {2021},
issn = {0168-0072},
doi = {https://doi.org/10.1016/j.apal.2021.102972},
url = {https://www.sciencedirect.com/science/article/pii/S0168007221000300},
author = {Raheleh Jalali}
}

@article{jerabek,
  title={Substitution {F}rege and extended {F}rege proof systems in non-classical logics},
  author={Je{\v{r}}{\'a}bek, Emil},
  journal={Annals of Pure and Applied Logic},
  volume={159},
  number={1-2},
  pages={1--48},
  year={2009},
  publisher={Elsevier}
}

@article{Cook,
  title={The relative efficiency of propositional proof systems},
  author={Cook, Stephen A and Reckhow, Robert A},
  journal={The journal of symbolic logic},
  volume={44},
  number={1},
  pages={36--50},
  year={1979},
  publisher={Cambridge University Press}
}

@book{Ono,
  title={Residuated lattices: an algebraic glimpse at substructural logics},
  author={Galatos, Nikolaos and Jipsen, Peter and Kowalski, Tomasz and Ono, Hiroakira},
  volume={151},
  year={2007},
  publisher={Elsevier}
}

@article{Avron,
  title={The semantics and proof theory of linear logic},
  author={Avron, Arnon},
  journal={Theoretical Computer Science},
  volume={57},
  number={2-3},
  pages={161--184},
  year={1988},
  publisher={Elsevier}
}

@book{Troelstra,
author = {Troelstra, Anne Sjerp},
address = {Stanford},
booktitle = {Lectures on linear logic},
isbn = {0937073784},
lccn = {lc 91038902},
publisher = {Center for the Study of Language and Information},
series = {CSLI lecture notes; no.29},
title = {Lectures on linear logic },
year = {1992}
}

@article{Girard,
title = {Linear logic},
journal = {Theoretical Computer Science},
volume = {50},
number = {1},
pages = {1-101},
year = {1987},
issn = {0304-3975},
doi = {https://doi.org/10.1016/0304-3975(87)90045-4},
url = {https://www.sciencedirect.com/science/article/pii/0304397587900454},
author = {Jean-Yves Girard}
}

@article{Hrubes,
title = {On lengths of proofs in non-classical logics},
journal = {Annals of Pure and Applied Logic},
volume = {157},
number = {2},
pages = {194-205},
year = {2009},
note = {},
issn = {0168-0072},
doi = {https://doi.org/10.1016/j.apal.2008.09.013},
url = {https://www.sciencedirect.com/science/article/pii/S0168007208001292},
author = {Pavel Hrubeš}
}

@book{Krajicek,
  address = {New York, NY},
  author = {Jan Kraj\'{i}{\v{c}}ek},
  editor = {},
  publisher = {Cambridge University Press},
  title = {Proof Complexity},
  year = {2019},
  series = {Encyclopedia of Mathematics and its Applications},
  volume = {170},
  isbn = {9781108416849}
}

@ARTICLE{Alon,
	author = {Alon, Noga and Boppana, Ravi B.},
	title = {The monotone circuit complexity of boolean functions},
	year = {1987},
	journal = {Combinatorica},
	volume = {7},
	number = {1},
	pages = {1 – 22},
	doi = {10.1007/BF02579196},
	url = {https://www.scopus.com/inward/record.uri?eid=2-s2.0-51249171493&doi=10.1007%2fBF02579196&partnerID=40&md5=d0292f39b6194794bf4ef62649adbc7f},
	type = {Article},
	publication_stage = {Final},
	source = {Scopus}
}

@article{Barr,
  title={*-Autonomous categories and linear logic},
  author={Barr, Michael},
  journal={Mathematical Structures in Computer Science},
  volume={1},
  number={2},
  pages={159--178},
  year={1991},
  publisher={Cambridge University Press}
}

@article{Chu,
  title={*-Autonomous Categories, chapter Constructing *-autonomous categories},
journal={Lecture Notes in
Mathematics},
  author={Chu, Po-Hsiang},
  volume={752, Springer, Berlin, Appendix.},
  year={1979},
  publisher={Springer}
}

@article{Lincoln,
title = {Decision problems for propositional linear logic},
journal = {Annals of Pure and Applied Logic},
volume = {56},
number = {1},
pages = {239-311},
year = {1992},
issn = {0168-0072},
doi = {https://doi.org/10.1016/0168-0072(92)90075-B},
url = {https://www.sciencedirect.com/science/article/pii/016800729290075B},
author = {Patrick Lincoln and John Mitchell and Andre Scedrov and Natarajan Shankar}
}

@article{Horcik,
title = {Disjunction property and complexity of substructural logics},
journal = {Theoretical Computer Science},
volume = {412},
number = {31},
pages = {3992-4006},
year = {2011},
issn = {0304-3975},
doi = {https://doi.org/10.1016/j.tcs.2011.04.004},
url = {https://www.sciencedirect.com/science/article/pii/S0304397511002805},
author = {Rostislav Horčík and Kazushige Terui}
}

@article{AmirProofComp,
  title={Proof Complexity and Feasible Interpolation},
  author={Akbar Tabatabai, Amirhossein},
  journal={arXiv preprint arXiv:2505.03002},
  year={2025}
}

@article{BussPoly,
  title={Polynomial size proofs of the propositional pigeonhole principle},
  author={Buss, Samuel R},
  journal={The Journal of Symbolic Logic},
  volume={52},
  number={4},
  pages={916--927},
  year={1987},
  publisher={Cambridge University Press}
}

@article{Pudlak,
  title={On the complexity of intuitionistic propositional calculus},
  author={Pudl{\'a}k, Pavel},
  journal={Sets and Proofs},
  volume={258},
  pages={197},
  year={1999},
  publisher={Cambridge University Press}
}

@article{PudlakBuss,
  title={On the computational content of intuitionistic propositional proofs},
  author={Buss, Samuel R and Pudl{\'a}k, Pavel},
  journal={Annals of Pure and Applied Logic},
  volume={109},
  number={1-2},
  pages={49--64},
  year={2001},
  publisher={Elsevier}
}

@article{BussMints,
  title={The complexity of the disjunction and existential properties in intuitionistic logic},
  author={Buss, Samuel  R and Mints, Grigori},
  journal={Annals of Pure and Applied Logic},
  volume={99},
  number={1-3},
  pages={93--104},
  year={1999},
  publisher={Elsevier}
}

@inproceedings{Razborov,
  title={Lower bounds on the monotone complexity of some Boolean function},
  author={Razborov, Alexander},
  booktitle={Soviet Math. Dokl.},
  volume={31},
  pages={354--357},
  year={1985}
}

@article{krajivcekFeasible,
  title={Lower bounds to the size of constant-depth propositional proofs},
  author={Kraj{\'\i}{\v{c}}ek, Jan},
  journal={The Journal of Symbolic Logic},
  volume={59},
  number={1},
  pages={73--86},
  year={1994},
  publisher={Cambridge University Press}
}

@article{krajivcekfeasible2,
  title={Interpolation theorems, lower bounds for proof systems, and independence results for bounded arithmetic},
  author={Kraj{\'\i}{\v{c}}ek, Jan},
  journal={The Journal of Symbolic Logic},
  volume={62},
  number={2},
  pages={457--486},
  year={1997},
  publisher={Cambridge University Press}
}

@article{Beyersdorff,
  title={Proof complexity of propositional default logic},
  author={Beyersdorff, Olaf and Meier, Arne and M{\"u}ller, Sebastian and Thomas, Michael and Vollmer, Heribert},
  journal={Archive for Mathematical Logic},
  volume={50},
  number={7},
  pages={727--742},
  year={2011},
  publisher={Springer}
}

@article{Hrubes1,
  title={A lower bound for intuitionistic logic},
  author={Hrube{\v{s}}, Pavel},
  journal={Annals of Pure and Applied Logic},
  volume={146},
  number={1},
  pages={72--90},
  year={2007},
  publisher={Elsevier}
}

@article{Hrubes2,
  title={Lower bounds for modal logics},
  author={Hrube{\v{s}}, Pavel},
  journal={The Journal of Symbolic Logic},
  volume={72},
  number={3},
  pages={941--958},
  year={2007},
  publisher={Cambridge University Press}
}

@article{Haken,
  title={The intractability of resolution},
  author={Haken, Armin},
  journal={Theoretical computer science},
  volume={39},
  pages={297--308},
  year={1985},
  publisher={Elsevier}
}

@article{Null,
  title={Lower bounds on {H}ilbert's Nullstellensatz and propositional proofs},
  author={Beame, Paul and Impagliazzo, Russell and Kraj{\'\i}{\v{c}}ek, Jan and Pitassi, Toniann and Pudl{\'a}k, Pavel},
  journal={Proceedings of the London Mathematical Society},
  volume={3},
  number={1},
  pages={1--26},
  year={1996},
  publisher={Wiley Online Library}
}

@inproceedings{Cutting1,
  title={Lower bounds for cutting planes proofs with small coefficients},
  author={Bonet, Maria and Pitassi, Toniann and Raz, Ran},
  booktitle={Proceedings of the twenty-seventh annual ACM symposium on Theory of computing},
  pages={575--584},
  year={1995}
}

@article{Cutting2,
  title={Lower bounds for resolution and cutting plane proofs and monotone computations},
  author={Pudl{\'a}k, Pavel},
  journal={The Journal of Symbolic Logic},
  volume={62},
  number={3},
  pages={981--998},
  year={1997},
  publisher={Cambridge University Press}
}

@inproceedings{PCalculus1,
  title={Using the {G}roebner basis algorithm to find proofs of unsatisfiability},
  author={Clegg, Matthew and Edmonds, Jeffery and Impagliazzo, Russell},
  booktitle={Proceedings of the twenty-eighth annual ACM symposium on Theory of computing},
  pages={174--183},
  year={1996}
}

@article{PCalculus2,
  title={Lower bounds for the polynomial calculus},
  author={Razborov, Alexander A},
  journal={computational complexity},
  volume={7},
  number={4},
  pages={291--324},
  year={1998},
  publisher={Springer}
}

@article{BDepth1,
  title={The complexity of the pigeonhole principle},
  author={Ajtai, Mikl{\'o}s},
  journal={Combinatorica},
  volume={14},
  number={4},
  pages={417--433},
  year={1994},
  publisher={Springer}
}

@inproceedings{BDepth2,
  title={Exponential lower bounds for the pigeonhole principle},
  author={Beame, Paul and Impagliazzo, Russell and Kraj{\'\i}{\v{c}}ek, Jan and Pitassi, Toniann and Pudl{\'a}k, Pavel and Woods, Alan},
  booktitle={Proceedings of the Twenty-Fourth Annual ACM Symposium on Theory of Computing},
  pages={200--220},
  year={1992}
}

@article{BDepth3,
  title={Exponential lower bounds for the pigeonhole principle},
  author={Pitassi, Toniann and Beame, Paul and Impagliazzo, Russell},
  journal={Computational complexity},
  volume={3},
  number={2},
  pages={97--140},
  year={1993},
  publisher={Springer}
}

@article{BDepth4,
  title={An exponential lower bound to the size of bounded depth Frege proofs of the pigeonhole principle},
  author={Kraj{\'\i}{\v{c}}ek, Jan and Pudl{\'a}k, Pavel and Woods, Alan},
  journal={Random structures \& algorithms},
  volume={7},
  number={1},
  pages={15--39},
  year={1995},
  publisher={Wiley Online Library}
}

@article{Jerabek2,
  title={On the proof complexity of logics of bounded branching},
  author={Je{\v{r}}{\'a}bek, Emil},
  journal={Annals of Pure and Applied Logic},
  volume={174},
  number={1},
  pages={103181},
  year={2023},
  publisher={Elsevier}
}

@article{urquhart,
  title={The complexity of decision procedures in relevance logic {II}},
  author={Urquhart, Alasdair},
  journal={The Journal of Symbolic Logic},
  volume={64},
  number={4},
  pages={1774--1802},
  year={1999},
  publisher={Cambridge University Press}
}

@article{Ferrari,
  title={On the complexity of the disjunction property in intuitionistic and modal logics},
  author={Ferrari, Mauro and Fiorentini, Camillo and Fiorino, Guido},
  journal={ACM Transactions on Computational Logic (TOCL)},
  volume={6},
  number={3},
  pages={519--538},
  year={2005},
  publisher={ACM New York, NY, USA}
}

@article{mints2004,
  title={Intuitionistic Frege systems are polynomially equivalent},
  author={Mints, Grigori and Kojevnikov, Arist Aleksandrovich},
  journal={Zapiski Nauchnyh Seminarov POMI},
  volume={316},
  number={0},
  pages={129--146},
  year={2004},
  publisher={}
}

@article{jevrabek2006,
  title={Frege systems for extensible modal logics},
  author={Je{\v{r}}{\'a}bek, Emil},
  journal={Annals of Pure and Applied Logic},
  volume={142},
  number={1-3},
  pages={366--379},
  year={2006},
  publisher={Elsevier}
}

@article{tabatabai2025,
  title={Universal proof theory: Feasible admissibility in intuitionistic modal logics},
  author={Akbar Tabatabai, Amirhossein and Jalali, Raheleh},
  journal={Annals of Pure and Applied Logic},
  volume={176},
  number={2},
  pages={103526},
  year={2025},
  publisher={Elsevier}
}

@article{lazic2015nonelementary,
  title={Nonelementary complexities for branching {VASS}, {MELL}, and extensions},
  author={Lazi{\'c}, Ranko and Schmitz, Sylvain},
  journal={ACM Transactions on Computational Logic (TOCL)},
  volume={16},
  number={3},
  pages={1--30},
  year={2015},
  publisher={ACM New York, NY, USA}
}

@article{schellinx,
  title={Some syntactical observations on linear logic},
  author={Schellinx, Harold},
  journal={Journal of Logic and Computation},
  volume={1},
  number={4},
  pages={537--559},
  year={1991},
  publisher={Oxford University Press}
}

@article{kanovich2019,
  title={Subexponentials in non-commutative linear logic},
  author={Kanovich, Max and Kuznetsov, Stepan and Nigam, Vivek and Scedrov, Andre},
  journal={Mathematical Structures in Computer Science},
  volume={29},
  number={8},
  pages={1217--1249},
  year={2019},
  publisher={Cambridge University Press}
}

@article{tsinakis2006minimal,
  title={Minimal varieties of involutive residuated lattices},
  author={Tsinakis, Constantine and Wille, Annika M},
  journal={Studia Logica},
  volume={83},
  number={1},
  pages={407--423},
  year={2006},
  publisher={Springer}
}

@book{Quantale,
  title={Quantales and their applications},
  author={Rosenthal, Kimmo I},
  volume={},
  year={1990},
  publisher={Longman Scientific \& Technical}
}

@book{odintsov2008constructive,
  title={Constructive negations and paraconsistency},
  author={Odintsov, Sergei P},
  year={2008},
  publisher={Springer}
}
\end{document}